\newcommand{\Weizsacker}{Weizs\"{a}cker}
\newcommand{\TFW}{\textnormal{TFW}}
\newcommand{\loc}{\textnormal{loc}}
\newcommand{\unif}{\textnormal{unif}}
\newcommand{\id}{\textnormal{ d}}
\newcommand{\wt}{\widetilde}
\newcommand{\R}{\mathbb{R}^{3}}
\newcommand{\Holder}{H\"{o}lder}
\newcommand{\Schrodinger}{Schr\"{o}dinger}
\newcommand{\smcb}{ \textstyle{\frac{1}{|\cdot|}}  }
\newcommand{\smfrac}[2]{ \textstyle{ \frac{#1}{#2} } }
\newcommand{\ou}{\overline{u}}
\newcommand{\ophi}{\overline{\phi}}
\newcommand{\om}{\overline{m}}
\newcommand{\refer}[1]{\textbf{#1}}
\newcommand{\spt}{\textnormal{spt}}
\newcounter{mydef}
\newcounter{defcount}
\newtheorem{theorem}[mydef]{Theorem}
\newtheorem{definition}[defcount]{Definition}
\newtheorem{lemma}[mydef]{Lemma}
\newtheorem{proposition}[mydef]{Proposition}
\newtheorem*{assumption*}{Assumption}
\newtheorem{corollary}[mydef]{Corollary}
\newtheorem*{theorem*}{Theorem}
\newtheorem*{proposition*}{Proposition}
\newtheorem*{lemma*}{Lemma}
\theoremstyle{definition}
\newtheorem{remark}{Remark}
\numberwithin{theorem}{section}
\numberwithin{lemma}{section}
\numberwithin{corollary}{section}
\numberwithin{proposition}{section}
\numberwithin{equation}{section}
\def\Xint#1{\mathchoice
{\XXint\displaystyle\textstyle{#1}}%
{\XXint\textstyle\scriptstyle{#1}}%
{\XXint\scriptstyle\scriptscriptstyle{#1}}%
{\XXint\scriptscriptstyle\scriptscriptstyle{#1}}%
\!\int}
\def\XXint#1#2#3{{\setbox0=\hbox{$#1{#2#3}{\int}$}
\vcenter{\hbox{$#2#3$}}\kern-.5\wd0}}
\def\dashint{\Xint-}
\title{Locality of the Thomas--Fermi--von Weizs\"{a}cker Equations}
\author{F. Q. Nazar}
\author{C. Ortner}
\date{\today}
\thanks{FQN is funded by the MASDOC doctoral training centre, EPSRC grant
  EP/H023364/1. CO was supported by ERC Starting Grant 335120 and the Leverhulme
  Trust through a Philip Leverhulme Prize.}
\address{F. Q. Nazar \\ Mathematics Institute \\ Zeeman Building \\
  University of Warwick \\ Coventry CV4 7AL \\ UK}
\email{F.Q.Nazar@warwick.ac.uk}
\address{C. Ortner\\ Mathematics Institute \\ Zeeman Building \\
  University of Warwick \\ Coventry CV4 7AL \\ UK}
\email{C.Ortner@warwick.ac.uk}
\begin{document}

\maketitle

\begin{abstract}
  We establish a pointwise stability estimate for the Thomas--Fermi--von
  Weiz-s\"{a}cker (TFW) model, which demonstrates that a local perturbation of a
  nuclear arrangement results also in a local response in the electron density
  and electrostatic potential. The proof adapts the arguments for existence and
  uniqueness of solutions to the TFW equations in the thermodynamic limit by
  Catto {\it et al.} (1998). 
  
  To demonstrate the utility of this combined locality and stability result we
  derive several consequences, including an exponential convergence rate for
  the thermodynamic limit, partition of total energy into exponentially
  localised site energies (and consequently, exponential locality of forces),
  and generalised and strengthened results on the charge neutrality of local
  defects.
%
%
%
%
\end{abstract}

\maketitle

\section{Introduction}
%
%
%
Locality properties of electronic structure models are a key premise in certain state of the art numerical algorithms. A well-established example
is ``near-sightedness'', a locality property of the density matrix which gives
rise to linear scaling algorithms for Kohn--Sham type models
\cite{IsmailBeigi_LocalityoftheDensityMatrix, KohnNearsightedness,
  GoedeckerLinearScaling, RazoukDecayPropertiesSpectralProjectors}. A stronger
notion is the locality of the mechanical response, which is a fundamental
premise underpinning the construction of interatomic potentials and of
multi-scale algorithms such as hybrid QM/MM schemes \cite{Csanyi} (here, it is
termed ``strong locality'').  This latter category of locality is less well
studied, the only result in this direction being the locality of
non-selfconsistent tight binding models \cite{Huajie}.


The aim of the present work is to establish the locality properties satisfied by
the Thomas--Fermi--von \Weizsacker\, (TFW) model. Our main technical result to
achieve this is the following pointwise stability estimate for the TFW
equations, which establishes the locality of the electron response to changes in
the nuclear configuration.  Compared with \cite{Huajie} it is noteworthy that
our result takes Coulomb interaction fully into account. Rigorous
  statements, under different conditions, are given in Theorems
  \ref{Theorem - One inf pointwise stability estimate alt} and \ref{Theorem -
    Exponential Est Integral RHS}.

\begin{theorem*}
  For $i = 1,2$ let $m_{i} \in L^\infty(\R)$ represent nuclear charge
  distributions satisfying
  \begin{align*}
    m_i \geq 0 \qquad \text{and} \qquad
    \lim_{R \to \infty} \frac{1}{R} \inf_{x \in \R} \int_{B_{R}(x)} m_{i}(z) \id z = + \infty.
  \end{align*}
  Let the corresponding ground state electron densities and electrostatic
  potentials, denoted by $u_{i}, \phi_i : \R \to \mathbb{R}$, satisfy the TFW
  equations,
  \begin{align*}
    - \Delta u_{i} &+ \frac{5}{3} u_{i}^{7/3} - \phi_{i} u_{i} = 0,  \\
    - \Delta \phi_{i} &= 4\pi (m_{i} - u_{i}^{2}).
  \end{align*}
  Then there exists $C, \gamma > 0$ such that for all $y \in \R$
  \begin{align}
    |(u_{1} - u_{2})(y)| + |(\phi_{1} - \phi_{2})(y)| \leq C \left( \int_{\R} |(m_{1} - m_{2})(x)|^{2} e^{- 2 \gamma |x - y|} \id x \right)^{1/2}. \label{goal 1}
  \end{align}
\end{theorem*}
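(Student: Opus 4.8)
Set $w:=u_1-u_2$, $\psi:=\phi_1-\phi_2$ and $m:=m_1-m_2$. Subtracting the two copies of the TFW equations, and writing $u_1^{7/3}-u_2^{7/3}=a\,w$ with $a:=\tfrac{7}{3}\int_0^1\bigl(tu_1+(1-t)u_2\bigr)^{4/3}\,\id t\ge 0$ and $\phi_1u_1-\phi_2u_2=\tfrac12\rho\,\psi+\tfrac12(\phi_1+\phi_2)w$ with $\rho:=u_1+u_2$, one arrives at the linear system
\begin{align*}
  -\Delta w + V w = \tfrac12\,\rho\,\psi, \qquad -\Delta\psi = 4\pi\bigl(m-\rho\, w\bigr), \qquad V:=\tfrac53 a-\tfrac12(\phi_1+\phi_2).
\end{align*}
The argument then rests on three facts about the ground states, all holding \emph{uniformly} over nuclear distributions satisfying the growth hypothesis, which is exactly where the analysis of Catto et al.\ is invoked and adapted: (i) $0<\underline u\le u_i\le\overline u$ on $\R$, together with $\lVert\phi_i\rVert_{L^\infty}+\lVert\nabla u_i\rVert_{L^\infty}+\lVert\nabla\phi_i\rVert_{L^\infty}\le\overline C$ by interior elliptic regularity (using $m_i\in L^\infty$) --- so $\rho$ is bounded above \emph{and below} by positive constants, $V$ is bounded, and $w,\nabla w,\psi,\nabla\psi$ are bounded (the super-linear growth of the $m_i$ is what forbids the electron density from degenerating, giving the crucial lower bound on $\rho$); (ii) the second variation of the TFW energy at a ground state is coercive on $H^1(\R)$, with a coercivity constant uniform over the admissible class.

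\textbf{Exponentially weighted estimate (the crux).}
Fix $y\in\R$ and a small $\gamma>0$, let $g$ be a smooth positive weight with $g\simeq e^{\gamma|\cdot-y|}$ and $|\nabla g|\le\gamma g$, and set $W:=g^{-1}w$, $\Psi:=g^{-1}\psi$, $M:=g^{-1}m$; by (i), $W,\Psi\in H^1(\R)$ and $M\in L^2(\R)$ whenever the right-hand side of \eqref{goal 1} is finite. Testing the first equation with $g^{-2}w$ and the second with $\tfrac1{8\pi}g^{-2}\psi$ and adding, the antisymmetric coupling cancels exactly --- because $g^{-2}w\psi=W\Psi$ is untouched by the twist --- leaving
\begin{align*}
  \int_{\R}|\nabla W|^2 + \int_{\R} V\,W^2 + \frac1{8\pi}\int_{\R}|\nabla\Psi|^2 \;=\; \frac12\int_{\R} M\,\Psi \;+\; \int_{\R} W^2|\nabla\log g|^2 \;+\; \frac1{8\pi}\int_{\R}\Psi^2|\nabla\log g|^2 .
\end{align*}
Up to the intermediate evaluation of the coefficients, the first three terms form the second variation of (ii); applying its coercivity (and observing that $\Psi$ differs from the electrostatic potential generated by $-4\pi\rho W$ only by the Newtonian potential of $4\pi M$ plus an $O(\gamma)$ weight error, whose cross term is absorbed) one bounds them below by $c\,\lVert W\rVert_{H^1}^2$, while the two curvature terms on the right are $\le\gamma^2\bigl(\lVert W\rVert_{L^2}^2+\tfrac1{8\pi}\lVert\Psi\rVert_{L^2}^2\bigr)$; here $\lVert\Psi\rVert_{L^2}$ and $\lVert\nabla\Psi\rVert_{L^2}$ are in turn controlled from the $\psi$-equation because the screened map $m\mapsto\psi$ is $L^2$-bounded (the long-range part of $(-\Delta)^{-1}$ being cancelled by the induced density $\rho w$), and the stray Coulomb contribution of $M$ is handled by dominating its weighted $L^{6/5}$ norm near $y$ by its weighted $L^2$ norm, which is where $m\in L^\infty$ enters. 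For $\gamma$ below a threshold depending only on $c,\underline u,\overline u,\overline C$, all error terms are absorbed and one obtains
\begin{align*}
  \int_{\R} e^{-2\gamma|x-y|}\bigl(|\nabla w|^2+w^2+|\nabla\psi|^2+\psi^2\bigr)\,\id x \;\lesssim\; \int_{\R} e^{-2\gamma|x-y|}\,|m(x)|^2\,\id x .
\end{align*}
That the screened coercivity survives the exponential twist --- equivalently, that after eliminating $\psi$ the effective operator $-\Delta+V+2\pi\rho(-\Delta)^{-1}\rho$ governing $w$ is positive and, at low frequency, dominated by the repulsive Coulomb term (forcing exponential decay of its resolvent) --- is the principal obstacle, and is precisely where the full Coulomb interaction is used, in contrast with short-range models such as \cite{Huajie}.

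\textbf{From the weighted estimate to the pointwise bound.}
On $B_1(y)$ the weight is comparable to $1$, so the last display controls $\lVert w\rVert_{H^1(B_1(y))}+\lVert\psi\rVert_{H^1(B_1(y))}$ by the right-hand side of \eqref{goal 1}. Feeding this and the uniform bounds (i) into the two equations, $\tfrac12\rho\psi\in L^2(B_1(y))$ yields $w\in H^2(B_{3/4}(y))$ and then $4\pi(m-\rho w)\in L^2(B_{3/4}(y))$ yields $\psi\in H^2(B_{3/4}(y))$; since $H^2\hookrightarrow C^0$ in three dimensions, $|w(y)|+|\psi(y)|\lesssim\bigl(\int_{\R}|m|^2 e^{-2\gamma|x-y|}\,\id x\bigr)^{1/2}$ with constants independent of $y,m_1,m_2$, which is \eqref{goal 1}.
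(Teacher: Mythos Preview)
Your overall strategy --- weighted energy estimates obtained by testing the two equations against $w\xi^2$ and $\tfrac{1}{8\pi}\psi\xi^2$ with $\xi\sim e^{-\gamma|\cdot-y|}$, exploiting the exact cancellation of the coupling terms, and then upgrading to a pointwise bound via local elliptic regularity and Sobolev embedding --- is precisely the paper's route, and your displayed identity is essentially the paper's estimate (6.26). There is, however, a genuine gap in how you control $\lVert\Psi\rVert_{L^2}^2=\int\psi^2\xi^2$.

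You write that ``$\lVert\Psi\rVert_{L^2}$ \ldots\ is controlled from the $\psi$-equation because the screened map $m\mapsto\psi$ is $L^2$-bounded.'' But $-\Delta\psi=4\pi(m-\rho w)$ alone gives no $L^2$ bound on $\psi$ in three dimensions, and the screening you invoke is exactly the phenomenon the whole estimate is meant to establish --- appealing to it here is circular. Without $\lVert\Psi\rVert_{L^2}$ under control you can absorb neither the curvature term $\tfrac{\gamma^2}{8\pi}\lVert\Psi\rVert_{L^2}^2$ nor the source term $\tfrac12\int M\Psi$. The paper closes this with a different move: it tests the \emph{$w$-equation} (not the $\psi$-equation) against $\psi\xi^2$. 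Since the right-hand side of the $w$-equation is $\tfrac12\rho\,\psi$, this produces $\int\tfrac{\rho}{2}\psi^2\xi^2$ on the left, and the uniform lower bound $\rho\ge\inf u_1\ge c_{M,\omega}>0$ turns it into $c\int\psi^2\xi^2$; after one integration by parts on $\int\Delta w\,\psi\xi^2$ the remaining terms are controlled by quantities already present in your energy identity. This is \Cref{Lemma - Exp est lemma 2} in the paper and is the one substantive idea your proposal is missing.

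A secondary point: your appeal to ``coercivity of the second variation'' to obtain $\int|\nabla W|^2+\int VW^2\ge c\lVert W\rVert_{H^1}^2$ gives the right conclusion but not for the stated reason, since your $V$ is not the Hessian coefficient at either ground state and you do not actually use the Coulomb part of the Hessian. The paper's mechanism is that $L_i=-\Delta+\tfrac53 u_i^{4/3}-\phi_i\ge 0$ by a ground-state (Allegretto--Piepenbrink type) argument, \Cref{Lemma - Eigenvalue Argument}, since $L_iu_i=0$ with $u_i>0$; hence $L=\tfrac12 L_1+\tfrac12 L_2\ge 0$. Your $V$ then exceeds the potential part of $L$ thanks to the elementary inequality $(u_1^{7/3}-u_2^{7/3})(u_1-u_2)\ge(u_1^{4/3}+u_2^{4/3})\,w^2$, leaving a strictly positive remainder $\nu\int w^2\xi^2$ with $\nu\ge\tfrac12 c_{M,\omega}^{4/3}$; the $H^1$-coercivity then follows by writing $-\Delta+V\ge\varepsilon(-\Delta)-\varepsilon\lVert V\rVert_\infty+(1-\varepsilon)L+\nu$.
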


In the remainder of the article we explore some of the consequences of this
locality result: In \Cref{Proposition - Infinite Finite Ground state comparison}
we obtain new estimates on finite-domain approximations which yield
  exponential decay of surface energies as well as an exponential convergence
  rate for the thermodynamic limit. In \Cref{Corollary -
  Exponential Estimates Consequences} we show that (\ref{goal 1}) gives rise to
rigorous results that match, and substantially generalise, the Thomas--Fermi
theory of impurity screening in metals \cite{Ashcroft,Kittel}. In \Cref{Theorem
  - Neutrality Estimate} we strengthen existing results on the neutrality of the
TFW model \cite{C/Ehr}. In all these results, general (condensed) nuclear arrangements are treated.

A striking application of \eqref{goal 1} is that it allows us to decompose
energy into local contributions from which we obtain local site energy
potentials: Given a countable collection of nuclei
$Y =(Y_j)_{j \in \mathbb{N}} \subset \R$ we construct an energy density
$\mathcal{E}(Y; x)$ which allows us to define the TFW energy
$\int_\Omega \mathcal{E}(Y; x)\,{\rm d}x$ of an arbitrary volume
$\Omega \subset \R$ in a meaningful way. This then motivates us to define site
energies
\begin{displaymath}
  E_j(Y) := \int_{\R} \varphi_j(x) \mathcal{E}(Y; x)\, {\rm d}x,
\end{displaymath}
where $(\varphi_j)_{j \in \mathbb{N}}$ is a smooth partition of unity of $\R$,
which can be constructed in such a way that $E_j$ are permutation and
isometry invariant and most crucially, $E_j$ are {\em local} in the sense
that
\begin{align}
  \left|  \frac{\partial E_{j}(Y)}{\partial Y_{k}} \right| \leq C e^{-\gamma |Y_{j} - Y_{k}|}, \label{goal 2}
\end{align}
for some $C, \gamma > 0$. The rigorous statement of this result is given in
\Cref{Theorem - Forcing Est}. An analogous result has recently been proven for a
tight binding model in~\cite{Huajie}.

This result not only gives a strong justification for the construction of
classical short-ranged interatomic potentials in metals, but in fact it allows
us to treat the TFW mechanical response as if it emanated from such a classical
potential. For example, (i) the analysis of the Cauchy--Born continuum limit
\cite{CauchyBorn} applies directly to the TFW model; and (ii) we can generalise
in \cite{Paper2} the analysis of variational problems for the mechanical
response to defects in an infinite crystal \cite{EOS}.

The remainder of this article is organised as follows: In Section \ref{Section 2
  - The TFW Model} we recall the definition of the TFW model and summarise the
relevant existing results. In Section \ref{Section 3 - Main Results} we state
the main technical results, including the rigorous statement of the stability
result \eqref{goal 1}. In Section \ref{Section 4 - Applications} we present
applications. Concluding remarks are made in \Cref{Section 5 - Conclusion and
  Outlook}, followed by the detailed proofs of the results in \Cref{Section 6 -
  Proofs}.



\begin{remark}
  We conclude the introduction with a remark about the analytical context of
  this work. The TFW equations for the electron density and potential is a
  coupled \Schrodinger --Poisson system. Other systems of this class can be
  found in semiconductor physics
  \cite{Markowich_SemiconductorEquations,Wang/Zhou,Zhao/Zhao}. We also note that
  Thomas--Combes estimates give conditions under which eigenfunctions of a
  \Schrodinger \, operator decay exponentially
  \cite{Agmon_LecturesofExponentialDecay,Combes/Thomas}. While the results
  obtained for these systems are similar, the corresponding equations have
  different structure, hence the analytical techniques used to study them differ
  considerably.
  
  The closest existing result to (\ref{goal 1}) we have found is \cite[Theorem
  4.6]{BlancScreening}, which shows the exponential decay of the electron
  density away from the boundary of a crystal. Both (\ref{goal 1}) and
  \cite[Theorem 4.6]{BlancScreening} utilise the uniqueness of the TFW equations
  to prove stability estimates. In \Cref{sec:tdl-estimates}, we use
  \Cref{Proposition - Infinite Finite Ground state comparison} to generalise
  \cite[Theorem 4.6]{BlancScreening}.
\end{remark}

\subsection*{Acknowledgements}
We thank Chen Huang for his valuable comments on impurity screening in the
Thomas--Fermi model, and Eric Cances and Virginie Ehrlacher for helpful
  discussions about different analytical techniques for the TFW model.

\section{The TFW Model}
\label{Section 2 - The TFW Model}
For $p \in [1,\infty]$ we define the function spaces
\begin{align*}
L^{p}_{\loc}(\R) &:= \{ \, f: \R \to \mathbb{R} \, | \, \forall \, K \subset \R \text{ compact}, f \in L^{p}(K) \, \} \quad \text{ and} \\
L^{p}_{\unif}(\R) &:= \{ \, f \in L^{p}_{\loc}(\R) \, | \, \sup_{x \in \R} \| f \|_{L^{p}(B_{1}(x))} < \infty \, \}.
\end{align*}
For $k \in \mathbb{N}$, $H^{k}_{\loc}(\R), H^{k}_{\unif}(\R)$ are defined analogously. For a multi-index $\alpha = (\alpha_{1},\alpha_{2},\alpha_{3})$, we define the partial derivative $\partial^{\alpha} = \partial_{1}^{\alpha_{1}} \partial_{2}^{\alpha_{2}} \partial_{3}^{\alpha_{3}}$. Throughout this paper, $\alpha, \beta$ denote three-dimensional multi-indices.

The Coulomb interaction, for $f, g \in L^{6/5}(\R)$, is given by
\begin{align}
D(f,g) = \int_{\R} \int_{\R} \frac{f(x)g(y)}{|x-y|} \id x \id y = \int_{\R} \left( f * \smcb \right)(y) g(y) \id y. \label{eq: Coulomb energy def}
\end{align}
This is finite due to the Hardy--Littlewood--Sobolev estimate \cite{Aubin_HLS}
\begin{align}
|D(f,g)| \leq C \| f \|_{L^{6/5}(\R)} \| g \|_{L^{6/5}(\R)}. \label{eq: HLS embedding eq}
\end{align}

Let $m \in L^{6/5}(\R), m \geq 0$ denote the charge density of a finite nuclear cluster, 
then the corresponding TFW energy functional is defined, for $v \in H^{1}(\R)$,
by
\begin{align}
E^{\TFW}(v,m) = C_{\textnormal{W}} \int_{\R} |\nabla v|^{2} + C_{\textnormal{TF}} \int_{\R} v^{10/3} + \frac{1}{2} D( m - v^{2}, m - v^{2}). \label{eq: TFW energy def, with constants}
\end{align}
The function $v$ corresponds to the positive square root of the electron
density. The first two terms of $E^{\TFW}(v,m)$ model the kinetic energy of the
electrons while the third term models the Coulomb energy. We remark that this definition of the
Coulomb energy is only valid for smeared nuclei. We can rescale the energy to
ensure $C_{\textnormal{W}} = C_{\textnormal{TF}} = 1$.

To construct the electronic ground state for an infinite arrangement of
  nuclei (e.g., crystals), we restrict admissible nuclear charge densities to
  $m \in L^{1}_{\unif}(\R), m \geq 0$, satisfying
\begin{align*}
\textnormal{(H1) }& \sup_{x \in \R} \int_{B_{1}(x)} m(z) \id z < \infty, \\
\textnormal{(H2) }& \lim_{R \to \infty} \inf_{x \in \R} \frac{1}{R} \int_{B_{R}(x)} m(z) \id z = \infty.
\end{align*} 
The property (H1) guarantees that no clustering of infinitely many nuclei occurs
at any point in space whereas (H2) ensures that there are no large regions that
are devoid of nuclei.

Let $R_{n} \uparrow \infty$ and define the truncated nuclear distribution
$m_{R_{n}} = m \cdot \chi_{B_{R_{n}}(0)}$, then the minimisation problem
\begin{align*}
I^{\TFW}(m_{R_{n}}) = \inf \left\{ \, E^{\TFW}(v,m_{R_{n}})  \, \bigg| \, v \in H^{1}(\R), v \geq 0, \int_{\R} v^{2} = \int_{\R} m_{R_{n}} \, \right \}
\end{align*}
possesses a unique minimiser $u_{R_{n}}$. The charge constraint ensures that the system is neutral. Further, $u_{R_{n}}$ solves the corresponding Euler--Lagrange equation, which can be expressed as the coupled system
\begin{subequations}
\label{eq: u phi Rn pair}
\begin{align}
- \Delta u_{R_{n}} &+ \frac{5}{3} u^{7/3}_{R_{n}} - \phi_{R_{n}} u_{R_{n}} = 0, \label{eq: u phi eqs, uRn} \\
- \Delta \phi_{R_{n}} &= 4\pi (m_{R_{n}} - u^{2}_{R_{n}}). \label{eq: u phi eqs, phiRn}
\end{align}
\end{subequations}
By the proof of \cite[Corollary 2.7, Theorem 6.10]{C/LB/L}, it follows that
\begin{align}
\| u_{R_{n}} \|_{H^{1}_{\unif}(\R)} + \| \phi_{R_{n}} \|_{L^{2}_{\unif}(\R)} \leq C, \label{eq: uRn phiRn uniform ests}
\end{align}
where C is independent of $R_{n}$. Consequently, (\ref{eq: uRn phiRn uniform ests}) implies that along a subsequence $(u_{R_{n}},\phi_{R_{n}})$ converge to
$(u,\phi)$. Passing to the limit in (\ref{eq: u phi Rn pair}) yields the
following result.
%

\begin{theorem}[{\cite[Theorem 6.10]{C/LB/L}}]
\label{Theorem - C/LB/L Main Result}
Let $m \in L^{1}_{\unif}(\R), m \geq 0$ satisfy \textnormal{(H1)--(H2)}, then there exists a unique solution $(u,\phi) \in L^{\infty}(\R) \times L^{1}_{\unif}(\R)$, up to the sign of $u$, of
\begin{subequations}
\label{eq: u phi eq pair}
\begin{align}
- \Delta u &+ \frac{5}{3} u^{7/3} - \phi u = 0, \label{eq: u inf eq} \\
- \Delta \phi &= 4\pi (m - u^{2}), \label{eq: phi inf eq}
\end{align}
\end{subequations}
in the distributional sense. In addition, $\inf u > 0$.
\end{theorem}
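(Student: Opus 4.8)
The plan is to follow \cite{C/LB/L}: obtain existence by passing to the thermodynamic limit along the family $(u_{R_n},\phi_{R_n})$ constructed above, and then establish positivity and uniqueness of the limiting pair directly from the coupled system. For existence, the uniform bounds \eqref{eq: uRn phiRn uniform ests} allow us to extract a subsequence with $u_{R_n}\rightharpoonup u$ in $H^1_{\loc}(\R)$ and $\phi_{R_n}\rightharpoonup\phi$ in $L^2_{\loc}(\R)$; by Rellich--Kondrachov $u_{R_n}\to u$ in $L^2_{\loc}(\R)$ and, along a further subsequence, a.e. The $H^1_{\unif}$ bound together with interior elliptic regularity (De Giorgi--Nash--Moser for \eqref{eq: u phi eqs, uRn}) gives a uniform $L^\infty$ bound, so that $u\in L^\infty(\R)$, $u_{R_n}^{7/3}\to u^{7/3}$ in $L^1_{\loc}(\R)$ by dominated convergence, and $\phi_{R_n}u_{R_n}\to\phi u$ in $L^1_{\loc}(\R)$ as a product of a weakly $L^2_{\loc}$-convergent and a strongly $L^2_{\loc}$-convergent sequence. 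Since $m_{R_n}\to m$ in $L^1_{\loc}(\R)$, one passes to the limit in the distributional form of \eqref{eq: u phi Rn pair} to recover \eqref{eq: u phi eq pair}, with $\phi\in L^1_{\unif}(\R)$ inherited from \eqref{eq: uRn phiRn uniform ests}.

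Next, positivity. Bootstrapping \eqref{eq: phi inf eq} shows $\phi$ is locally bounded and continuous, so \eqref{eq: u inf eq} can be read as $-\Delta u=b\,u$ with $b=\phi-\tfrac53|u|^{4/3}\in L^\infty_{\loc}(\R)$; the strong maximum principle then forces $u$ to have a fixed sign (it cannot vanish identically, by the neutrality constraint), say $u>0$. The quantitative statement $\inf u>0$ is the first point requiring real work: one proves a lower bound $\inf_{B_r(x)}u_{R_n}\ge c(r)>0$ uniform in $x$ and $n$ by combining the local Harnack inequality for \eqref{eq: u phi eqs, uRn} with a spatially uniform lower bound on $\phi_{R_n}$. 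The latter is exactly where hypothesis (H2) enters --- the super-linear growth of the nuclear charge in every ball prevents $\phi_{R_n}$ from becoming too negative anywhere --- and the bound passes to the limit.

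Finally, uniqueness. Let $(u_1,\phi_1)$ and $(u_2,\phi_2)$ be two solutions in $L^\infty(\R)\times L^1_{\unif}(\R)$, normalised as above so that $u_i>0$ with $\inf u_i>0$, and put $v=u_1-u_2$, $\psi=\phi_1-\phi_2$. Testing \eqref{eq: u inf eq} for $u_1$ against $u_1-u_2^2/u_1$ and for $u_2$ against $u_2-u_1^2/u_2$ and adding (dividing by $u_i$ being legitimate since $\inf u_i>0$), the Dirichlet contributions combine into a pointwise quantity bounded below by $\big(\tfrac{u_2}{u_1}|\nabla u_1|-|\nabla u_2|\big)^2+\big(|\nabla u_1|-\tfrac{u_1}{u_2}|\nabla u_2|\big)^2\ge0$, the nonlinear terms into $\tfrac53(u_1^{4/3}-u_2^{4/3})(u_1^2-u_2^2)\ge0$, and the Coulomb terms into $\int_{\R}\psi\,(u_1^2-u_2^2)$, which by \eqref{eq: phi inf eq} for $\psi$ equals $-\tfrac1{4\pi}\int_{\R}|\nabla\psi|^2\le0$. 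Hence every term vanishes; the vanishing of $(u_1^{4/3}-u_2^{4/3})(u_1^2-u_2^2)$ forces $u_1=u_2$, and then \eqref{eq: u inf eq}, divided by $u>0$, gives $\phi_1=\phi_2$. The main obstacle is that, since the solutions are only $L^\infty$ and $L^1_{\unif}$, none of these integrals converges absolutely: the whole computation must be run against a cut-off $\chi_R$ and the error terms generated by $\nabla\chi_R$ shown to vanish as $R\to\infty$. This requires first establishing enough decay of the difference of two solutions --- square-integrability of $\nabla v$ and $\nabla\psi$ over $\R$ and suitable integrability of $u_1^2-u_2^2$ --- via a Caccioppoli-type energy estimate that again exploits $\inf u_i>0$ and the monotonicity of the nonlinearity. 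This decay estimate for the difference of solutions is the technical heart of the proof, and is precisely the ingredient the present paper revisits.
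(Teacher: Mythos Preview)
This theorem is cited from \cite{C/LB/L}; the paper provides only the brief existence sketch preceding the statement, and later adapts the C/LB/L uniqueness argument (Lemmas~\ref{Lemma - Eigenvalue Argument}, \ref{Lemma - Exp est lemma 1}--\ref{Lemma - Exp est lemma 2}) for its stability estimates. Your existence outline matches that sketch.

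Your positivity argument misidentifies the role of (H2). The uniform lower bound $\phi_{R_n}\ge -C_S$ is Solovej's estimate \eqref{eq: phi Rn lower bound} and holds for any bounded nuclear configuration---(H2) is not used there. More seriously, a lower bound on $\phi_{R_n}$ combined with Harnack does not yield $\inf u>0$: Harnack only gives $\sup_{B_r}u\le C\inf_{B_r}u$ with a uniform constant, and you still lack a uniform lower bound on the supremum. The actual mechanism (see the proof of Proposition~\ref{Proposition - Uniform inf u estimate}, which reproduces the C/LB/L argument) is a contradiction: if $u_n(x_n)\to 0$, Harnack forces $u_n(\cdot+x_n)\to 0$ locally uniformly, so passing to the limit in \eqref{eq: phi inf eq} gives $-\Delta\tilde\phi=4\pi\tilde m$ with $\tilde\phi\in L^\infty$; testing against a rescaled cutoff then yields $\int_{B_R}\tilde m\le CR$, incompatible with (H2). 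Thus (H2) enters to rule out degeneration of $u$, not to bound $\phi$ from below. Your appeal to ``the neutrality constraint'' to exclude $u\equiv 0$ is also off: that constraint exists only for the truncated problems, not for the limit.

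Your uniqueness strategy---testing the $u_i$-equation against $u_i-u_j^2/u_i$ to produce a sum of three non-negative terms---is a legitimate alternative to the route in \cite{C/LB/L,Blanc_Uniqueness} and in this paper. The paper instead tests the $w=u_1-u_2$ equation against $w\xi^2$, uses the pointwise monotonicity $(u_1^{7/3}-u_2^{7/3})w\ge (u_1^{4/3}+u_2^{4/3})w^2$, and invokes the non-negativity of the ground-state operators $L_i=-\Delta+\tfrac53u_i^{4/3}-\phi_i$ (Lemma~\ref{Lemma - Eigenvalue Argument}). The operator-positivity route has one structural advantage: it needs strict positivity of only \emph{one} of the $u_i$, which is what enables the one-sided Theorem~\ref{Theorem - One inf pointwise stability estimate alt}; your approach needs $\inf u_i>0$ for both to divide. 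Both routes face the same difficulty you correctly flag---justifying integration by parts at infinity---and both resolve it by localising with a cutoff and controlling the commutator terms, which is precisely the content of Lemmas~\ref{Lemma - Exp est lemma 1}--\ref{Lemma - Exp est lemma 2}.
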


\begin{definition}
For any nuclear configuration $m$ satisfying \textnormal{(H1)--(H2)}, we refer to $(u,\phi)$ solving \textnormal{(\ref{eq: u phi eq pair})} as the ground state corresponding to $m$. \qed
\end{definition}
\begin{remark}
A concise proof of uniqueness of the TFW equations is given in \cite{Blanc_Uniqueness} under the assumption that $m$ is smooth and hence $u,\phi \in W^{1,\infty}(\R)$, which simplifies the earlier proof given in \cite{C/LB/L}. \qed
\end{remark}
In the next section, we discuss results that can be obtained by generalising the proof of \Cref{Theorem - C/LB/L Main Result}.

\section{Main Results}
\label{Section 3 - Main Results}

\subsection{Uniform regularity estimates}
\label{Subsection - Uniform Regularity Estimates} 
In the proof of our main results in the next section we employ regularity
estimates that refine those of \cite{C/LB/L}, and may be of independent
interest.

Other than \Cref{Proposition - General Regularity Est}, our estimates rely on uniform variants of (H1)--(H2) and it turns out that
(H2) may then be simplified without loss of generality; see \Cref{Lemma - H2
  equiv lemma} for more details. Given $M, \omega_{0}, \omega_{1} > 0$, let $\omega = (\omega_{0},\omega_{1})$ and define the class of
nuclear configurations
\begin{align}
\mathcal{M}_{L^{2}}(M,\omega) = \bigg\{ \, m \in L^{2}_{\unif}(\R) \, \bigg| \,\,\, &\| m \|_{L^{2}_{\unif}(\R)} \leq M, \nonumber \\ \, & \forall \, R > 0 \,\, \inf_{x \in \R} \int_{B_{R}(x)} m(z) \id z \geq \omega_{0} R^{3} - \omega_{1} \, \bigg\}. \label{eq: M L2 def}
\end{align}

As each nuclear distribution $m \in \mathcal{M}_{L^{2}}(M,\omega)$ satisfies (H1)--(H2), \Cref{Theorem - C/LB/L Main Result} guarantees the
existence of corresponding ground states $(u,\phi)$. We adapt the proof of
existence of \Cref{Theorem - C/LB/L Main Result} to show that the uniformity in upper and
lower bounds on $m \in \mathcal{M}_{L^2}(M, \omega)$ yields regularity estimates
and lower bounds on these ground states which are also uniform.



\begin{proposition}
\label{Proposition - General Regularity Est}
For any nuclear distribution $m:\R \to \mathbb{R}_{\geq 0}$, satisfying
\begin{align*}
\| m \|_{L^{2}_{\unif}(\R)} \leq M,
\end{align*}
there exists $(u,\phi)$ solving \textnormal{(\ref{eq: u phi eq pair})} and satisfying $u \geq 0$ and
\begin{align}
&\| u \|_{H^{4}_{\unif}(\R)} \, \leq C(1 + M^{15/4}), \label{eq: gen u H2 unif est} \\
&\| \phi \|_{H^{2}_{\unif}(\R)} \, \leq C(1 + M^{3/2} ). \label{eq: gen phi H2 unif est}
\end{align}
\end{proposition}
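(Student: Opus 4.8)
The plan is to reduce to the setting of \Cref{Theorem - C/LB/L Main Result} and then run an elliptic bootstrap while keeping every constant explicit in $M$. For the reduction, note that $\|m\|_{L^2_{\unif}(\R)}\le M$ forces $\|m\|_{L^1_{\unif}(\R)}\le CM$, so for $\delta\in(0,1]$ the distribution $m_\delta:=m+\delta$ satisfies \textnormal{(H1)--(H2)} and $\|m_\delta\|_{L^2_{\unif}(\R)}\le M+C$, hence admits a ground state $(u_\delta,\phi_\delta)$ with $u_\delta>0$ by \Cref{Theorem - C/LB/L Main Result}. Once \eqref{eq: gen u H2 unif est}--\eqref{eq: gen phi H2 unif est} are proven for $(u_\delta,\phi_\delta)$ uniformly in $\delta$, Rellich on balls $B_R(0)$ together with lower semicontinuity of the seminorms $f\mapsto\sup_x\|f\|_{L^2(B_1(x))}$ gives $(u_\delta,\phi_\delta)\to(u,\phi)$ in $H^3_{\loc}(\R)\times H^1_{\loc}(\R)$ along a subsequence $\delta\downarrow0$; since $u_\delta,\phi_\delta,u_\delta^{7/3},u_\delta^2$ converge in $L^1_{\loc}(\R)$, the limit solves \eqref{eq: u phi eq pair} distributionally, satisfies $u\ge0$, and inherits the bounds. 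So it suffices to assume $m$ satisfies \textnormal{(H1)--(H2)} and derive \eqref{eq: gen u H2 unif est}--\eqref{eq: gen phi H2 unif est} with $C$ depending only on $M$.

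Adapting the thermodynamic-limit construction of \cite{C/LB/L} with explicit bookkeeping — controlling the energy per unit volume by a trial density built from a mollification and rescaling of $m$, and using the Coulomb-localisation estimate to pass from the resulting global energy bound to bounds on unit cells — yields, uniformly in $x\in\R$, control of $\int_{B_1(x)}(|\nabla u|^2+u^{10/3})$ and of $\|\phi\|_{L^2(B_1(x))}$ by an explicit power of $1+M$; equivalently $\|u\|_{H^1_{\unif}(\R)}+\|u\|_{L^{10/3}_{\unif}(\R)}+\|\phi\|_{L^2_{\unif}(\R)}\le C(1+M^q)$. Then $H^1_{\unif}(\R)\hookrightarrow L^6_{\unif}(\R)$ gives $u^2\in L^3_{\unif}(\R)\subset L^2_{\unif}(\R)$, so $-\Delta\phi=4\pi(m-u^2)\in L^2_{\unif}(\R)$; interior $H^2$-regularity for $-\Delta$ on a covering of $\R$ by unit balls promotes this to $\phi\in H^2_{\unif}(\R)$, and $H^2_{\unif}(\R)\hookrightarrow L^\infty(\R)$ yields $\phi\in L^\infty(\R)$ with a polynomial-in-$M$ bound.

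To bound $\|u\|_{L^\infty(\R)}$ — which is finite by \Cref{Theorem - C/LB/L Main Result}, or by a De Giorgi--Moser local-boundedness estimate for the subsolution inequality $-\Delta u\le\phi u$ using $\phi\in L^2_{\unif}(\R)$ — pick $x_n$ with $u(x_n)\to\|u\|_{L^\infty(\R)}$; the uniform bounds make the translates $u(\cdot+x_n)$ precompact in $C^1_{\loc}(\R)$, and a limit $(u_\infty,\phi_\infty)$ solves the first equation of \eqref{eq: u phi eq pair} with $u_\infty$ attaining a global maximum at the origin and $\phi_\infty(0)\le\|\phi_+\|_{L^\infty(\R)}$; evaluating there and using $-\Delta u_\infty(0)\ge0$ gives $\|u\|_{L^\infty(\R)}\le\bigl(\tfrac{3}{5}\|\phi_+\|_{L^\infty(\R)}\bigr)^{3/4}$. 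With $u,\phi\in L^\infty(\R)$ the two equations bootstrap each other on unit balls: $-\Delta\phi=4\pi(m-u^2)\in L^2_{\unif}(\R)$ keeps $\phi\in H^2_{\unif}(\R)$, while $-\Delta u=\phi u-\tfrac{5}{3}u^{7/3}$, using in turn $\phi,u\in L^\infty(\R)$, then $\nabla\phi\in H^1_{\unif}(\R)\hookrightarrow L^6_{\unif}(\R)$ and $\nabla u\in L^2_{\unif}(\R)$, then the second derivatives, places $\Delta u$ successively in $L^2_{\unif}(\R)$, $H^1_{\unif}(\R)$, $H^2_{\unif}(\R)$, i.e.\ $u\in H^4_{\unif}(\R)$ — and the chain terminates there because the regularity of $\phi$ is capped by that of $m\in L^2_{\unif}(\R)$. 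Propagating the $M$-powers through this finite chain of product estimates gives the stated exponents $15/4$ and $3/2$.

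The genuinely non-routine step is the base estimate of the second paragraph: \cite{C/LB/L} provides these bounds only with qualitative constants, so one must re-examine the Coulomb localisation and the trial-function energy and make the dependence on $\|m\|_{L^2_{\unif}(\R)}$ explicit. A minor secondary point is that $\|u\|_{L^\infty(\R)}$ need not be attained on $\R$, which is why the translation-and-limit argument (or the Moser estimate) replaces a naive maximum principle.
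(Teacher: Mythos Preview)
Your overall structure is reasonable, and the reduction via $m_\delta = m+\delta$ together with the final bootstrap are fine. But the step you yourself flag as ``genuinely non-routine'' --- the base estimate $\|u\|_{H^1_{\unif}} + \|\phi\|_{L^2_{\unif}} \le C(1+M^q)$ with $C$ depending only on $M$ --- is in fact the entire content of the proposition, and you have not carried it out. The energy-localisation and trial-density arguments of \cite{C/LB/L} produce constants that a priori depend on both the (H1) upper bound and the (H2) lower bound; since in your $\delta$-regularisation the (H2) constant is of order $\delta$ and degenerates as $\delta\to 0$, it is not clear that ``explicit bookkeeping'' suffices to remove the $\omega$-dependence. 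Everything downstream in your sketch (the $H^2_{\unif}$ bound on $\phi$, the translation-and-limit argument for $\|u\|_{L^\infty}$) rests on this unestablished input.

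The paper takes a different and more concrete route that bypasses energy estimates entirely. It truncates to $m_{R_n} = m\chi_{B_{R_n}}$, works with the finite minimiser $(u_{R_n},\phi_{R_n})$ (for which $u_{R_n}$ and $\phi_{R_n}+\theta_{R_n}$ are continuous and decay at infinity, so maxima are attained --- avoiding your ``minor secondary point''), and obtains $L^\infty$ control \emph{first}, using only $\|m\|_{L^2_{\unif}}\le M$. The Solovej universal estimate $\tfrac{10}{9}u_{R_n}^{4/3} \le \phi_{R_n} + C_S$ immediately gives $\phi_{R_n} \ge -C_S$. For the upper bound on $\phi_{R_n}^+$, one tests the nonnegativity of $L_{R_n} = -\Delta + \tfrac53 u_{R_n}^{4/3} - \phi_{R_n}$ against a fixed bump $\varphi$, convolves, and applies Jensen to obtain the subsolution inequality $-\Delta(\phi_{R_n}\ast\varphi^2) + (\phi_{R_n}\ast\varphi^2 - c_\varphi)_+^{3/2} \le C_0 M$; comparison with the constant supersolution $(C_0 M)^{2/3}$ gives $\phi_{R_n}\ast\varphi^2 \le C(1+M^{2/3})$, and a harmonic-plus-Dirichlet decomposition on a small ball upgrades this averaged bound to $\phi_{R_n}^+ \le C(1+M)$ pointwise. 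With $\|u_{R_n}\|_{L^\infty}^{4/3} + \|\phi_{R_n}\|_{L^\infty} \le C(1+M)$ in hand --- and no (H2) used anywhere --- the bootstrap to the stated $H^4_{\unif}\times H^2_{\unif}$ exponents is routine, and passing $R_n\to\infty$ concludes.
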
 

\begin{proposition}
\label{Proposition - Uniform inf u estimate}
There exists $c_{M,\omega} > 0$ such that for all $m \in
\mathcal{M}_{L^{2}}(M,\omega)$ the corresponding ground state $(u,\phi)$ is unique and the electron density
$u$ satisfies
\begin{align}
\inf_{x \in \R} u(x) &\geq c_{M,\omega} > 0. \label{eq: u inf est}
\end{align}

\end{proposition}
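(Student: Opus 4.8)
The plan is to combine a uniform local Harnack-type inequality for the positive subsolution $u$ with a contradiction/compactness argument exploiting the uniform upper bound $\|m\|_{L^2_{\rm unif}} \le M$ and the coercivity condition encoded in $\mathcal M_{L^2}(M,\omega)$. First I would record that by \Cref{Proposition - General Regularity Est} we already have $\|u\|_{H^4_{\rm unif}(\R)} \le C(1+M^{15/4})$ and $\|\phi\|_{H^2_{\rm unif}(\R)} \le C(1+M^{3/2})$, so in particular $u$ and $\phi$ are bounded in $L^\infty(\R)$ by a constant depending only on $M$ (Sobolev embedding in $\R$, using that $H^2_{\rm unif}\hookrightarrow L^\infty$ and $H^4_{\rm unif}\hookrightarrow C^{1,\alpha}$). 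Then I would observe that $u\ge 0$ solves $-\Delta u + \tfrac53 u^{7/3} = \phi u$, which we can rewrite as $-\Delta u + V u = 0$ with potential $V := \tfrac53 u^{4/3} - \phi \in L^\infty(\R)$, $\|V\|_{L^\infty} \le C(M)$. Since $u\ge 0$ is a nonnegative (weak) solution of a uniformly elliptic equation with bounded zeroth-order coefficient, the local Harnack inequality gives, for each unit ball $B_1(x)$,
\[
  \sup_{B_{1/2}(x)} u \;\le\; C_H(M)\, \inf_{B_{1/2}(x)} u,
\]
with $C_H$ depending only on $\|V\|_{L^\infty}$, hence only on $M$. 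Uniqueness of the ground state for $m\in\mathcal M_{L^2}(M,\omega)$ follows from \Cref{Theorem - C/LB/L Main Result} together with $\inf u>0$ once the quantitative bound is established (or is already part of the cited statement); I would cite it rather than reprove it.

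The heart of the matter is the uniform strict positivity. Suppose for contradiction that there exist sequences $m_n\in\mathcal M_{L^2}(M,\omega)$, points $x_n\in\R$, and ground states $(u_n,\phi_n)$ with $u_n(x_n)\to 0$. Translate so that $x_n=0$: set $\tilde m_n(\cdot) := m_n(\cdot - x_n)$, and similarly $\tilde u_n,\tilde\phi_n$; the class $\mathcal M_{L^2}(M,\omega)$ and the equations (\ref{eq: u phi eq pair}) are translation invariant, so $(\tilde u_n,\tilde\phi_n)$ is the ground state for $\tilde m_n$ and $\tilde u_n(0)\to 0$. By the uniform $H^4_{\rm unif}$/$H^2_{\rm unif}$ bounds and Arzelà--Ascoli (applied on each compact set via the compact embedding $H^4(B_R)\hookrightarrow C^{2}(B_R)$, $H^2(B_R)\hookrightarrow C^{0}(B_R)$), we extract a subsequence with $\tilde u_n \to u_\infty$ in $C^2_{\rm loc}$, $\tilde\phi_n\to\phi_\infty$ in $C^0_{\rm loc}$, and $\tilde m_n \rightharpoonup m_\infty$ weakly in $L^2_{\rm loc}$. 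Passing to the limit in the equations, $(u_\infty,\phi_\infty)$ solves (\ref{eq: u phi eq pair}) with nuclear density $m_\infty$, $u_\infty\ge 0$, and $u_\infty(0)=0$. The strong maximum principle applied to $-\Delta u_\infty + V_\infty u_\infty = 0$ (with $V_\infty\in L^\infty$) then forces $u_\infty\equiv 0$ on $\R$. But then the second equation gives $-\Delta\phi_\infty = 4\pi m_\infty$, and one checks that $m_\infty$ still inherits the lower bound $\int_{B_R(x)} m_\infty \ge \omega_0 R^3 - \omega_1$ for all $R$ (weak $L^2_{\rm loc}$ convergence preserves this averaged inequality), so $m_\infty$ satisfies (H2) and is not identically zero. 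A configuration with $u_\infty\equiv 0$ and $m_\infty\not\equiv 0$ satisfying (H2) contradicts the uniqueness/existence statement of \Cref{Theorem - C/LB/L Main Result} (which asserts $\inf u>0$ for the ground state of any such $m_\infty$) — equivalently, one derives a direct contradiction by noting the total charge cannot balance. This yields the existence of $c_{M,\omega}>0$ as claimed.

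The main obstacle I anticipate is making the compactness argument fully rigorous while keeping all constants dependent only on $(M,\omega)$ and not on the individual sequence: in particular (i) verifying that the limiting nuclear density $m_\infty$ genuinely lies in a class still satisfying (H1)--(H2) with the \emph{same} $\omega$ — weak convergence gives $\|m_\infty\|_{L^2_{\rm unif}}\le M$ and preserves the integrated lower bound by weak lower semicontinuity against the nonnegative test function $\chi_{B_R(x)}$ — and (ii) justifying that $(u_\infty,\phi_\infty)$ is not merely \emph{a} solution but \emph{the} ground state, which is where the uniqueness half of the proposition feeds back in; if uniqueness of ground states in $\mathcal M_{L^2}(M,\omega)$ is not yet available, one should instead run the contradiction directly against $\inf u_\infty>0$ from \Cref{Theorem - C/LB/L Main Result}, which requires no uniqueness. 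A subtlety in applying the strong maximum principle is that $u_\infty$ is a priori only $C^2$, which is fine, but one must ensure the limiting potential $V_\infty = \tfrac53 u_\infty^{4/3}-\phi_\infty$ is bounded — this follows from the uniform $L^\infty$ bounds passing to the limit. Apart from these points, the argument is a fairly standard Harnack-plus-compactness scheme.
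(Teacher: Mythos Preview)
Your proposal is correct and follows essentially the same contradiction-via-translation-and-compactness scheme as the paper: assume a minimising sequence $u_n(x_n)\to 0$, translate, use the uniform $H^4_{\rm unif}/H^2_{\rm unif}$ bounds and Harnack (with constant depending only on $M$) to conclude the translated $u_n$ vanish locally, and pass to the limit in the Poisson equation. The only difference is the endpoint: you close the contradiction by citing the uniqueness and strict positivity in \Cref{Theorem - C/LB/L Main Result} (so that $(0,\phi_\infty)$ cannot be the unique solution for $m_\infty$), whereas the paper tests the limiting equation $-\Delta\tilde\phi = 4\pi\tilde m$ against a scaled cutoff $\varphi(\cdot/R)$ to obtain directly $\int_{B_R}\tilde m \le CR$, which contradicts the $\omega_0 R^3 - \omega_1$ lower bound --- a more self-contained finish that avoids re-invoking the deep theorem.
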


Assuming higher regularity of the nuclear distributions implies higher regularity of the ground state. We therefore define, for $k \in \mathbb{N}_{0}$,
\begin{align*}
\mathcal{M}_{H^{k}}(M,\omega) = \bigg\{ \, m \in H^{k}_{\unif}(\R) \, \bigg| \,\,\, &\| m \|_{H^{k}_{\unif}(\R)} \leq M, \\ \, \, & \forall \, R > 0 \,\, \inf_{x \in \R} \int_{B_{R}(x)} m(z) \id z \geq \omega_{0} R^{3} - \omega_{1} \, \bigg\}.
\end{align*}
Arguing by induction and applying the uniform lower bound (\ref{eq: u inf est})
yields the following result.

\begin{corollary}
\label{Corollary - General Est Ck version}
Suppose $k \in \mathbb{N}_{0}$ and $m \in \mathcal{M}_{H^{k}}(M,\omega)$, then
the corresponding  solution $(u,\phi)$ to \textnormal{(\ref{eq: u phi eq pair})} satisfies
\begin{align}
\| u \|_{H^{k+4}_{\unif}(\R)} +
\| \phi \|_{H^{k+2}_{\unif}(\R)} \leq C(k,M,\omega). \label{eq: corr k, u phi extra reg est}
\end{align}
\end{corollary}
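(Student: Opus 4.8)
The plan is to prove Corollary~\ref{Corollary - General Est Ck version} by induction on $k$, using Proposition~\ref{Proposition - General Regularity Est} as the base case ($k=0$), where the hypothesis $m \in \mathcal{M}_{H^0}(M,\omega) \subset \mathcal{M}_{L^2}(M,\omega)$ gives $\|u\|_{H^4_{\unif}} + \|\phi\|_{H^2_{\unif}} \leq C(M)$ directly from \eqref{eq: gen u H2 unif est}--\eqref{eq: gen phi H2 unif est}, and Proposition~\ref{Proposition - Uniform inf u estimate} supplies the essential lower bound $\inf_\R u \geq c_{M,\omega} > 0$. For the inductive step, suppose the claim holds for $k-1$, i.e. $\|u\|_{H^{k+3}_{\unif}} + \|\phi\|_{H^{k+1}_{\unif}} \leq C(k-1,M,\omega)$, and assume $m \in \mathcal{M}_{H^k}(M,\omega)$. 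Since $\mathcal{M}_{H^k}(M,\omega) \subset \mathcal{M}_{H^{k-1}}(M,\omega)$, the inductive hypothesis applies and we must upgrade the regularity by one derivative.

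First I would bootstrap the potential equation \eqref{eq: phi inf eq}: since $-\Delta \phi = 4\pi(m - u^2)$ with $m \in H^k_{\unif}$ (norm $\leq M$) and $u^2 \in H^{k+3}_{\unif}$ (the algebra property of $H^j_{\unif}$ for $j \geq 2$, combined with the inductive bound on $u$, controls $\|u^2\|_{H^{k+3}_{\unif}}$), interior elliptic regularity on unit balls — applied uniformly in $x \in \R$ — yields $\phi \in H^{k+2}_{\unif}(\R)$ with the desired bound. Then I would bootstrap the density equation \eqref{eq: u inf eq}, rewritten as $-\Delta u = \phi u - \tfrac{5}{3} u^{7/3}$. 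The term $\phi u$ lies in $H^{k+1}_{\unif}$ (product of $\phi \in H^{k+2}_{\unif}$ and $u \in H^{k+3}_{\unif}$), and the nonlinear term $u^{7/3}$ requires care: here is where the uniform positivity $\inf u \geq c_{M,\omega}$ is indispensable, since it keeps $u$ away from the origin so that $t \mapsto t^{7/3}$ is smooth on the range of $u$, allowing the composition estimate $\|u^{7/3}\|_{H^{k+2}_{\unif}} \leq F(\|u\|_{H^{k+2}_{\unif}}, c_{M,\omega}^{-1})$ via the chain rule and Moser-type estimates for compositions in Sobolev spaces. Feeding this back into elliptic regularity for $u$ gives $u \in H^{k+4}_{\unif}(\R)$ with a bound depending only on $k, M, \omega$, completing the induction.

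I would be careful to record, at each stage of the bootstrap, that all the elliptic estimates are the \emph{local} ones on balls $B_1(x)$ with constants independent of $x$, so that taking the supremum over $x$ preserves the $\unif$-norm bounds; this is routine but worth stating once. The only genuinely delicate point is the nonlinear composition estimate for $u^{7/3}$ in $H^{k+2}_{\unif}$: the exponent $7/3$ is non-integer, so one cannot simply iterate the product rule, and the standard Gagliardo--Nirenberg/Moser composition lemma must be invoked in a form that tracks the dependence on $\inf u$ and on the $C^{k+2}$-norm of the function $t \mapsto t^{7/3}$ restricted to a compact interval $[c_{M,\omega}, \|u\|_{L^\infty}]$ — the latter being controlled by the already-established $H^4_{\unif} \hookrightarrow L^\infty$ bound. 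This is the main obstacle, though it is a well-understood technical device; once it is in place the rest of the argument is a clean two-equation bootstrap closed by induction.
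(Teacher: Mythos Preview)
Your proposal is correct and follows essentially the same route as the paper: induction on $k$ with the $k=0$ base case supplied by Proposition~\ref{Proposition - General Regularity Est}, then bootstrapping first $\phi$ via \eqref{eq: phi inf eq} and then $u$ via \eqref{eq: u inf eq}, invoking the uniform lower bound from Proposition~\ref{Proposition - Uniform inf u estimate} to control $u^{7/3}$ in $H^{k+2}_{\unif}$. One minor index slip: since you have just established $\phi \in H^{k+2}_{\unif}$ and $u \in H^{k+3}_{\unif}$, the product $\phi u$ lies in $H^{k+2}_{\unif}$ (not $H^{k+1}_{\unif}$), which is exactly the regularity needed to conclude $u \in H^{k+4}_{\unif}$ by interior elliptic estimates.
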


\subsection{Pointwise stability and locality}
\label{Subsection - Pointwise Stability Estimates}
We now discuss our main result, a pointwise stability estimate for
\textnormal{(\ref{eq: u phi eq pair})} which reveals a generic locality of the
TFW interaction.  To establish this result, we adapt the proof of uniqueness of
the TFW equations in \cite{C/LB/L,Blanc_Uniqueness}, specialising the class of
test functions to 
\begin{align}
H_{\gamma} &= \bigg \{ \, \xi \in H^{1}(\R) \, \bigg| \, |\nabla \xi(x)| \leq \gamma |\xi(x)| \, \forall \, x \in \R \, \bigg \} \label{eq: H gamma test space}
\end{align}
for some $\gamma > 0$.
Observe that $e^{-\wt \gamma |\cdot|} \in H_{\gamma}$ for any $0 < \wt \gamma \leq \gamma$.


\begin{theorem}
\label{Theorem - One inf pointwise stability estimate alt}
Let $m_{1} \in \mathcal{M}_{L^{2}}(M,\omega)$, and let $(u_{1}, \phi_{1})$
Also, let $m_{2} : \R \to \mathbb{R}_{\geq 0}$ satisfy $\| m_{2} \|_{L^{2}_{\unif}(\R)} \leq M'$ and suppose there exists $(u_{2},\phi_{2})$ solving \textnormal{(\ref{eq: u phi eq pair})} corresponding to $m_{2}$, satisfying $u_{2} \geq 0$ and
\begin{align}
\| u_{2} \|_{H^{4}_{\unif}(\R)} &+ \| \phi_{2} \|_{H^{2}_{\unif}(\R)} \leq C(M'). \label{eq: u2 phi2 reg est}
\end{align}
Further, there exist $C = C(M,M',\omega), \gamma = \gamma(M,M',\omega) >0$ such that
for any $\xi \in H_{\gamma}$
\begin{align}
\int_{\R} \bigg( \sum_{|\alpha_{1}| \leq 4} |\partial^{\alpha_{1}} (u_{1} - u_{2})|^{2} + \sum_{|\alpha_{2}| \leq 2} |\partial^{\alpha_{2}} (\phi_{1} - \phi_{2})|^{2} \bigg) \xi^{2} \leq C \int_{\R} (m_{1} - m_{2})^{2} \xi^{2}. \label{eq: w and psi partial xi global onesided est}
\end{align}
In particular, for any $y \in \R$,
\begin{align}
\sum_{|\alpha| \leq 2} |\partial^{\alpha} (u_{1} - u_{2})(y)|^{2} + |(\phi_{1} - \phi_{2})(y)|^{2} \leq C \int_{\R} |(m_{1} - m_{2})(x)|^{2} e^{-2\gamma |x - y|} \id x. \label{eq: w and psi pointwise rhs exp integral onesided est}
\end{align}
\end{theorem}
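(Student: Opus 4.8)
The plan is to localise the uniqueness argument of Catto--Le Bris--Lions \cite{C/LB/L} (in the streamlined form of \cite{Blanc_Uniqueness}) using the exponentially decaying weights contained in the test space $H_{\gamma}$ of \eqref{eq: H gamma test space}. Write $w = u_{1} - u_{2}$, $\psi = \phi_{1} - \phi_{2}$ and $\mu = m_{1} - m_{2}$. By Proposition~\ref{Proposition - Uniform inf u estimate}, $u_{1} \geq c_{M,\omega} > 0$; by Proposition~\ref{Proposition - General Regularity Est} and the hypothesis \eqref{eq: u2 phi2 reg est}, the functions $u_{1}, u_{2}, \phi_{1}, \phi_{2}$ are bounded in $L^{\infty}(\R)$, and rearranging the first equation in \eqref{eq: u phi eq pair} gives $\Delta u_{i}/u_{i} = \tfrac{5}{3}u_{i}^{4/3} - \phi_{i} \in L^{\infty}(\R)$. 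Subtracting the two copies of \eqref{eq: u phi eq pair} yields
\[ -\Delta w = \phi_{1}w + u_{2}\psi - \tfrac{5}{3}\bigl(u_{1}^{7/3} - u_{2}^{7/3}\bigr), \qquad -\Delta\psi = 4\pi\mu - 4\pi(u_{1}+u_{2})w, \]
in which, by the mean value theorem and the $L^{\infty}$ bounds, every nonlinear difference is a bounded coefficient times $w$.

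For the core step I would fix $\xi \in H_{\gamma}$ and rework the uniqueness computation of \cite{C/LB/L,Blanc_Uniqueness} with $\xi^{2}$-weighted test functions. One tests suitable combinations of the two $u$-equations against functions built from $\tfrac{u_{1}^{2}-u_{2}^{2}}{u_{i}}\xi^{2}$; the point of dividing by $u_{i}$ is that it converts the sign-indefinite terms $\int\phi_{i}w^{2}\xi^{2}$ into $\int\psi(u_{1}^{2}-u_{2}^{2})\xi^{2}$, which can then be coupled to the Poisson equations. Adding the tested equations and symmetrising, the gradient contributions collapse into a manifestly non-negative quadratic form (the convexity identity used in \cite{C/LB/L}); the Thomas--Fermi contributions give $\tfrac{5}{3}\int(u_{1}^{4/3}-u_{2}^{4/3})(u_{1}^{2}-u_{2}^{2})\xi^{2} \geq \kappa\int w^{2}\xi^{2}$ with $\kappa = \kappa(M,M',\omega) > 0$, obtained by distinguishing the cases $u_{2} \geq c_{M,\omega}/2$ and $u_{2} < c_{M,\omega}/2$ and using $u_{1} \geq c_{M,\omega}$ together with the upper bounds; and the electrostatic contributions, after substituting $-\Delta\psi = 4\pi\mu - 4\pi(u_{1}^{2}-u_{2}^{2})$ and integrating by parts, produce $\tfrac{1}{4\pi}\int|\nabla\psi|^{2}\xi^{2}$ together with the source $-\int\mu\psi\xi^{2}$. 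By the defining inequality $|\nabla\xi| \leq \gamma|\xi|$ of $H_{\gamma}$, every term in which a derivative lands on $\xi^{2}$ is bounded by $C\gamma$ times a sum of the weighted norms $\|w\xi\|_{L^{2}}^{2}$, $\|\nabla w\,\xi\|_{L^{2}}^{2}$, $\|\psi\xi\|_{L^{2}}^{2}$, $\|\nabla\psi\,\xi\|_{L^{2}}^{2}$.

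To turn this into $\int(|\nabla w|^{2}+w^{2}+|\nabla\psi|^{2}+\psi^{2})\xi^{2} \leq C\int\mu^{2}\xi^{2}$, the missing piece is control of $\|\psi\xi\|_{L^{2}}$ (and hence of the source term, since neither $\psi$ nor $\mu$ decays). Here I would use the identity $\psi = -u_{1}^{-1}\Delta w + b\,w$ with $b \in L^{\infty}(\R)$, read off from the first equation using $u_{1} \geq c_{M,\omega}$, together with weighted interior elliptic regularity applied to $-\Delta w = u_{1}\psi + (\text{bounded})\,w$: the first gives $\|\psi\xi\|_{L^{2}} \leq C\sum_{|\alpha|\leq 2}\|\partial^{\alpha}w\,\xi\|_{L^{2}}$, the second bounds $\sum_{|\alpha|\leq 2}\|\partial^{\alpha}w\,\xi\|_{L^{2}}$ by $C(\|\psi\xi\|_{L^{2}} + \|w\xi\|_{L^{2}})$ plus $C\gamma$ times lower-order weighted norms (each commutator $[\partial^{\alpha},\xi^{2}]$ carrying a factor $\leq 2\gamma$). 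Feeding these into the core estimate, bounding $|\int\mu\psi\xi^{2}| \leq \tfrac{1}{2}\varepsilon\|\psi\xi\|_{L^{2}}^{2} + \tfrac{1}{2\varepsilon}\|\mu\xi\|_{L^{2}}^{2}$, and choosing first $\varepsilon$ and then $\gamma = \gamma(M,M',\omega)$ small enough to absorb the $\varepsilon$- and $\gamma$-terms into the coercive left-hand side, gives the claimed inequality. A finite induction on the derivative order then yields \eqref{eq: w and psi partial xi global onesided est}: one differentiates \eqref{eq: u phi eq pair}, tests with $\partial^{\alpha}w\,\xi^{2}$ and $\partial^{\alpha}\psi\,\xi^{2}$, controls the coefficient derivatives via Corollary~\ref{Corollary - General Est Ck version}, and controls the lower-order weighted terms via the inductive hypothesis and the smallness of $\gamma$. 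Finally, choosing $\xi(x) = e^{-\gamma|x-y|} \in H_{\gamma}$, one has $\xi^{2} \geq e^{-2\gamma}$ on $B_{1}(y)$, so \eqref{eq: w and psi partial xi global onesided est} controls $\|w\|_{H^{4}(B_{1}(y))}^{2} + \|\psi\|_{H^{2}(B_{1}(y))}^{2}$ by a multiple of $\int_{\R}|\mu(x)|^{2}e^{-2\gamma|x-y|}\id x$, and \eqref{eq: w and psi pointwise rhs exp integral onesided est} follows from the three-dimensional Sobolev embeddings $H^{4}(B_{1}(y)) \hookrightarrow C^{2}(\overline{B_{1}(y)})$ and $H^{2}(B_{1}(y)) \hookrightarrow C^{0}(\overline{B_{1}(y)})$.

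The main obstacle I anticipate is steps two and three, i.e.\ making the localised uniqueness computation close. Two points are delicate. First, one has to arrange the algebra so that the genuinely sign-indefinite $O(1)$ contributions --- above all $\int\phi_{i}w^{2}\xi^{2}$ --- cancel exactly; this is where the full Coulomb structure is used, and it is also why the resulting estimate is one-sided: the argument is not symmetric under $1 \leftrightarrow 2$, it relies on the lower bound for $u_{1}$ but only on $u_{2} \geq 0$, and consequently the admissibility and boundedness of any test function involving $u_{2}^{-1}$ must be treated carefully. Second, one has to control the weighted electrostatic source $\int\mu\psi\xi^{2}$ and the weighted norm $\|\psi\xi\|_{L^{2}}$ even though neither $\psi$ nor $\mu$ decays; this is what forces the coupling with the first equation and with elliptic regularity, and a careful accounting verifying that every surplus term is either non-negative or carries a factor of $\gamma$ that can be made small.
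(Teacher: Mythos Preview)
Your overall strategy---localise the uniqueness argument with $\xi^{2}$, use the lower bound on $u_{1}$ to close the $\psi$-estimate, absorb commutators by smallness of $\gamma$, then Sobolev-embed---matches the paper's, but the execution differs in two places and there is one genuine gap.

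The differences: the paper does \emph{not} test with $(u_{1}^{2}-u_{2}^{2})/u_{i}\,\xi^{2}$. It tests the $w$-equation with $w\xi^{2}$ and handles the sign-indefinite term $\int\tfrac{\phi_{1}+\phi_{2}}{2}w^{2}\xi^{2}$ by recognising that the full quadratic form is $\langle w\xi,\, L(w\xi)\rangle$ with $L=\tfrac12(L_{1}+L_{2})$, $L_{i}=-\Delta+\tfrac{5}{3}u_{i}^{4/3}-\phi_{i}$, and invoking the ground-state positivity lemma (\Cref{Lemma - Eigenvalue Argument}) to conclude $L\geq 0$. For the $\psi$-control the paper also avoids your identity-plus-elliptic-regularity route (which, as you have written it, is circular: $\|\psi\xi\|\leq C\|w\|_{H^{2},\xi}$ and $\|w\|_{H^{2},\xi}\leq C(\|\psi\xi\|+\|w\xi\|)$ do not close unless the product of constants is $<1$). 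Instead it tests the $w$-equation with $\psi\xi^{2}$ and integrates by parts, so that only $\|\nabla w\,\xi\|$ and $\|\nabla\psi\,\xi\|$ (already controlled by the first step) appear on the right, and $\inf u_{1}\geq c_{M,\omega}$ gives coercivity on the left via $\int\tfrac{u_{1}+u_{2}}{2}\psi^{2}\xi^{2}$.

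The gap: both your test functions involving $u_{2}^{-1}$ and the paper's use of $L_{2}\geq 0$ require $u_{2}>0$, but the theorem assumes only $u_{2}\geq 0$. You flag this (``must be treated carefully'') but do not resolve it. The paper's resolution is a thermodynamic-limit argument: replace $m_{2}$ by truncations $m_{2,R_{n}}=m_{2}\chi_{B_{R_{n}}}$, for which the finite-energy minimiser satisfies $u_{2,R_{n}}>0$ by the Harnack inequality, prove the weighted estimate for $(u_{1},\phi_{1})$ versus $(u_{2,R_{n}},\phi_{2,R_{n}})$ with constants independent of $R_{n}$ (this uses the uniform regularity of \Cref{Proposition - Finite Regularity Est}), and pass to the limit by dominated convergence. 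Without this step the proof is incomplete.
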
 

\begin{remark}
  Since we do not assume that $m_{2} \in \mathcal{M}_{L^{2}}(M',\omega')$, we can
  not guarantee the uniqueness of the corresponding solution
  $(u_{2},\phi_{2})$. \qed
\end{remark}

We can generalise \Cref{Theorem - One inf pointwise stability estimate alt} to
obtain higher-order pointwise estimates, but this requires \emph{both}
$\inf u_{1}, \inf u_{2} > 0$, hence we need to assume
$m_{1}, m_{2} \in \mathcal{M}_{H^{k}}(M,\omega)$ for some
$k \in \mathbb{N}_{0}$.

\begin{theorem}
\label{Theorem - Exponential Est Integral RHS}
Let $k \in \mathbb{N}_{0}$ and $m_{1}, m_{2} \in \mathcal{M}_{H^{k}}(M,
\omega)$. Consider the corresponding ground states $(u_{1},\phi_{1}), (u_{2},\phi_{2})$ and define
\begin{align*}
w = u_{1} - u_{2}, \quad \psi = \phi_{1} - \phi_{2}, \quad R_{m} = 4 \pi (m_{1} - m_{2}).
\end{align*}
Then, there exist $C = C(k,M,\omega), \gamma = \gamma(M,\omega) > 0$ such that for any $\xi \in H_{\gamma}$
\begin{align}
\int_{\R} \bigg( \sum_{|\alpha_{1}| \leq k+4} |\partial^{\alpha_{1}} w|^{2} + \sum_{|\alpha_{2}| \leq k+2} |\partial^{\alpha_{2}} \psi|^{2} \bigg) \xi^{2} \leq C \int_{\R} \sum_{|\beta| \leq k} |\partial^{\beta} R_{m}|^{2} \xi^{2}. \label{eq: w and psi partial xi global est}
\end{align}
In particular, for any $y \in \R$,
\begin{align}
\sum_{|\alpha_{1}| \leq k+2} |\partial^{\alpha_{1}} w(y)|^{2} + \sum_{|\alpha_{2}| \leq k} |\partial^{\alpha_{2}} \psi(y)|^{2} \leq C \int_{\R} \sum_{|\beta| \leq k} |\partial^{\beta} R_{m}(x)|^{2} e^{-2\gamma |x - y|} \id x. \label{eq: w and psi pointwise rhs exp integral est}
\end{align}
\end{theorem}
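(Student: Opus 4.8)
The plan is to follow the same strategy as in Theorem \ref{Theorem - One inf pointwise stability estimate alt}, but to exploit the two-sided lower bound $\inf u_1, \inf u_2 > 0$ (available from \Cref{Proposition - Uniform inf u estimate}, since now $m_1, m_2 \in \mathcal{M}_{H^k}(M,\omega)$) to push the argument to arbitrary order $k$. First I would derive the linearised system satisfied by $w = u_1 - u_2$ and $\psi = \phi_1 - \phi_2$: subtracting the TFW equations gives
\begin{align*}
-\Delta w + \tfrac53\big(u_1^{7/3} - u_2^{7/3}\big) - \phi_1 w - u_2 \psi &= 0, \\
-\Delta \psi &= R_m - 4\pi(u_1 + u_2) w.
\end{align*}
Writing $u_1^{7/3} - u_2^{7/3} = a\, w$ with $a = \tfrac53 \int_0^1 (u_2 + t w)^{4/3}\,{\rm d}t \geq c > 0$ bounded below by the uniform lower bounds, the first equation becomes $-\Delta w + a w - \phi_1 w = u_2 \psi$. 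Then I would test the first equation with $w \xi^2$ and the second with $\psi \xi^2$ for $\xi \in H_\gamma$, integrate by parts, and use the commutator bound $|\nabla \xi| \leq \gamma |\xi|$ to absorb all the cross terms $\int \nabla w \cdot \nabla \xi\, w \xi$ etc.\ into the good terms $\int |\nabla w|^2 \xi^2$, $\int a w^2 \xi^2$, paying a factor that is small when $\gamma$ is small. The Coulomb coupling $\int u_2 \psi w \xi^2$ and $\int (u_1+u_2) w \psi \xi^2$ must be handled together: the combination is designed so that, after integrating by parts in the $\psi$ equation, these terms cancel up to a manageable remainder (this is precisely the structure that makes the uniqueness proof of \cite{C/LB/L,Blanc_Uniqueness} work). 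This yields the base case $k = 0$ of \eqref{eq: w and psi partial xi global est}.

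For the inductive step, I would differentiate the linearised system: applying $\partial^\beta$ for $|\beta| \leq k$ produces equations for $\partial^\beta w$, $\partial^\beta \psi$ with lower-order commutator terms involving $\partial^{\beta'} w$, $\partial^{\beta'} \psi$ for $|\beta'| < |\beta|$ and $\partial^{\beta''} u_i$, $\partial^{\beta''}\phi_i$ for $|\beta''| \leq k$, all controlled by \Cref{Corollary - General Est Ck version}. Testing these differentiated equations against $\partial^\beta w\, \xi^2$ and $\partial^\beta \psi\, \xi^2$, the same absorption argument with small $\gamma$ controls the top-order weighted norms $\sum_{|\alpha_1| \leq k} \int |\partial^{\alpha_1}\nabla w|^2 \xi^2$ by the right-hand side $\int \sum_{|\beta| \leq k} |\partial^\beta R_m|^2 \xi^2$ plus the already-controlled lower-order weighted norms; elliptic regularity (bootstrapping $-\Delta(\partial^\beta w) = \cdots$ and $-\Delta(\partial^\beta \psi) = \cdots$ against the weight $\xi^2$, as in \Cref{Proposition - General Regularity Est}) upgrades $k$ derivatives on $R_m$ to $k+4$ derivatives on $w$ and $k+2$ on $\psi$, giving \eqref{eq: w and psi partial xi global est}.

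Finally, \eqref{eq: w and psi pointwise rhs exp integral est} follows from \eqref{eq: w and psi partial xi global est} by the standard localisation: fix $y$, take $\xi(x) = e^{-\gamma|x-y|}$ (which lies in $H_\gamma$, as noted after \eqref{eq: H gamma test space}), so that on $B_1(y)$ one has $\xi \geq e^{-\gamma}$, and then apply a Sobolev embedding $H^2(B_1(y)) \hookrightarrow C^0(B_1(y))$ to bound $\sum_{|\alpha_1| \leq k+2} |\partial^{\alpha_1} w(y)|^2 + \sum_{|\alpha_2| \leq k} |\partial^{\alpha_2} \psi(y)|^2$ by $C \int_{B_1(y)} \big( \sum_{|\alpha_1| \leq k+4} |\partial^{\alpha_1} w|^2 + \sum_{|\alpha_2| \leq k+2} |\partial^{\alpha_2} \psi|^2 \big)$, which is in turn $\leq C e^{2\gamma} \int_{B_1(y)} (\cdots)\xi^2 \leq C e^{2\gamma} \cdot C \int_{\R} \sum_{|\beta|\leq k} |\partial^\beta R_m|^2 \xi^2$. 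The main obstacle is the base-case absorption: making precise that the Coulomb cross-terms combine favourably and that $\gamma$ can be chosen uniformly small (depending only on $M$, $\omega$, and the lower bounds $c_{M,\omega}$) so that every commutator term is strictly dominated — this is where the coupled structure of the Schr\"odinger--Poisson system, rather than just two decoupled elliptic estimates, is essential, and it is exactly the point at which the full Coulomb interaction (as emphasised in the introduction) enters.
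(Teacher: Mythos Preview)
Your overall architecture --- weighted testing for the base case, then bootstrapping higher derivatives, then Sobolev embedding with $\xi(x)=e^{-\gamma|x-y|}$ for the pointwise bound --- matches the paper, and the last two stages are essentially correct as you describe them. But the base case has a real gap.

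You write the equation as $-\Delta w + a w - \phi_1 w = u_2\psi$ and treat $\int a\,w^2\xi^2$ as the coercive term into which commutators are absorbed. The problem is the zeroth-order potential term $-\int \phi_1 w^2\xi^2$: $\phi_1$ is only bounded in $L^\infty$ and can be large and positive, so $a-\phi_1$ has no sign, and the lower bound $a\geq c>0$ by itself gives you nothing. The paper's remedy is structural, not a smallness absorption. One uses the \emph{symmetric} rearrangement $\phi_1 u_1 - \phi_2 u_2 = \tfrac12(\phi_1+\phi_2)w + \tfrac12(u_1+u_2)\psi$ and the algebraic inequality $(u_1^{7/3}-u_2^{7/3})w \geq \tfrac12(u_1^{4/3}+u_2^{4/3})w^2 + \nu w^2$ with $\nu=\tfrac12\inf(u_1^{4/3}+u_2^{4/3})>0$. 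The first piece combines with the Laplacian and the potential into $\langle w\xi,\, L(w\xi)\rangle$ where $L=\tfrac12(L_1+L_2)$ and $L_i=-\Delta+\tfrac53 u_i^{4/3}-\phi_i$. The key point you do not invoke is that each $L_i$ is a \emph{non-negative} operator, because $u_i>0$ solves $L_i u_i=0$ (the ground-state/Allegretto--Piepenbrink argument of \Cref{Lemma - Eigenvalue Argument}). This spectral input, not the size of $a$, is what kills the bad potential; your ``manageable remainder'' phrasing does not capture it.

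A second omission: you never say how $\int \psi^2\xi^2$ is controlled. Testing the $\psi$-equation with $\psi\xi^2$ yields only $\int|\nabla\psi|^2\xi^2$. The paper obtains the $L^2$ bound on $\psi$ by testing the $w$-equation with $\psi\xi^2$ (not $w\xi^2$), which produces $\int\tfrac{u_1+u_2}{2}\psi^2\xi^2\geq \tfrac{c_{M,\omega}}{2}\int\psi^2\xi^2$ on the left; this is \Cref{Lemma - Exp est lemma 2} and uses the one-sided lower bound $\inf u_1>0$. Only after this do you restrict to $H_\gamma$ and choose $\gamma$ small to absorb the $\int(w^2+\psi^2)|\nabla\xi|^2$ error.

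For the higher-order step, the paper does not re-test differentiated equations (which would reintroduce the potential-sign issue at every level). Instead it uses the weighted integration-by-parts identity $\int|\Delta\partial^\beta w|^2\xi^2 = \sum_{|\alpha|=2}\int|\partial^{\alpha+\beta}w|^2\xi^2 + O(\text{lower order in }|\nabla\xi|)$ together with the pointwise bounds $|\partial^\alpha\Delta w|\leq C\sum_{|\beta|\leq|\alpha|}(|\partial^\beta w|+|\partial^\beta\psi|)$ obtained by differentiating \eqref{eq: w psi pair eq} and invoking \Cref{Corollary - General Est Ck version}. This is pure elliptic bootstrapping in the weighted norm and needs only the $k=0$ estimate as input; your version would work too, but only once the base case is fixed as above.
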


\begin{remark}
It is possible to generalise \Cref{Theorem - One inf pointwise stability estimate alt} to treat nuclei described by a non-negative measure $m$ on $\R$ satisfying
\begin{align*}
\sup_{x \in \R} m(B_{1}(x)) \leq M, \tag{H1'}
\end{align*}
and there exist $\omega_{0} > 0, \omega_{1} \geq 0$ such that for all $R > 0$
\begin{align*}
\inf_{x \in \R} m(B_R(x)) \geq \omega_{0} R^3 - \omega_{1}. \tag{H2'}
\end{align*}
The existence and uniqueness of a corresponding ground state $(u,\phi)$ is guaranteed by~\cite[Theorem 6.10]{C/LB/L}. We believe that the arguments used to show \cite[Lemma 5.5]{C/LB/L} and Theorem~\ref{Theorem - One inf pointwise stability estimate alt} can be adapted to show pointwise estimates similar to \eqref{eq: w and psi partial xi global onesided est}--\eqref{eq: w and psi pointwise rhs exp integral onesided est} when $m_{1}, m_{2}$ satisfy \textnormal{(H1')--(H2')} and that $m_{1} - m_{2}$ is absolutely continuous with respect to the Lebesgue measure on $\R$, with a density belonging to $L^{2}_{\unif}(\R)$.

This result is not sufficient to consider the response of the ground state to a perturbation of point nuclei, though it may be possible to treat this using an approximation to the identity or by applying similar techniques. \qed
%
\end{remark}
\section{Applications}
\label{Section 4 - Applications}

\subsection{Thermodynamic limit estimates} 
\label{sec:tdl-estimates}
The following result provides an estimate for comparing the infinite ground state
with its finite approximation, over compact sets, thus providing explicit rates
of convergence for the thermodynamic limit. This is discussed in
\Cref{rem:rdl-estimate}.

Interpreted differently, the result yields estimates on the decay of the
perturbation from the bulk electronic structure at a domain boundary,
generalising the exponential decay estimate \cite[Theorem 4.6]{BlancScreening}
to arbitrary open $\Omega \subset \R$ and general
$m \in \mathcal{M}_{L^{2}}(M,\omega)$.

\begin{proposition}
\label{Proposition - Infinite Finite Ground state comparison}
Let $m \in \mathcal{M}_{L^{2}}(M,\omega)$ and $(u,\phi)$ be the corresponding
ground state. Further, let $\Omega \subset \R$ be open and suppose there exists
$m_{\Omega}: \R \to \mathbb{R}_{\geq 0}$ such that 
$m_{\Omega} = m$ on $\Omega$ and $\| m_{\Omega} \|_{L^{2}_{\unif}(\R)} \leq M$
(e.g., $m_\Omega = m \chi_\Omega$), then there exists
$(u_{\Omega},\phi_{\Omega})$ solving \textnormal{(\ref{eq: u phi eq pair})} with
$m=m_{\Omega}$, $u_{\Omega} \geq 0$ and
$C = C(M, \omega), \gamma = \gamma(M,\omega) > 0$, independent of $\Omega$, such
that for all $y \in \Omega$
\begin{align}
\sum_{|\alpha| \leq 2} |\partial^{\alpha} (u - u_{\Omega})(y)| + |(\phi - \phi_{\Omega})(y)| \leq C e^{-\gamma \textnormal{dist}(y,\partial \Omega)}.  \label{eq: w and psi Rbuf est}
\end{align} 
\end{proposition}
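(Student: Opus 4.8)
The plan is to deduce \Cref{Proposition - Infinite Finite Ground state comparison} directly from \Cref{Theorem - One inf pointwise stability estimate alt} by an appropriate choice of test function exploiting that $m$ and $m_\Omega$ agree on $\Omega$. First I would set $m_1 = m$, so $(u_1,\phi_1) = (u,\phi)$ is the ground state associated with $m_1 \in \mathcal{M}_{L^2}(M,\omega)$, and $m_2 = m_\Omega$, which satisfies $\|m_\Omega\|_{L^2_{\unif}(\R)} \leq M$ by hypothesis, so $M' = M$. The existence of a corresponding solution $(u_\Omega,\phi_\Omega) = (u_2,\phi_2)$ with $u_2 \geq 0$ and the regularity bound \eqref{eq: u2 phi2 reg est} follows from \Cref{Proposition - General Regularity Est} applied to $m_\Omega$. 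Thus the hypotheses of \Cref{Theorem - One inf pointwise stability estimate alt} are met, and there exist $C,\gamma > 0$ depending only on $M,\omega$ such that \eqref{eq: w and psi pointwise rhs exp integral onesided est} holds for every $y \in \R$.

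The key observation is that the right-hand side of \eqref{eq: w and psi pointwise rhs exp integral onesided est} is
\begin{align*}
  C \int_{\R} |(m_1 - m_2)(x)|^2 e^{-2\gamma|x-y|}\,\id x
  = C \int_{\R \setminus \Omega} |(m - m_\Omega)(x)|^2 e^{-2\gamma|x-y|}\,\id x,
\end{align*}
since $m - m_\Omega$ vanishes on $\Omega$. For $y \in \Omega$ and $x \notin \Omega$ one has $|x - y| \geq \textnormal{dist}(y,\partial\Omega)$, so $e^{-2\gamma|x-y|} \leq e^{-\gamma\textnormal{dist}(y,\partial\Omega)} e^{-\gamma|x-y|}$. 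Hence the integral is bounded by
\begin{align*}
  C e^{-\gamma\textnormal{dist}(y,\partial\Omega)} \int_{\R} \big( |m(x)|^2 + |m_\Omega(x)|^2 \big) e^{-\gamma|x-y|}\,\id x
  \leq C' e^{-\gamma\textnormal{dist}(y,\partial\Omega)},
\end{align*}
where the last step uses $\|m\|_{L^2_{\unif}(\R)}, \|m_\Omega\|_{L^2_{\unif}(\R)} \leq M$ together with the elementary fact that a uniformly $L^2_{\unif}$ function has $\int_\R |f(x)|^2 e^{-\gamma|x-y|}\,\id x$ bounded uniformly in $y$ (decompose $\R$ into unit balls $B_1(z)$, $z \in \mathbb{Z}^3$, and sum the geometric-type series $\sum_z M^2 e^{-\gamma(|z-y|-c)}$). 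Taking square roots and halving $\gamma$ if necessary (to convert the $L^2$-type bound on the sum $\sum_{|\alpha|\leq 2}|\partial^\alpha w(y)|^2 + |\psi(y)|^2$ into the stated sum of absolute values, via $\sum |a_i| \leq \sqrt{N}\,(\sum a_i^2)^{1/2}$) yields \eqref{eq: w and psi Rbuf est} with constants depending only on $M,\omega$, in particular independent of $\Omega$.

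I expect the only mild subtlety — not really an obstacle — to be bookkeeping with constants: ensuring the final $\gamma$ is genuinely $\Omega$-independent (it is, since the $\gamma$ from \Cref{Theorem - One inf pointwise stability estimate alt} depends only on $M, M'=M, \omega$), and absorbing the factor $e^{-\gamma|x-y|}$ split cleanly so that one retains a full exponential $e^{-\gamma\,\textnormal{dist}(y,\partial\Omega)}$ rather than a degraded rate. A secondary point to verify is that \eqref{eq: w and psi pointwise rhs exp integral onesided est} controls $\sum_{|\alpha|\leq 2}|\partial^\alpha(u-u_\Omega)(y)|^2 + |(\phi-\phi_\Omega)(y)|^2$, which after taking square roots gives precisely the left-hand side of \eqref{eq: w and psi Rbuf est} up to a dimensional constant.
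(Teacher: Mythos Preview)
Your proposal is correct and follows essentially the same approach as the paper: apply \Cref{Theorem - One inf pointwise stability estimate alt} with $m_1=m$, $m_2=m_\Omega$, use that $m-m_\Omega$ vanishes on $\Omega$ so the integral is supported in $\Omega^c$ where $|x-y|\geq\textnormal{dist}(y,\partial\Omega)$, and then exploit $m,m_\Omega\in L^2_{\unif}$ to bound the remaining weighted integral. The paper handles the last step by integrating $e^{-2\tilde\gamma|x-y|}$ over $B_d(y)^c$ directly rather than splitting the exponential as you do, but this is a cosmetic difference leading to the same conclusion with the same constant dependencies.
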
 

\begin{remark}
  \label{rem:rdl-estimate}
  Let $R>0$ and $R_n \uparrow \infty$, then applying \Cref{Proposition -
      Infinite Finite Ground state comparison} with $\Omega = B_{R_n}(0)$ and
    $m_\Omega = m_{R_n}$ gives a rate of convergence for the finite
    approximation $(u_{R_{n}},\phi_{R_{n}})$, solving (\ref{eq: u phi Rn pair}),
    to the ground state $(u,\phi)$,
  \begin{align}
    \| u - u_{R_{n}} \|_{W^{2,\infty}(B_R)(0)} 
    + \| \phi - \phi_{R_{n}} \|_{L^{\infty}(B_R)(0)} \leq 
    C e^{- \gamma ( R_{n} - R )}. 
    \label{eq: w and psi Rn local exp conv est}
  \end{align}
  This strengthens the result that $(u_{R_{n}},\phi_{R_{n}})$ converges to
  $(u,\phi)$ pointwise almost everywhere along a subsequence \cite{C/LB/L}.
\end{remark}

\subsection{Locality of the Charge Response}
The following result shows that the decay properties of the nuclear perturbation
are inherited by the response of the ground state.

\begin{corollary}
\label{Corollary - Exponential Estimates Consequences}
Let $k \in \mathbb{N}_{0}$ and $m_{1}, m_{2} \in \mathcal{M}_{H^{k}}(M, \omega)$. Consider the corresponding ground states $(u_{1},\phi_{1}), (u_{2},\phi_{2})$ and define
\begin{align*}
w = u_{1} - u_{2}, \quad \psi = \phi_{1} - \phi_{2}, \quad R_{m} = 4 \pi (m_{1} - m_{2}).
\end{align*}
\begin{enumerate}
\item \textnormal{(Exponential Decay)} If $R_{m} \in H^{k}(\R)$ and
  $\spt(R_{m}) \subset B_{R}(0)$, or there exists $\gamma' > 0$ such that
$\sum_{|\beta| \leq k}| \partial^{\beta} R_{m}(x)| \leq C e^{-\gamma'|x|},$
then there exist $C=C(k,M,\omega), \gamma = \gamma(M,\omega) > 0$ depending
also on $R$ or $\gamma'$ such that
\begin{align}
\sum_{|\alpha_{1}| \leq k+2} |\partial^{\alpha_{1}} w(x)| + \sum_{|\alpha_{2}| \leq k} |\partial^{\alpha_{2}} \psi(x)| \leq C e^{- \gamma |x|}. \label{eq: w psi local exp est}
\end{align}

\item \textnormal{(Algebraic Decay)} If there exist $C, r > 0$ such that
$\sum_{|\beta| \leq k}| \partial^{\beta} R_{m}(x)| \leq C (1 + |x|)^{-r}$
then there exists $C = C(r,k,M,\omega) > 0$ such that
\begin{align}
\sum_{|\alpha_{1}| \leq k+2} |\partial^{\alpha_{1}} w(x)| + \sum_{|\alpha_{2}| \leq k} |\partial^{\alpha_{2}} \psi(x)| &\leq C (1 + |x|)^{-r}. \label{eq: w psi decay est}
\end{align}

\item \textnormal{(Global Estimates)} If $R_{m} \in H^{k}(\R)$ 
then there exists $C = C(k,M,\omega) > 0$ such that
\begin{align}
\| w \|_{H^{k+4}(\R)} + \| \psi \|_{H^{k+2}(\R)} \leq C \| R_{m} \|_{H^{k}(\R)}. \label{eq: w psi Sobolev est}
\end{align}
\end{enumerate}
\end{corollary}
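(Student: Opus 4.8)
The plan is to derive all three statements directly from the pointwise stability estimate \eqref{eq: w and psi pointwise rhs exp integral est} of Theorem \ref{Theorem - Exponential Est Integral RHS}, choosing in each case a convenient test weight $\xi \in H_\gamma$ and estimating the resulting integral on the right-hand side. Throughout, $\gamma = \gamma(M,\omega) > 0$ is the constant furnished by that theorem, and we are always free to work with any smaller positive decay rate. For part (3), the global estimate, I would instead integrate the pointwise-in-$\xi$ inequality \eqref{eq: w and psi partial xi global est} with the constant weight $\xi \equiv 1$ (which trivially lies in $H_\gamma$), immediately giving $\|w\|_{H^{k+4}(\R)}^2 + \|\psi\|_{H^{k+2}(\R)}^2 \leq C \|R_m\|_{H^k(\R)}^2$, and then take square roots.

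For part (1), exponential decay, I would fix $x \in \R$ and apply \eqref{eq: w and psi pointwise rhs exp integral est} at the point $y = x$, so that the task reduces to bounding
\begin{align*}
\int_{\R} \sum_{|\beta| \leq k} |\partial^{\beta} R_m(z)|^2 e^{-2\gamma|z - x|} \id z.
\end{align*}
In the compactly supported case, $\spt(R_m) \subset B_R(0)$ and $R_m \in H^k(\R)$ give, for $|x| \geq 2R$ say, the bound $e^{-2\gamma(|x| - R)} \|R_m\|_{H^k}^2$ by pulling the exponential out at its maximum over $B_R(0)$; for $|x| \leq 2R$ one uses the crude bound $\|R_m\|_{H^k}^2$, and the two combine into $C e^{-2\gamma' |x|}$ after shrinking $\gamma$ to some $\gamma' \leq \gamma$ absorbing the $e^{2\gamma R}$ factor. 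In the exponentially decaying case, $\sum_{|\beta|\leq k}|\partial^\beta R_m(z)| \leq C e^{-\gamma'|z|}$ leads to the integral $\int_{\R} e^{-2\gamma'|z|} e^{-2\gamma|z-x|}\id z$; splitting into $|z| \leq |x|/2$ and $|z| \geq |x|/2$ and using the triangle inequality in each region yields a bound of the form $C e^{-2\min(\gamma,\gamma')|x|}$ up to a polynomial factor, which is again absorbed by slightly decreasing the exponent. Taking square roots gives \eqref{eq: w psi local exp est}.

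For part (2), algebraic decay, I would again evaluate \eqref{eq: w and psi pointwise rhs exp integral est} at $y = x$ and bound $\int_{\R} (1+|z|)^{-2r} e^{-2\gamma|z-x|}\id z$. The standard estimate here splits the domain at $|z - x| \leq |x|/2$ (where $1 + |z| \geq 1 + |x|/2 \geq c(1+|x|)$, so the polynomial factor is controlled by $C(1+|x|)^{-2r}$ and the Gaussian-type integral of $e^{-2\gamma|z-x|}$ over $\R$ is a finite constant) and $|z - x| \geq |x|/2$ (where $e^{-2\gamma|z-x|} \leq e^{-\gamma|x|} e^{-\gamma|z-x|}$ decays faster than any polynomial, and $(1+|z|)^{-2r}$ is bounded). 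This produces $C(1+|x|)^{-2r}$; taking square roots gives \eqref{eq: w psi decay est}. I do not expect any genuine obstacle in this corollary: the analytic content is entirely contained in Theorem \ref{Theorem - Exponential Est Integral RHS}, and what remains are the routine convolution-decay lemmas; the only point requiring a little care is the bookkeeping of how the decay rate $\gamma$ in the conclusion must be chosen strictly below both the stability rate and the input rate $\gamma'$, and that the constants $C$ are allowed to depend on $R$ or $\gamma'$ as stated.
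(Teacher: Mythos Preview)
Your arguments for parts (1) and (2) are correct and match the paper's approach: apply the pointwise estimate \eqref{eq: w and psi pointwise rhs exp integral est} and bound the resulting weighted integral by the convolution-decay lemmas you sketch. The paper simply asserts \eqref{eq: exp exp est} and \eqref{eq: decay rhs est} as ``straightforward'' and ``elementary'' respectively, so your explicit splittings are more detailed than what the authors provide.

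There is, however, a genuine gap in your argument for part (3). The constant function $\xi \equiv 1$ does \emph{not} lie in $H_\gamma$: recall from \eqref{eq: H gamma test space} that $H_\gamma \subset H^1(\R)$, and $1 \notin L^2(\R)$. Consequently \eqref{eq: w and psi partial xi global est} is not directly available for $\xi \equiv 1$. The paper fixes this by taking $\xi_\gamma(x) = e^{-\gamma|x|} \in H_{\tilde\gamma}$ for any $0 < \gamma \leq \tilde\gamma$, which gives
\[
\int_{\R} \bigg( \sum_{|\alpha_1|\leq k+4} |\partial^{\alpha_1} w|^2 + \sum_{|\alpha_2|\leq k+2} |\partial^{\alpha_2} \psi|^2 \bigg) e^{-2\gamma|x|} \id x
\leq C \int_{\R} \sum_{|\beta|\leq k} |\partial^\beta R_m|^2 \id x,
\]
and then sends $\gamma \to 0$ using the Dominated Convergence Theorem (or, more precisely, monotone convergence on the left). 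The hypothesis $R_m \in H^k(\R)$ is what makes the right-hand side finite and the limiting passage legitimate. Once you insert this approximation step, your proof of (3) is complete.
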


%
%
%
%
%

\begin{remark}
  For some of our comparison results, we require only
  $m_1 \in \mathcal{M}_{L^2}(M,\omega)$ but impose weaker assumptions on $m_2$. This would
  not generalise \Cref{Corollary - Exponential Estimates Consequences} since any
  of the decay assumptions in (1--3) already imply that
  $m_2 \in \mathcal{M}_{L^2}(M, \omega')$ for some $\omega'$.
\end{remark}


\begin{remark}
The estimate (\ref{eq: w psi local exp est}) can be used to study the full non-linear response of the ground state to a nuclear impurity. We compare this to the results from the Thomas--Fermi (TF) \cite{Ashcroft,Kittel,Resta} and TFW \cite{DonovanMarchScreening,StoppingPower,WeizsackerCorrectionNumerics,PartiallyLinearizedTFW} theories of screening.

Consider a nuclear arrangement $m_{1} \in \mathcal{M}_{L^{2}}(M,\omega)$ and
model a nuclear impurity at the origin with positive charge $Z$ by $Z \eta(x)$,
where $\eta \in C^{\infty}_{\textnormal{c}}(\R), \eta \geq 0$ and
$\int \eta = 1$.  Then define the perturbed system by
$m_{2} = m_{1} + Z \eta \in \mathcal{M}_{L^{2}}(M_{1},\omega_{1})$ and consider
the corresponding TFW ground states $(u_{1},\phi_{1})$ and $(u_{2},\phi_{2})$,
respectively. From (\ref{eq: w psi local exp est}) of \Cref{Corollary -
  Exponential Estimates Consequences} it follows that
\begin{align}
\sum_{|\alpha|\leq 2} |\partial^{\alpha} (u_{1} - u_{2})(x)| + |(\phi_{1} - \phi_{2})(x)| \leq C Z e^{- \gamma |x|}, \label{eq: My TFW screening est}
\end{align}

We now compare (\ref{eq: My TFW screening est}) with existing results from the TF and TFW theories of screening. These models consider the formal linear response $(n,V)$ of the electron density and potential to a nuclear impurity at the origin, modelled by the Dirac distribution $Z \delta_{0}$, in a uniform electron gas. In both models, $V$ satisfies the linear equation
\begin{align*}
- \Delta V = 4 \pi [ n + Z \delta_{0} ],
\end{align*}
while $n$ solves either the linearised TF or TFW equations. In the TF model, $V$ and $n$ are shown to satisfy \cite[Page 112]{Kittel}, \cite[Page 342]{Ashcroft}
\begin{align}
V(x) = Z \frac{e^{-k_{s}|x|}}{|x|}, \qquad n(x) = - \frac{Z k_{s}^{2}}{4 \pi} \frac{e^{-k_{s}|x|}}{|x|}, \label{eq: TF screening est}
\end{align}
where $k_{s} > 0$ is a material-dependent constant called the inverse screening length. In the TFW model, $V$ and $n$ satisfy \cite{DonovanMarchScreening,WeizsackerCorrectionNumerics,PartiallyLinearizedTFW,StoppingPower}
\begin{align}
V(x) &= \frac{Z}{4 \alpha \beta |x|} e^{-\alpha|x|} \left( (\alpha + \beta)^{2} e^{\beta |x|} - (\alpha - \beta)^{2} e^{-\beta |x|}  \right), \nonumber \\ n(x) &= - \frac{(\alpha^{2} - \beta^{2})^{2}Z}{ \alpha \beta |x|} e^{-\alpha|x|} \left( e^{\beta |x|} - e^{-\beta |x|}  \right), \label{eq: TFW screening est}
\end{align}
where $\alpha \in \mathbb{R}, \beta \in \mathbb{C}$ satisfy $0 < |\beta| < \alpha$. The constants $\alpha, \beta$ depend on the material and the coefficient $C_{W}$, which appears in the definition of the TFW energy (\ref{eq: TFW energy def, with constants}). There is a critical value of $C_{W}$ below which $\beta > 0$ and above which $\beta$ is complex, the latter case introduces oscillations in the potential and electron density. In either case, both the TF and TFW models exhibit screening due to the presence of the exponential term appearing in (\ref{eq: TF screening est})--(\ref{eq: TFW screening est}).

The lack of a factor of the form $\smfrac{1}{|x|}$ in (\ref{eq: My TFW screening est}) can be attributed to using a smeared nuclear description for the impurity as opposed to a point description in (\ref{eq: TF screening est})--(\ref{eq: TFW screening est}). Other than this, the similarity of (\ref{eq: My TFW screening est}) to (\ref{eq: TF screening est}) suggests that the constant $\gamma$ in (\ref{eq: My TFW screening est}) may be interpreted as the inverse screening length. In this paper we show there exists $\gamma > 0$ satisfying (\ref{eq: My TFW screening est}), however we do not provide any estimates for its value.

The estimate (\ref{eq: My TFW screening est}) shows that screening occurs in the
TFW model, without any approximations made to the model and without any
restrictions on the nuclear configurations (other than (H1)--(H2)). It should
be noted that although (\ref{eq: My TFW screening est}) agrees with existing
results from the TF theory of screening, in metals often the effects of
screening are weaker. For metals, instead of an exponentially decaying
screening factor, Friedel oscillations are observed
\cite{FriedelOscillations/FermiSurfaces, MarchManyBody, FriedelOscillationsEncyclopedia}. In
this case, the screening factor behaves as $|x|^{-r} f(|x|)$, where $f:
\mathbb{R}_{\geq 0} \to [-1,1]$ is an oscillating function and the decay rate $r
> 0$ depends on the Fermi surface of the metal. The {\em generic} exponential screening factor in (\ref{eq: My TFW screening est}) demonstrates that the TFW model significantly overscreens charges. \qed
\end{remark}

\subsection{Neutrality of defects}
An immediate consequence of \Cref{Corollary - Exponential Estimates
  Consequences} is the neutrality of nuclear perturbations in the TFW
equations. This result applies to all nuclear configurations belonging to
$\mathcal{M}_{L^{2}}(M,\omega)$. In particular \Cref{Theorem - Neutrality
  Estimate}(3) strengthens the result of \cite{C/Ehr}, which requires
   $m_1-m_2 \in L^1(\R) \cap L^2(\R)$ and thus excludes typical point defects; see
 \Cref{rem:discussion-neutrality} for more details.

\begin{theorem}
\label{Theorem - Neutrality Estimate}
Let $m_{1}, m_{2} \in \mathcal{M}_{L^{2}}(M,\omega)$ and define $\rho_{12} := m_{1} - u_{1}^{2} - m_{2} + u_{2}^{2}$.
\begin{enumerate}
\item 
  If $\spt(m_{1} -m_{2}) \subset B_{R'}(0)$, or there exist $C, \wt \gamma > 0$ such that
$|(m_{1} - m_{2})(x)| \leq C e^{- \wt \gamma|x|}$,
then $\rho_{12} \in L^{1}(\R)$ and there exist $C, \gamma > 0$ such that, for all $R > 0$, 
\begin{align}
\bigg | \int_{B_{R}(0)} \rho_{12} \bigg | &\leq C e^{- \gamma R}. \label{eq: local pert BR exp est}
\end{align}

\item 
  If there exists $C, r > 0$ such that
$|(m_{1}- m_{2})(x)| \leq C (1 + |x|)^{-r}$ 
then there exists $C > 0$ such that, for all $R > 0$,
\begin{align}
\bigg | \int_{B_{R}(0)} \rho_{12} \bigg | &\leq C (1+ R)^{2 - r}. \label{eq: local pert BR alg est}
\end{align}

\item 
  If $m_{1} - m_{2} \in L^{2}(\R)$ \textnormal{(}e.g., $r > 3/2$ in \textnormal{(2)}\textnormal{)} then
  $\rho_{12} \in L^{2}(\R)$ and
\begin{align}
\lim_{\varepsilon \to 0} \frac{1}{|B_{\varepsilon}(0)|} \int_{B_{\varepsilon}(0)} \widehat{\rho}_{12}(k) \id k = 0, \label{eq: global neutrality Leb point eq}
\end{align}
where $\widehat{\rho}_{12}$ denotes the Fourier transform of $\rho_{12}$.
\end{enumerate}
\end{theorem}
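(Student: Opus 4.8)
\emph{Proof plan.}
Subtracting the Poisson equations \eqref{eq: phi inf eq} for $\phi_1$ and $\phi_2$ shows that $\psi = \phi_1 - \phi_2$ solves $-\Delta\psi = 4\pi\rho_{12}$, so $\rho_{12} = -\tfrac{1}{4\pi}\Delta\psi$; equivalently $\rho_{12} = (m_1-m_2) - w(u_1+u_2)$ with $w = u_1 - u_2$. By \Cref{Proposition - General Regularity Est}, $\|u_i\|_{L^\infty(\R)} + \|\psi\|_{H^2_{\unif}(\R)} \le C(M)$, so $\rho_{12} \in L^2_{\unif}(\R) \subset L^1_{\loc}(\R)$ (hence each $\int_{B_R(0)}\rho_{12}$ is well defined) and $|u_1+u_2| \le C$. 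Since $m_1,m_2 \in \mathcal{M}_{L^2}(M,\omega) = \mathcal{M}_{H^0}(M,\omega)$, I apply \Cref{Corollary - Exponential Estimates Consequences} with $k=0$ and $R_m = 4\pi(m_1-m_2)$ (in the compactly supported case $R_m \in L^2(\R)$, since $m_1-m_2 \in L^2_{\unif}(\R)$ has compact support). Parts (1) and (2) of that corollary then give $|w(x)| + |\psi(x)| \le Ce^{-\gamma|x|}$, respectively $|w(x)| + |\psi(x)| \le C(1+|x|)^{-r}$; combining with $|u_1+u_2|\le C$ and the decay hypothesis on $m_1-m_2$ yields the same decay for $\rho_{12}$, and in particular $\rho_{12} \in L^1(\R)$ in case (1).

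For parts (1) and (2) the plan is to transfer the decay of $\psi$ to a bound on $\int_{B_R(0)}\rho_{12}$ using a cutoff of \emph{width one}. Fix $\zeta \in C^\infty(\mathbb{R};[0,1])$ with $\zeta \equiv 1$ on $(-\infty,0]$ and $\zeta \equiv 0$ on $[1,\infty)$, and set $\theta_R(x) := \zeta(|x|-R)$; for $R \ge 1$ this lies in $C^\infty_{\rm c}(\R)$, equals $1$ on $B_R(0)$, is supported in $B_{R+1}(0)$, and satisfies $\|\Delta\theta_R\|_{L^\infty(\R)} \le C$ uniformly in $R$. Testing $-\Delta\psi = 4\pi\rho_{12}$ against $\theta_R$ gives $4\pi\int_{\R}\theta_R\,\rho_{12} = -\int_{\R}\psi\,\Delta\theta_R$. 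Since $\Delta\theta_R$ and $\theta_R - \chi_{B_R(0)}$ are supported in the annulus $A_R := B_{R+1}(0)\setminus B_R(0)$ with $|A_R| \le CR^2$,
\[
  \left| \int_{B_R(0)}\rho_{12} \right| \le \left| \int_{\R}\theta_R\,\rho_{12} \right| + \int_{A_R}|\rho_{12}| \le C R^2\left( \sup_{|x|\ge R}|\psi(x)| + \sup_{|x|\ge R}|\rho_{12}(x)| \right).
\]
In case (1) this is $\le CR^2 e^{-\gamma R} \le C' e^{-\gamma' R}$ for any $0 < \gamma' < \gamma$ (the finitely many small $R$ being absorbed into $C'$), which is \eqref{eq: local pert BR exp est}; in case (2) it is $\le CR^2(1+R)^{-r} \le C(1+R)^{2-r}$, which is \eqref{eq: local pert BR alg est}.

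For part (3), if $m_1-m_2 \in L^2(\R)$ then $R_m \in L^2(\R) = H^0(\R)$, so \Cref{Corollary - Exponential Estimates Consequences}(3) with $k=0$ gives $\psi \in H^2(\R)$; hence $\Delta\psi \in L^2(\R)$ and $\rho_{12} = -\tfrac{1}{4\pi}\Delta\psi \in L^2(\R)$. Taking Fourier transforms of $-\Delta\psi = 4\pi\rho_{12}$ yields $\widehat{\rho}_{12}(k) = c|k|^2\widehat\psi(k)$ for a.e.\ $k$, where $c$ depends only on the Fourier normalisation and $\widehat\psi \in L^2(\R)$ since $\psi \in L^2(\R)$. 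By Cauchy--Schwarz and $|B_\varepsilon(0)| = c_0\varepsilon^3$,
\[
  \left| \frac{1}{|B_\varepsilon(0)|}\int_{B_\varepsilon(0)}\widehat{\rho}_{12}(k)\,\id k \right| \le \frac{|c|\,\varepsilon^2}{|B_\varepsilon(0)|}\int_{B_\varepsilon(0)}|\widehat\psi(k)|\,\id k \le \frac{|c|\,\varepsilon^2}{|B_\varepsilon(0)|^{1/2}}\,\|\widehat\psi\|_{L^2(\R)} = C\varepsilon^{1/2}\|\widehat\psi\|_{L^2(\R)},
\]
which tends to $0$ as $\varepsilon \to 0$, giving \eqref{eq: global neutrality Leb point eq}.

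The statement is essentially a corollary of the decay estimates of \Cref{Corollary - Exponential Estimates Consequences} together with the identity $\rho_{12} = -\tfrac1{4\pi}\Delta\psi$, so I expect no genuine obstacle; the one point requiring care is to keep the cutoff $\theta_R$ of width one (a width-$R$ cutoff would lose an extra power of $R$ and spoil the exponent $2-r$ in \eqref{eq: local pert BR alg est}), and, in part (3), to exploit the vanishing factor $|k|^2$ in $\widehat\rho_{12} = c|k|^2\widehat\psi$, which is precisely what forces the small-ball average of $\widehat\rho_{12}$ to tend to $0$ rather than merely remain bounded.
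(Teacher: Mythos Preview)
Your proposal is correct and follows essentially the same approach as the paper: both test $-\Delta\psi = 4\pi\rho_{12}$ against a width-one cutoff supported in $B_{R+1}(0)$, reduce to an integral over the annulus $A_R = B_{R+1}(0)\setminus B_R(0)$, invoke \Cref{Corollary - Exponential Estimates Consequences} for the decay of $w$, $\psi$ and hence $\rho_{12}$, and in part~(3) take Fourier transforms and apply Cauchy--Schwarz to extract the $\varepsilon^{1/2}$ factor. The only cosmetic differences are that the paper bounds $\int_{A_R}(|m_1-m_2|+|w|+|\psi|)$ directly rather than via $|A_R|\cdot\sup$, and in part~(3) it splits $|\widehat\rho_{12}| = |k|^2\cdot|\widehat\rho_{12}|/|k|^2$ before Cauchy--Schwarz whereas you first bound $|k|^2 \le \varepsilon^2$ on $B_\varepsilon(0)$; both routes give the same $C\varepsilon^{1/2}\|\psi\|_{L^2}$ bound.
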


\begin{remark} \label{rem:discussion-neutrality}
  In a forthcoming article \cite{Paper2}, we construct a variational problem to
  study the response of a crystal due to a local defect, using the TFW
  energy. Arguing as in \cite{EOS}, we shall show that any minimising
  displacement decays away from the defect at the rate $|x|^{-2}$, which
  corresponds to case (2) with $r=2$. In this case (\ref{eq: local pert BR alg
    est}) only provides a uniform bound for the charge as opposed to a decay
  estimate. However, as $r > 3/2$ the global neutrality result (\ref{eq: global
    neutrality Leb point eq}) holds for the relaxed system.
  
  The neutrality estimates of \Cref{Theorem - Neutrality Estimate} strengthen
  those of \cite{C/Ehr} in the following ways. Firstly, our result considers a
  perturbation of a general nuclear arrangement as opposed to a perfect
  crystal. This allows us, in \cite{Paper2}, to consider the response of
  extended defects such as dislocations. In addition, we only require that the
  nuclear perturbation belongs to $L^{2}(\R)$, which we prove rigorously in
    \cite{Paper2}, whereas in \cite{C/Ehr} the nuclear defect is assumed to lie
  in $L^{1}(\R) \cap L^{2}(\R)$, which fails for typical point defects.  \qed
\end{remark}

\subsection{Energy locality}
We now show that the locality result, \Cref{Theorem - Exponential Est Integral
  RHS}, can be used to describe the energy contribution of each individual
nucleus. In effect, we will derive a {\em site energy potential} for the TFW
model, which has the surprising consequence that, for the study of mechanical
response, TFW can be treated as a classical short-ranged interatomic
potential. Our result gives credence to the construction of interatomic
potentials and the assumption of {\em strong locality} used in hybrid quantum
mechanics/molecular mechanics (QM/MM) simulations \cite{Csanyi}.

Let $\eta \in C^{\infty}_{\textnormal{c}}(B_{R_{0}}(0))$ be radially symmetric
and satisfy $\eta \geq 0$ and $\int_{\R} \eta = 1$ describe the charge density
of a single (smeared) nucleus, for some fixed $R_{0} > 0$. For any countable
collection of nuclear coordinates
$Y = (Y_{j})_{j \in \mathbb{N}} \in (\R)^\mathbb{N}$, let the corresponding
nuclear configuration be defined by
\begin{align}
  m_{Y}(x) = \sum_{j \in \mathbb{N}} \eta(x - Y_{j}). \label{eq: m Y ass def}
\end{align}
A natural space of nuclear coordinates, related to the $\mathcal{M}_{L^2}$
space is
\begin{align}
  \mathcal{Y}_{L^{2}}(M,\omega) := \{ \, Y \in (\R)^\mathbb{N}  \, | \, m_{Y} \in \mathcal{M}_{L^{2}}(M,\omega)  \, \}. \label{eq: Y space def}
\end{align}
This space contains many condensed phases, such as crystals containing point defects, dislocations and grain boundaries. It does not include arrangements with arbitrarily large voids such as surfaces or cracks. However, as the TFW model for surfaces has been established \cite{Blanc_Uniqueness}, it may be possible to obtain locality estimates for surfaces and cracks using the TFW model.

We remark that there exists $R' = R'(R_{0},\omega) > 0$ such that for any
$Y \subset \mathcal{Y}_{L^{2}}(M,\omega)$
\begin{align}
\bigcup_{j \in \mathbb{N}} B_{R'}(Y_j) = \R. \label{eq: B R' cover ass def}
\end{align}

%
%

For any $Y \in \mathcal{Y}_{L^{2}}(M,\omega)$ there exists a unique ground state
$(u,\phi)$ corresponding to $m = m_{Y}$. Naively, we might define the energy
  stored in a region $\Omega \subset \R$ by
\begin{align}
\int_{\Omega} |\nabla u|^{2} + \int_{\Omega} u^{10/3} + \frac{1}{2} \int_{\Omega} \left( (m-u^{2}) * \smcb \right)(m-u^{2}), \label{eq: inf E TFW energy}
\end{align}
however, difficulties arise due to the fact that $(m - u^{2}) * \smcb$ is not
well-defined.
%
Instead, we give two alternative definitions for the energy density for an infinite system:
\begin{align}
\mathcal{E}_{1}(Y;\cdot) &:= |\nabla u|^{2} + u^{10/3} + \frac{1}{2} \phi(m-u^{2}), \label{eq: TFW energy 1} \\
\mathcal{E}_{2}(Y;\cdot) &:= |\nabla u|^{2} + u^{10/3} + \frac{1}{8\pi}  |\nabla \phi|^{2},
\label{eq: TFW energy 2}
\end{align}
which both satisfy
$\mathcal{E}_{1}(Y;\cdot), \mathcal{E}_{2}(Y;\cdot) \in L^{1}_{\unif}(\R)$.

Suppose now that $\Omega \subset \R$ is a charge-neutral volume
\cite{YuTrinkleBaderVolumes}, that is, if $n$ is the unit normal to $\partial \Omega$,
then $\nabla \phi \cdot n = 0$ on $\partial \Omega$. Recalling from (\ref{eq: phi inf eq}) that
\begin{align*}
- \Delta \phi = 4 \pi ( m - u^{2} )
\end{align*}
we deduce that
\begin{align*}
\frac{1}{8 \pi} \int_{\Omega} |\nabla \phi|^{2} = \frac{1}{8\pi} \int_{\Omega} (- \Delta \phi) \phi + \int_{\partial \Omega} \phi \nabla \phi \cdot n = \frac{1}{2} \int_{\Omega} \phi (m - u^{2}),
\end{align*}
and hence
\begin{align*}
\int_{\Omega} \mathcal{E}_{1}(Y;x) \id x = \int_{\Omega} \mathcal{E}_{2}(Y;x) \id x.
\end{align*}
In particular, for finite neutral systems $m_{R_{n}} = m \cdot \chi_{B_{R_{n}}(0)},$ where $m \in \mathcal{M}_{L^{2}}(M,\omega)$ and $(u_{R_{n}},\phi_{R_{n}})$ denoting the corresponding solution, the following energies agree on $\Omega = \R$
\begin{align}
& \int_{\R} \bigg( |\nabla u_{R_{n}}|^{2} + u_{R_{n}}^{10/3} + \frac{1}{2} \left( (m_{R_{n}}-u_{R_{n}}^{2}) * \smcb \right)(m_{R_{n}}-u_{R_{n}}^{2}) \bigg) \nonumber \\ & \qquad = \int_{\R} \bigg( |\nabla u_{R_{n}}|^{2} + u_{R_{n}}^{10/3} + \frac{1}{2} \phi_{R_{n}} (m_{R_{n}}-u_{R_{n}}^{2}) \bigg) \nonumber \\ & \qquad = \int_{\R} \bigg( |\nabla u_{R_{n}}|^{2} + u_{R_{n}}^{10/3} + \frac{1}{8\pi} |\nabla \phi_{R_{n}}|^{2} \bigg). \label{eq: energy three agree}
\end{align}
We prove this claim is in \Cref{Remark - Energy Definitions are equal} in \Cref{Section 6 - Proofs}.
Thus, we have derived two energy densities, $\mathcal{E}_1, \mathcal{E}_2$,
which are meaningful and well-defined also for infinite configurations.

In order to define site energies, we require a partition of $\R$.  For each
$j \in \mathbb{N}$ let
$\varphi_{j}(Y;\cdot) \in C^1(\R), \varphi_{j}(Y;\cdot) \geq 0$ satisfying the
following conditions: there exist $C, \wt \gamma > 0$ such that for all
$Y \in \mathcal{Y}_{L^{2}}(M,\omega)$
\begin{subequations}
\label{eq: site-E-partition-conditions}
\begin{align}
  \sum_{j \in \mathbb{N}} \varphi_{j}(Y; x) &= 1, \label{eq: sum 1 eq} \\
  |\varphi_{j}(Y;x)| &\leq C e^{- \wt\gamma|x - Y_{j}|}, \label{eq: partition condition 1} 
  \quad \text{and} \\
  \left| \frac{\partial \varphi_{j}}{\partial Y_{k}}(Y;x) \right| &\leq C e^{- \wt\gamma|x - Y_{j}|} e^{- \wt\gamma|x - Y_{k}|}. \label{eq: partition condition 2}
\end{align}
\end{subequations}
We propose a canonical construction of such a partition in
  \Cref{rem:after-site-E-thm} below.

Given a family of partition functions satisfying (\ref{eq:
  site-E-partition-conditions}), we can define site energies
\begin{align}
E^{i}_{j}(Y) = \int_{\R} \mathcal{E}_{i}(Y;x) \varphi_{j}(Y;x) \id x, \label{E ell h def}
\end{align}
for $i = 1, 2$.  A consequence of Theorems \ref{Theorem - One inf pointwise
  stability estimate alt} and \ref{Theorem - Exponential Est Integral RHS} is
that $E^i_j(Y)$ are {\em local}: their dependence on the environment of nuclei
decays exponentially fast. This is made precise in the following theorem.


\begin{theorem}
\label{Theorem - Forcing Est}
Let $i \in \{ 1,2\}$, $Y \in \mathcal{Y}_{L^{2}}(M,\omega)$ and $\{ \varphi_{j} | j \in \mathbb{N} \}$ satisfy \textnormal{(\ref{eq: site-E-partition-conditions})}. Then for every $k \in \mathbb{N}$, $\partial_{Y_k} E^{i}_j$ exists and satisfies
\begin{align}
\left|\frac{\partial E^{i}_{j}(Y)}{\partial Y_{k}}\right| &\leq C e^{- \gamma |Y_{j} - Y_{k}|}, \label{eq: forcing est}
\end{align}
where $C = C(M,\omega)$, $\gamma = \gamma(M,\omega) > 0$.
\end{theorem}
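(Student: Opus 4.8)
\emph{Strategy.} The plan is to differentiate the integral defining $E^i_j$ in~\eqref{E ell h def} under the integral sign and to split the result according to whether $\partial_{Y_k}$ falls on the partition function $\varphi_j$ or on the energy density $\E_i$. The terms with the derivative on $\varphi_j$ are controlled directly by~\eqref{eq: partition condition 2}; the terms with the derivative on $\E_i$ require that the derivative of the ground state with respect to $Y_k$ be exponentially localised near $Y_k$. In both cases the final rate in $|Y_j-Y_k|$ comes from the elementary inequality: for $g\in L^1_{\unif}(\R)$ and $a,b>0$,
\[
\int_{\R}|g(x)|\,e^{-a|x-p|}e^{-b|x-q|}\,\id x \;\le\; C(a,b)\,\|g\|_{L^1_{\unif}(\R)}\,e^{-\tfrac12\min(a,b)\,|p-q|},
\]
which follows from $a|x-p|+b|x-q|\ge\tfrac12\min(a,b)|p-q|+\tfrac{a}{2}|x-p|$ together with summing $e^{-\tfrac{a}{2}|x-p|}$ over a cover of $\R$ by unit balls.

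\emph{Differentiability of the ground state and localisation of its derivative.} First I would show that $Y_k\mapsto(u,\phi)$ is differentiable with an exponentially localised derivative. Since the number of $Y_j$ lying in any unit ball is bounded in terms of $M$ (compare $\int_{B_{2R_0}(x)}m_Y$ with $\|m_Y\|_{L^2_{\unif}(\R)}$) and $\eta$ is smooth, $m_Y\in\mathcal{M}_{H^{\ell}}(C(\ell,M),\omega)$ for every $\ell$. Fix $k$ and a unit vector $e$; for $|t|\le 1$ let $Y^t$ be $Y$ with $Y_k$ replaced by $Y_k+te$, and let $(u^t,\phi^t)$ be its ground state. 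Then $m_{Y^t}-m_Y=\eta(\cdot-Y_k-te)-\eta(\cdot-Y_k)$ is supported in a fixed bounded neighbourhood of $Y_k$, satisfies $\|m_{Y^t}-m_Y\|_{H^{\ell}_{\unif}(\R)}\le C_\ell|t|$, and $m_{Y^t}$ stays in a fixed enlarged class $\mathcal{M}_{H^{\ell}}(M'',\omega'')$ with $M'',\omega''$ depending only on $M,\omega$ (and $\ell$). Applying \Cref{Theorem - Exponential Est Integral RHS} with $m_1=m_{Y^t}$, $m_2=m_Y$ and $k=1$ gives, uniformly in small $t$,
\[
\sum_{|\alpha|\le 3}|\partial^{\alpha}(u^t-u)(y)|+\sum_{|\alpha|\le 1}|\partial^{\alpha}(\phi^t-\phi)(y)|\;\le\;C\,|t|\,e^{-\gamma|y-Y_k|},\qquad y\in\R,
\]
with $C,\gamma$ depending only on $M,\omega$ (and the fixed $\eta$), while the $L^2$-weighted estimate~\eqref{eq: w and psi partial xi global est} applied with $\xi=e^{-\gamma|\cdot-Y_k|}\in H_\gamma$ bounds the difference quotients $(u^t-u)/t$, $(\phi^t-\phi)/t$ in a fixed weighted uniformly-local Sobolev space. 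Extracting a weak limit along $t_n\to 0$ (and a locally uniform limit of derivatives, using \Cref{Proposition - General Regularity Est}, \Cref{Corollary - General Est Ck version} and compact embedding), and passing to the limit in the linearised Euler--Lagrange system these quotients satisfy — whose homogeneous version has only the trivial solution in this weighted class, by the coercivity (relying on $\inf u>0$) underlying~\eqref{eq: w and psi partial xi global est} — identifies a unique limit. Hence $(u,\phi)$ is differentiable in $Y_k$ with derivatives $v_k:=\partial_{Y_k}u$, $\psi_k:=\partial_{Y_k}\phi$ obeying
\[
\sum_{|\alpha|\le 1}\bigl(|\partial^{\alpha}v_k(y)|+|\partial^{\alpha}\psi_k(y)|\bigr)\;\le\;C\,e^{-\gamma|y-Y_k|},\qquad y\in\R.
\]

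\emph{Differentiation under the integral and the bound.} With $(v_k,\psi_k)$ in hand, a dominated-convergence argument — using the uniform-in-$t$ bounds above, the uniform bound $\|\E_i(Y;\cdot)\|_{L^1_{\unif}(\R)}\le C(M,\omega)$ from \Cref{Proposition - General Regularity Est}, and the differentiability of $\varphi_j$ in $Y$ with~\eqref{eq: partition condition 1}--\eqref{eq: partition condition 2} — yields
\[
\frac{\partial E^{i}_j(Y)}{\partial Y_k}=\int_{\R}\partial_{Y_k}\!\bigl[\E_i(Y;x)\bigr]\,\varphi_j(Y;x)\,\id x+\int_{\R}\E_i(Y;x)\,\partial_{Y_k}\varphi_j(Y;x)\,\id x.
\]
In the first integral, the chain rule writes $\partial_{Y_k}[\E_i]$ (see~\eqref{eq: TFW energy 1}--\eqref{eq: TFW energy 2}) as a sum of products of a factor bounded in $L^{\infty}_{\unif}(\R)$ or $L^{2}_{\unif}(\R)$ — built from $u,\nabla u,\phi,\nabla\phi$ and controlled by \Cref{Proposition - General Regularity Est} — with a factor carrying the weight $e^{-\gamma|x-Y_k|}$: the factors $v_k,\nabla v_k,\psi_k,\nabla\psi_k$ from the previous step, and, for $i=1$, the term $\nabla\eta(\cdot-Y_k)$ coming from $\partial_{Y_k}m_Y$, which is supported in $B_{R_0}(Y_k)$. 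Hence $|\partial_{Y_k}[\E_i](Y;x)|\le g(x)\,e^{-\gamma|x-Y_k|}$ with $\|g\|_{L^1_{\unif}(\R)}\le C(M,\omega)$; combining with $|\varphi_j(Y;x)|\le Ce^{-\wt\gamma|x-Y_j|}$ from~\eqref{eq: partition condition 1} and the convolution inequality of the first paragraph ($p=Y_k$, $q=Y_j$) bounds the first integral by $Ce^{-\gamma'|Y_j-Y_k|}$. The second integral is treated identically, using $\|\E_i(Y;\cdot)\|_{L^1_{\unif}(\R)}\le C(M,\omega)$ and $|\partial_{Y_k}\varphi_j(Y;x)|\le Ce^{-\wt\gamma|x-Y_j|}e^{-\wt\gamma|x-Y_k|}$ from~\eqref{eq: partition condition 2}, again giving $Ce^{-\gamma'|Y_j-Y_k|}$. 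Relabelling $\gamma'$ as $\gamma$ establishes~\eqref{eq: forcing est}.

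\emph{Main obstacle.} The crux is the second step: boundedness and exponential localisation of the difference quotients $(u^t-u)/t$, $(\phi^t-\phi)/t$ are immediate from \Cref{Theorem - One inf pointwise stability estimate alt} and \Cref{Theorem - Exponential Est Integral RHS}, but upgrading this to genuine differentiability of $Y\mapsto(u,\phi)$ — in particular excluding distinct subsequential limits of the quotients — requires the well-posedness of the linearised TFW system in the weighted, uniformly-local class on which those theorems operate, which is exactly where the uniqueness arguments of~\cite{C/LB/L,Blanc_Uniqueness} must be revisited. After that, the remaining estimates reduce to routine bookkeeping with the regularity bounds and the convolution inequality.
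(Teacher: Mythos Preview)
Your proposal is correct and follows essentially the same route as the paper: the paper first proves a preparatory lemma (\Cref{Lemma - Linearised TFW Results}) establishing differentiability of $(u,\phi)$ in $Y_k$ with exponentially localised derivatives $(\ou,\ophi)$ --- obtained exactly as you describe, via the stability estimates applied to $(u_h,\phi_h)$ versus $(u,\phi)$, subsequential limits of the difference quotients, and uniqueness of the linearised system (for which the paper invokes \cite[Corollary 2.3]{Blanc_Uniqueness}, resolving the obstacle you flag) --- and then splits $\partial_{Y_k}E^i_j$ into the two integrals you write and bounds each by the exponential convolution inequality combined with \eqref{eq: partition condition 1}--\eqref{eq: partition condition 2}. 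A minor technical variation is that the paper derives the pointwise bound on $\nabla(\phi_h-\phi)$ via a Schauder estimate rather than by applying \Cref{Theorem - Exponential Est Integral RHS} at regularity level $k=1$ as you do; your route is a slight streamlining but not a different argument.
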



The derivative $\partial_{Y_k} E^{i}_j$ can be interpreted
as the contribution of the atom at $Y_{k}$ to the force on the nucleus at
$Y_{j}$. In addition,
we show in \Cref{sec:proofs energy locality} that these site energies generate the correct total force
\begin{align}
\sum_{j \in \mathbb{N}} \frac{\partial E^{1}_{j}(Y)}{\partial Y_{k}} = \sum_{j \in \mathbb{N}} \frac{\partial E^{2}_{j}(Y)}{\partial Y_{k}} = \int_{\R} \phi(x) \, \frac{\partial m_Y(x)}{\partial Y_k} \id x. \label{eq: Forcing 1 2 equal est}
\end{align}

\begin{remark} \label{rem:after-site-E-thm} Two further canonical requirements
  on a site energy potential are permutation and isometry (rotation and
  translation) invariance. This can be obtained as follows:
  
  If the partition $(\varphi_j)_{j \in \mathbb{N}}$ is {\em permutation
    invariant}, that is, for any bijection $P: \mathbb{N} \to \mathbb{N}$,
  $Y \circ P = (Y_{P j})_{j \in \mathbb{N}}$, we have
  \begin{align}
    \varphi_{j}(Y \circ P;x) = \varphi_{P_{j}}(Y;x)
    \qquad \forall\, j \in \mathbb{N} \quad x \in \R,
    \label{eq: Permutation Invariance def}
  \end{align}
  then so are the site energies,
  \begin{displaymath}
    E^{i}_{j}(Y \circ P) = E^{i}_{j}(Y).
  \end{displaymath}
  
  If the partition is {\em isometry invariant}, that is, for any isometry
  $A: \R \to \R$, $AY = (AY_{j})_{j \in \mathbb{N}}$, we have
  \begin{align}
    \varphi_{j}(AY;x) = \varphi_{j}(Y;A^{-1}x)
    \qquad \forall\, j \in \mathbb{N}, \quad x \in \R, \label{eq: Isometry Invariance def}
  \end{align}
  then the site energies are also isometry invariant, 
  \begin{displaymath}
    E^{i}_{j}(AY) = E^{i}_{j}(Y).
  \end{displaymath}
  Both statements are proven in \Cref{Lemma - Site Energy Invariance}.

  A canonical class of partitions satisfying \eqref{eq:
    site-E-partition-conditions} as well as \eqref{eq: Permutation Invariance
    def}, \eqref{eq: Isometry Invariance def} can be constructed as follows: Let
  $\wt \varphi \in C^{1}(\R)$, $\wt \varphi \geq 0$, be radially symmetric and
  satisfy
  \begin{align}
    &|\wt \varphi (x)| + |\nabla \wt \varphi (x)| \leq C e^{-\wt \gamma |x|}, \label{eq: wt varphi est 1} \\
    &\wt \varphi(x) \geq c > 0 \quad \text{ on } B_{R'+1}(0). \label{eq: wt varphi est 2}
  \end{align}
  For example, this holds for $\wt \varphi(x) = e^{-\wt \gamma |x|^2}$ and for
  standard mollifiers with sufficiently wide support.
  

  Then, for $j \in \mathbb{N}$, we can define
  \begin{align}
    \varphi_{j}(Y;x) = \frac{ \wt \varphi(x - Y_{j}) }{\sum_{j' \in \mathbb{N}} \wt \varphi(x - Y_{j'}) }. \label{eq: exp partition def}
  \end{align}
  It is easy to see that this class of functions are well-defined and satisfies
  all requirements. \qed
\end{remark}

\begin{remark}
Alternative constructions of energy partitions include Bader volumes and charge-neutral volumes \cite{BaderAtomsBook,YuTrinkleBaderVolumes,MarchBookBaderVolumes}.
Bader volumes partition space into regions such that the flux of the electron density on the boundary is zero, while charge-neutral volumes are defined so that each region has zero charge. The construction of these volumes is not unique, like our definition of a partition. Bader volumes were developed as a means to define atoms within molecules \cite{BaderAtomsBook}.

With this in mind, using a partition we may assign a portion of the electron density to each nucleus in the system. We refer to a nucleus paired with it's associated partition of the electron density as an effective atom. Due to the screening that occurs in the TFW model, the interaction of two effective atoms decays exponentially as the distance between the nuclei grows. In comparison, the interaction of two neutrals atoms separated by a sufficiently large distance $r$ in the TF model has been shown to decay at the rate $r^{-7}$~\cite{BrezisLiebLongRangeTF}. This suggests that due to the overscreening of the TFW model, the interaction of the effective atoms is considerably weaker than is realistic. However, for the purpose of simulating quantum systems, in particular applying the strong locality principle \cite{Csanyi}, the weak long-range interaction of the TFW model is a desirable property.
%
%
\qed
\end{remark}
\begin{remark}
The estimate shown in \Cref{Theorem - Forcing Est} is a theoretical result which can be applied to simulate defective crystals, though we do not pursue this. Locality estimates have been established for the tight-binding model and subsequently used to construct QM/MM hybrid methods~\cite{Huajie, Huajie2}. \qed
\end{remark}

\section{Conclusion and Outlook}
\label{Section 5 - Conclusion and Outlook}
The two main results of this work, Theorems \ref{Theorem - One inf pointwise
  stability estimate alt} and \ref{Theorem - Exponential Est Integral RHS}, are
stability and exponential locality estimates for the TFW model, which apply to
general condensed nuclear configurations.


We have demonstrated in \Cref{Section 4 - Applications} that it can be used to
extend and strengthen a range of existing results on the TFW model.  A particular
strength of our results is that they apply to general nuclear configuration in
$\mathcal{M}_{L^{2}}(M,\omega)$, whereas the previous analyses of the TFW model
have focused on (near-)crystalline arrangements or the homogeneous electron
gas. This generality will be valuable when exploring the consequences of our
analysis for studying models for the mechanical response problem in
\cite{Paper2}, where we generalise \cite{EOS} to electronic structure models.


A further application, that we will develop in a forthcoming work is a study of
the Yukawa potential as a model approximation \cite{YukawaPaper}. Adapting Theorems \ref{Theorem - One inf pointwise stability estimate alt} and \ref{Theorem - Exponential Est Integral RHS}
we can consider the difference between the Coulomb and Yukawa ground states for
a given nuclear configuration and prove uniform error estimates in terms of the
screening parameter in the Yukawa model.

Two key difficulties in the analysis of electronic structure models are (i) 
the exchange and correlation of electrons due to the antisymmetry of the electronic wavefunction; and (ii) the interaction of charged particles (positive nuclei and
negative electrons) via the long-range Coulomb potential. Since the TFW model is
orbital-free it does not account for (i), however it fully incorporates Coulomb
interaction. In this regard it is perhaps surprising that the TFW model satisfies
the extremely generic locality property we obtained in Theorems \ref{Theorem - One inf pointwise stability estimate alt} and \ref{Theorem - Exponential Est Integral RHS}.

The Hartree--Fock and Kohn-Sham models take both effects into account and
whether these models permit a similarly strong notion of locality is an open
problem.  It is clear, however, that such results cannot be obtained in the
generality that we considered in the present paper. Since charged defects exist
in the reduced Hartree--Fock model \cite{Cances/Lewin_RHF} and as locality
implies neutrality, this suggests that a locality property cannot hold for
general condensed phase arrangements in the reduced Hartree--Fock model, which
is the simplest model in the Hartree--Fock/Kohn--Sham class.

\section{Proofs}
\label{Section 6 - Proofs}
This section contains the proofs of the main results. Proofs of results in Sections \ref{Subsection - Uniform Regularity Estimates}, \ref{Subsection - Pointwise Stability Estimates} and \ref{Section 4 - Applications} are found in Sections \ref{Subsection - Proofs of Uniform Regularity Estimates}, \ref{Subsection - Proofs of Pointwise Stability Estimates} and \ref{Subsection - Proofs of Applications} respectively.

The following is a preliminary result used in the construction of the space
$\mathcal{M}_{L^2}(M,\omega).$

\begin{lemma}
\label{Lemma - H2 equiv lemma}
Suppose $m : \R \to \mathbb{R}_{\geq 0}$ and $m \in L^{1}_{\loc}(\R)$, then
\textnormal{(H2)} is equivalent to the following statement: there exist $\omega_{0}, \omega_{1} > 0$ such that for all $R > 0$
\begin{align}
\inf_{x \in \R} \int_{B_{R}(x)} m(z) \id z \geq \omega_{0} R^{3} - \omega_{1}. \label{eq: H2 equiv any omega}
\end{align}
\end{lemma}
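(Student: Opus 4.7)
The plan is to prove the two implications separately. Only one direction is substantive; the other is immediate.

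For the easy direction $(\ref{eq: H2 equiv any omega}) \Rightarrow$ (H2): dividing the lower bound by $R$ gives $\frac{1}{R} \inf_x \int_{B_R(x)} m \geq \omega_0 R^2 - \omega_1/R$, which tends to $+\infty$ as $R \to \infty$.

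For (H2) $\Rightarrow (\ref{eq: H2 equiv any omega})$, the idea is a packing argument that upgrades a merely linear-in-$R$ lower bound on $f(R) := \inf_{x \in \R} \int_{B_R(x)} m$ to a cubic-in-$R$ lower bound, by exploiting that a large ball contains on the order of $(R/R_0)^3$ disjoint small balls. Concretely, I would fix any $L > 0$ and use (H2) to choose $R_0 > 0$ such that $f(R) \geq L R$ for all $R \geq R_0$. Then, for any $R \geq 2 R_0$ and any $x \in \R$, I would choose a maximal collection of points $\{y_i\} \subset B_{R - R_0}(x)$ with pairwise distances $\geq 2 R_0$ (for example by intersecting a cubic lattice of side $2 R_0$ with $B_{R - R_0}(x)$). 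A standard volume comparison yields a constant $c > 0$, independent of $R$ and $R_0$, such that the cardinality $N$ of this collection satisfies $N \geq c (R/R_0)^3$. Since the balls $B_{R_0}(y_i)$ are pairwise disjoint and all contained in $B_R(x)$, and each carries mass at least $f(R_0) \geq L R_0$, I obtain
\begin{equation*}
\int_{B_R(x)} m \;\geq\; \sum_i \int_{B_{R_0}(y_i)} m \;\geq\; N f(R_0) \;\geq\; \frac{c L}{R_0^2} R^3.
\end{equation*}
Taking the infimum over $x$ gives $f(R) \geq \omega_0 R^3$ for all $R \geq 2 R_0$, where $\omega_0 := c L / R_0^2$.

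To conclude, I must also handle the small-$R$ regime $R \in (0, 2R_0)$, where only the trivial bound $f(R) \geq 0$ is available. This is absorbed into the constant $\omega_1$: it suffices to pick $\omega_1 \geq \omega_0 (2R_0)^3 = 8 c L R_0$, so that $0 \geq \omega_0 R^3 - \omega_1$ holds for all $R \leq 2 R_0$. With these choices of $\omega_0, \omega_1 > 0$, the bound $(\ref{eq: H2 equiv any omega})$ holds for every $R > 0$. The only step requiring any care is the lattice packing estimate $N \geq c (R/R_0)^3$, which is routine; the rest of the argument is bookkeeping of constants.
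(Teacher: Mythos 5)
Your proof is correct, and its core idea coincides with the paper's: (H2) provides a uniform positive lower bound on the mass in every ball of some fixed radius, and one upgrades this to a cubic-in-$R$ bound by fitting $\Theta((R/R_0)^3)$ disjoint small balls into $B_R(x)$, then absorbs the small-$R$ regime into $\omega_1$. Where you diverge is the implementation of the packing count. The paper inscribes a cube $Q_{kR_1}(x)\subset\overline{B_{kR_2}(x)}$ (with $R_2=\sqrt 3 R_1$), tiles it exactly by $\lfloor k\rfloor^3$ translates of $Q_{R_1}$, and uses that each small cube contains a ball $B_{R_1}$; this gives explicit closed-form constants and needs no covering lemma. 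You instead take a maximal $2R_0$-separated set $\{y_i\}\subset B_{R-R_0}(x)$ and bound $N$ from below via the covering estimate $N\,|B_{2R_0}|\geq|B_{R-R_0}(x)|$ (maximality forces $\bigcup_i B_{2R_0}(y_i)\supset B_{R-R_0}(x)$), which yields $N\geq (R/(4R_0))^3$ for $R\geq 2R_0$, i.e.\ $c=2^{-6}$. Both are routine; the cube decomposition is slightly more concrete, while yours stays entirely in the language of balls and generalises more readily if the Euclidean ball is replaced by another metric ball. One small caveat on your parenthetical ``intersecting a cubic lattice of side $2R_0$'': for $R$ just above $2R_0$ the ball $B_{R-R_0}(x)$ contains only the lattice point $x$ itself, so the ratio $N/(R/R_0)^3$ dips well below any fixed $c$; the maximal-separated-set covering argument you also mention is the clean way to obtain a uniform constant, and switching to it is cosmetic and does not affect the correctness of your conclusion.
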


\begin{proof}[Proof of \Cref{Lemma - H2 equiv lemma}]
Clearly, (\ref{eq: H2 equiv any omega}) implies (H2), so suppose $m$ satisfies (H2), then there exists $R_{1} > 0$ such that
\begin{align*}
\inf_{x \in \R} \int_{B_{R_{1}}(x)} m(z) \id z \geq 1.  
\end{align*} 
For $R > 0$ and $x' \in \R$, let $Q_{R}(x') \subset \R$ denote the cube of side length $2R$ centred at $x'$, which contains $B_{R}(x')$. Also, let $R_{2} = \sqrt{3} R_{1}$, which ensures that $\overline{B_{R_{2}}(x)} \supset Q_{R_{1}}(x)$ for all $x \in \R$. Further, let $R \geq R_{2}$, hence $R = k R_{2}$, for some $k \geq 1$. Then
\begin{align}
\inf_{x \in \R} \int_{B_{R}(x)} m(z) \id z &= \inf_{x \in \R} \int_{B_{kR_{2}}(x)} m(z) \id z \geq \inf_{x \in \R} \int_{Q_{kR_{1}}(x)} m(z) \id z \nonumber \\ &\geq \lfloor k \rfloor^{3} \inf_{x' \in \R} \int_{Q_{R_{1}}(x')} m(z) \id z \geq \lfloor k \rfloor^{3} \nonumber 
  \\
&\geq \left( \frac{k}{2} \right)^{3} 
  = \frac{R^{3}}{8 R_{2}^{3}} =: \omega_{0} R^{3}. \label{eq: omega0 est}
\end{align} 
Now define $\omega_{1} := \omega_{0} R_{2}^{3} \geq 0$, then it follows from (\ref{eq: omega0 est}) that (\ref{eq: H2 equiv any omega}) holds for all $R > 0$. \qedhere 
\end{proof}


\subsection{Proofs of Uniform Regularity Estimates}
\label{Subsection - Proofs of Uniform Regularity Estimates}
The following lemma features in the proofs of both the existence and uniqueness of the TFW equations and is found in \cite{C/LB/L}.
\begin{lemma}
\label{Lemma - Eigenvalue Argument}
Let $a \in H^{1}_{\loc}(\R) \cap L^{\infty}(\R)$, then define the elliptic operator $L = - \Delta + a$. Suppose that there exists $u \in H^{1}_{\loc}(\R)$ satisfying $u > 0$ and $L u = 0$ in distribution. Then, the operator $L$ is non-negative, that is for all $\varphi \in H^{1}(\R)$
\begin{align}
\langle \varphi, L \varphi \rangle \geq 0. \label{eq: L eig non-neg result}
\end{align}
\end{lemma}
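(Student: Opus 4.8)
The statement is a standard positivity result for Schr\"{o}dinger-type operators with a positive generalised eigenfunction, and the plan is to prove it by a variational/ground-state argument. First I would reduce to compactly supported test functions: since $C^\infty_{\rm c}(\R)$ is dense in $H^1(\R)$ and the quadratic form $\varphi \mapsto \langle \varphi, L\varphi\rangle = \int |\nabla\varphi|^2 + \int a\varphi^2$ is continuous on $H^1(\R)$ (using $a \in L^\infty(\R)$), it suffices to establish \eqref{eq: L eig non-neg result} for $\varphi \in C^\infty_{\rm c}(\R)$. The key algebraic device is the substitution $\varphi = u\, \zeta$, where $\zeta := \varphi / u$; this is legitimate because $u > 0$ and $u \in H^1_{\loc}(\R) \cap L^\infty(\R)$ (the latter inclusion, and the needed local regularity $\zeta \in H^1_{\loc}$, follows from elliptic regularity for $Lu = 0$ with bounded coefficient $a$, but for the bilinear identity it is enough to work with $\varphi$ smooth and compactly supported and $u$ bounded below on the support by a positive constant).

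Next I would carry out the ground-state substitution computation. Expanding,
\begin{align*}
  |\nabla\varphi|^2 = |\nabla(u\zeta)|^2 = \zeta^2 |\nabla u|^2 + 2 u\zeta\, \nabla u \cdot \nabla \zeta + u^2 |\nabla\zeta|^2,
\end{align*}
so that
\begin{align*}
  \langle \varphi, L\varphi\rangle
  = \int \Big( \zeta^2 |\nabla u|^2 + 2 u\zeta\, \nabla u \cdot \nabla\zeta + u^2|\nabla\zeta|^2 + a\, u^2 \zeta^2 \Big).
\end{align*}
The middle term is handled by noting $2u\zeta\,\nabla u \cdot \nabla\zeta = \nabla u \cdot \nabla(u\zeta^2) - \zeta^2 |\nabla u|^2$, whence
\begin{align*}
  \langle \varphi, L\varphi\rangle
  = \int u^2 |\nabla\zeta|^2 + \int \nabla u \cdot \nabla(u\zeta^2) + \int a\, u^2\zeta^2.
\end{align*}
Now $u\zeta^2 = \varphi^2/u \in H^1(\R)$ with compact support (again using $u$ bounded below on $\spt\varphi$), so it is an admissible test function in the distributional equation $-\Delta u + a u = 0$, giving $\int \nabla u \cdot \nabla(u\zeta^2) = -\int a\, u\,(u\zeta^2) = -\int a\, u^2\zeta^2$. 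The last two terms therefore cancel exactly, leaving
\begin{align*}
  \langle \varphi, L\varphi\rangle = \int u^2 |\nabla\zeta|^2 \geq 0,
\end{align*}
which is the claim.

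\textbf{Main obstacle.} The computation above is essentially routine once the substitution is set up; the only genuine point requiring care is the \emph{regularity justification} — verifying that $\zeta = \varphi/u$ and $u\zeta^2 = \varphi^2/u$ genuinely lie in $H^1$ with compact support and that the distributional equation $Lu=0$ may be tested against $u\zeta^2$. This needs $u$ to be (locally) bounded away from zero, which one gets by combining $u > 0$ with local boundedness/continuity of $u$ via De Giorgi--Nash--Moser regularity (valid since $a \in L^\infty_{\loc}$), so that on the compact set $\spt\varphi$ one has $u \geq c > 0$. An alternative route that sidesteps some of this, and which I would mention as a remark, is to approximate $u$ by $u_\epsilon := u + \epsilon$, run the identity with $\zeta_\epsilon = \varphi/u_\epsilon$ (for which $L u_\epsilon = a\epsilon$ distributionally), obtaining $\langle\varphi,L\varphi\rangle = \int u_\epsilon^2 |\nabla\zeta_\epsilon|^2 + \epsilon\int a u_\epsilon \zeta_\epsilon^2 \geq -\epsilon \|a\|_{L^\infty}\int \varphi^2 (u_\epsilon/u_\epsilon) $, wait — more cleanly, the correction term is $\epsilon \int a\, u_\epsilon\, \zeta_\epsilon^2 = \epsilon\int a\, \varphi^2/u_\epsilon \to 0$ as $\epsilon \to 0$ since $\varphi^2/u_\epsilon \leq \varphi^2/\epsilon$ is not uniformly bounded — so one still needs the lower bound on $u$. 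Hence the regularity step is unavoidable and is the crux; everything else is bookkeeping.
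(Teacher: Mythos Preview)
Your proof is correct and uses a genuinely different route from the paper. The paper argues spectrally: for each ball $\Omega = B_R(0)$ it considers the Dirichlet realisation of $L$, invokes \cite[Theorem~8.38]{Gilbarg/Trudinger} to obtain a simple lowest eigenvalue $\lambda_1(\Omega)$ with positive eigenfunction $v_\Omega$, tests the eigenvalue equation against $u$, and uses $\partial v_\Omega/\partial n \leq 0$ on $\partial\Omega$ together with $u>0$ to conclude $\lambda_1(\Omega)\geq 0$; density then gives the result on $H^1(\R)$.

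Your approach is the ground-state (Picone/Jacobi) substitution, which is more elementary in that it avoids spectral theory entirely and yields the sharper identity $\langle \varphi, L\varphi\rangle = \int u^2 |\nabla(\varphi/u)|^2$. The price is exactly the regularity step you flag: you must know that $u$ is continuous (hence bounded below on $\spt\varphi$) so that $\varphi/u$ and $\varphi^2/u$ lie in $H^1$ with compact support and can be used to test $Lu=0$. Under the stated hypotheses this is fine --- with $a\in L^\infty$ one has $-\Delta u = -au \in L^2_{\loc}$, so $u\in H^2_{\loc}\hookrightarrow C^{0,1/2}_{\loc}$ --- but note that you only get (and only need) $u\in L^\infty_{\loc}$, not the global $L^\infty(\R)$ you wrote in passing. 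The paper's spectral argument sidesteps this by putting the regularity burden on $v_\Omega$ instead of on $u$; conversely, your argument is shorter and gives quantitative information the eigenvalue route does not.
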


The proof is shown in \cite{C/LB/L} but is included here for completeness.
\begin{proof}[Proof of \Cref{Lemma - Eigenvalue Argument}]
Let $R > 0$ and define $\Omega = B_{R}(0)$ and consider $L$ as an operator on $L^{2}(\Omega)$ with domain $H^{2}(\Omega) \cap H^{1}_{0}(\Omega)$. Then $L$ is a self-adjoint operator with compact resolvent hence has a purely discrete spectrum. Since $a \in H^{1}_{\loc}(\R)$ it follows that the smallest eigenvalue $\lambda_{1}(\Omega)$ is simple and has a positive eigenfunction $v_{\Omega} \in H^{1}_{0}(\Omega)$ \cite[Theorem 8.38]{Gilbarg/Trudinger}. In addition, by standard elliptic regularity $v_{\Omega} \in H^{3}(\Omega) \hookrightarrow C^{1,1/2}(\overline{\Omega})$ \cite{Evans} and solves
\begin{align*}
\left ( - \Delta + a \right) v_{\Omega} = \lambda_{1}(\Omega) v_{\Omega}.
\end{align*}
Testing this equation with $u$ and using integration by parts, we obtain
\begin{align}
- \int_{\partial \Omega} \frac{\partial v_{\Omega}}{\partial n} u = \lambda_{1}(\Omega) \int_{\Omega} v_{\Omega} u. \label{eq: Eigenvalue integral eq}
\end{align}
As $v_{\Omega} > 0$ on $\Omega$ and $v_{\Omega}$ vanishes over $\partial \Omega$, it follows that $\frac{\partial v_{\Omega}}{\partial n} \leq 0$. It follows that the left-hand side of (\ref{eq: Eigenvalue integral eq}) is non-negative, hence $\lambda_{1}(\Omega) \geq 0$. As this holds for $\Omega = B_{R}(0)$, for any $R > 0$, we deduce that for all $\varphi \in C^{1}_{c}(\R)$ $\langle \varphi, L \varphi \rangle \geq 0.$ Using that $a \in L^{\infty}(\R)$ and the density of $C^{1}_{c}(\R)$ in $H^{1}(\R)$, it follows that for all $\varphi \in H^{1}(\R)$ $\langle \varphi, L \varphi \rangle \geq 0.$
\end{proof}

We now show uniform estimates for finite systems corresponding to truncated
nuclear distributions. This result is essentially \cite[Proposition
3.5]{C/LB/L}, however as we require uniform regularity estimates, we provide a
complete proof.

\begin{proposition}
\label{Proposition - Finite Regularity Est}
Let $m : \mathbb{R} \to \mathbb{R}_{\geq 0}$ satisfy 
\begin{align}
\| m \|_{L^{2}_{\unif}(\R)} \leq M, \label{eq: m L2 est}
\end{align}
and $R_{n} \uparrow \infty$, then define the truncated nuclear distribution $m_{R_{n}} = m \cdot \chi_{B_{R_{n}}(0)}$. 
The unique solution to the minimisation problem
\begin{align*}
I^{\TFW}(m_{R_{n}}) = \inf \left\{ \, E^{\TFW}(v,m_{R_{n}})  \, \bigg| \, v \in H^{1}(\R), v \geq 0, \int_{\R} v^{2} = \int_{\R} m_{R_{n}} 
\, \right \},
\end{align*}
yields a unique solution $(u_{R_{n}},\phi_{R_{n}})$ to \textnormal{(\ref{eq: u phi Rn pair})}
\begin{subequations}
\begin{align*}
- \Delta u_{R_{n}} &+ \frac{5}{3} u_{R_{n}}^{7/3} - \phi_{R_{n}} u_{R_{n}} = 0, \\
- \Delta \phi_{R_{n}} &= 4 \pi ( m_{R_{n}} - u_{R_{n}}^{2} ). 
\end{align*}
\end{subequations}
which satisfy the following estimates, with constant $C$ independent of $R_{n}$:
\begin{align}
\| u_{R_{n}} \|_{H^{4}_{\unif}(\R)} &\leq C(1 + M^{15/4}), \label{eq: uRn H4 unif est} \\
\| \phi_{R_{n}} \|_{H^{2}_{\unif}(\R)} &\leq C (1 + M^{3/2} ). \label{eq: phiRn H2 unif est}
\end{align}
\end{proposition}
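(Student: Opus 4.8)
The plan is to follow the existence proof of \Cref{Theorem - C/LB/L Main Result} from \cite{C/LB/L}, keeping track of the dependence of every constant on $M = \| m \|_{L^{2}_{\unif}(\R)}$, and then to run an elliptic bootstrap to climb from $H^{1}_{\unif}\times L^{2}_{\unif}$ regularity up to $H^{4}_{\unif}\times H^{2}_{\unif}$.

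First, the existence and uniqueness of the minimiser $u_{R_n}$, and the fact that, with $\phi_{R_n}$ normalised so as to absorb the Lagrange multiplier associated with the neutrality constraint, the pair $(u_{R_n},\phi_{R_n})$ solves \eqref{eq: u phi Rn pair}, are classical \cite{C/LB/L}: since $m_{R_n}\in L^{1}(\R)\cap L^{6/5}(\R)$ the functional $v\mapsto E^{\TFW}(v,m_{R_n})$ is strictly convex in the density $\rho=v^{2}$ (the \Weizsacker\ term is convex in $\rho$, the term $\int \rho^{5/3}$ is strictly convex, and $D(m_{R_n}-\rho,m_{R_n}-\rho)$ is convex in $\rho$), so the minimiser is unique, and one may take $u_{R_n}\geq 0$. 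At this stage one already has the $R_n$-uniform bound $\| u_{R_n}\|_{H^{1}_{\unif}(\R)}+\|\phi_{R_n}\|_{L^{2}_{\unif}(\R)}\leq C$ of \cite[Corollary 2.7, Theorem 6.10]{C/LB/L}; see also \eqref{eq: uRn phiRn uniform ests}. The new point here is to re-examine that argument and verify that $C$ can be taken polynomial in $M$: the essential mechanism in \cite{C/LB/L} is a local energy-comparison estimate — if the energy density of $u_{R_n}$ were too large on some unit ball one could lower $E^{\TFW}$ by replacing $u_{R_n}$ there with a mollified square root of the local nuclear mass, contradicting minimality — and one checks that every constant it produces is controlled by powers of $\| m \|_{L^{2}_{\unif}(\R)}$ alone.

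Next comes the bootstrap. From $u_{R_n}\in H^{1}_{\unif}(\R)\hookrightarrow L^{6}_{\unif}(\R)$ we get $u_{R_n}^{2}\in L^{3}_{\unif}(\R)\hookrightarrow L^{2}_{\unif}(\R)$, so combining the Poisson equation $-\Delta\phi_{R_n}=4\pi(m_{R_n}-u_{R_n}^{2})$ with interior $H^{2}$ estimates \cite{Gilbarg/Trudinger} applied on unit balls (the $L^{2}_{\unif}$ bound on $\phi_{R_n}$ controlling the zeroth-order term) yields $\phi_{R_n}\in H^{2}_{\unif}(\R)\hookrightarrow L^{\infty}(\R)$. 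The crucial nonlinear step is a truncation argument in the first equation of \eqref{eq: u phi Rn pair}: testing with $(u_{R_n}-t)_{+}\in H^{1}(\R)$ and using $u_{R_n}^{7/3}\geq t^{4/3}u_{R_n}$ on $\{u_{R_n}>t\}$ shows that $\tfrac{5}{3}t^{4/3}\leq \|\phi_{R_n}\|_{L^{\infty}(\R)}$ whenever the super-level set has positive measure, whence
\[
\| u_{R_n}\|_{L^{\infty}(\R)} \leq \big(\tfrac{3}{5}\|\phi_{R_n}\|_{L^{\infty}(\R)}\big)^{3/4}.
\]
With $u_{R_n}\in L^{\infty}(\R)$ in hand, $m_{R_n}-u_{R_n}^{2}\in L^{2}_{\unif}(\R)$ with a norm controlled by $M$ and $\| u_{R_n}\|_{\infty}$, so re-running the Poisson estimate gives $\|\phi_{R_n}\|_{H^{2}_{\unif}(\R)}\leq C(1+M^{3/2})$. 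Finally, a standard iteration upgrades $u_{R_n}$: the right-hand side $\phi_{R_n}u_{R_n}-\tfrac{5}{3}u_{R_n}^{7/3}$ lies in $L^{2}_{\unif}(\R)$, then, after differentiating, in $H^{1}_{\unif}(\R)$, and so on; using $\Delta\phi_{R_n}=4\pi(u_{R_n}^{2}-m_{R_n})$ to rewrite every second derivative of $\phi_{R_n}$ means no derivatives of $m$ are ever needed, and two further rounds of elliptic regularity give $u_{R_n}\in H^{4}_{\unif}(\R)$. Propagating the powers of $M$ through this finite chain produces the exponent $15/4$.

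\textbf{Main obstacle.} The genuinely delicate point is the preliminary step: bounds that are simultaneously uniform in $R_n$ and explicit in $M$. Because the Coulomb energy is nonlocal and extensive, naive estimates on the energy, or on the potential $(m_{R_n}-u_{R_n}^{2})*\smcb$, grow with $R_n$; controlling this requires the local-minimality (screening) argument of \cite{C/LB/L}, and the additional work is to check that each of its constants is polynomial in $M$. The remainder — the elliptic bootstrap together with the truncation bound above — is routine, the only care being the bookkeeping that yields the sharp powers $15/4$ and $3/2$.
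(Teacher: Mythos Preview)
Your proposal takes a genuinely different route from the paper, and the part you correctly identify as the ``main obstacle'' is exactly where the proof is incomplete.

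The paper does \emph{not} start from the preliminary bound \eqref{eq: uRn phiRn uniform ests} and re-examine its $M$-dependence. Instead it obtains $L^{\infty}$ control on $(u_{R_n},\phi_{R_n})$ directly, with fully explicit constants, via two ingredients you do not use: (i) Solovej's universal estimate $\tfrac{10}{9}u_{R_n}^{4/3}\le \phi_{R_n}+C_S$ with $C_S$ independent of the nuclear configuration, which immediately gives the lower bound $\phi_{R_n}\ge -C_S$; and (ii) the non-negativity of the operator $-\Delta+\tfrac{5}{3}u_{R_n}^{4/3}-\phi_{R_n}$ (\Cref{Lemma - Eigenvalue Argument}), which after testing against a fixed bump $\varphi^{2}$ and using Jensen yields a subsolution inequality for $\phi_{R_n}*\varphi^{2}$ and hence, by a maximum-principle/mean-value argument, $\|\phi_{R_n}^{+}\|_{L^{\infty}}\le C(1+M)$. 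Combining (i) and (ii) gives $\|u_{R_n}\|_{L^{\infty}}^{4/3}+\|\phi_{R_n}\|_{L^{\infty}}\le C(1+M)$, and from this single $L^{\infty}$ bound the bootstrap to $H^{4}_{\unif}\times H^{2}_{\unif}$ with the stated exponents is immediate.

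Your truncation argument $\|u_{R_n}\|_{L^{\infty}}\le (\tfrac{3}{5}\|\phi_{R_n}\|_{L^{\infty}})^{3/4}$ is a nice substitute for the Solovej bound, but it runs in the wrong direction for your scheme: it requires $\|\phi_{R_n}\|_{L^{\infty}}$ as input, which you obtain from $\|\phi_{R_n}\|_{H^{2}_{\unif}}$, which in turn needs $\|\phi_{R_n}\|_{L^{2}_{\unif}}$. That last bound is precisely the preliminary estimate you only assert can be extracted from \cite{C/LB/L} with polynomial $M$-dependence; you sketch a local energy-comparison heuristic but do not carry it out, and without it the chain does not close. Even granting this step, the exponents $15/4$ and $3/2$ would depend on whatever powers of $M$ emerge from that re-examination, so your final sentence (``propagating the powers \ldots\ produces the exponent $15/4$'') is not justified. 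The paper's approach sidesteps all of this by never appealing to \eqref{eq: uRn phiRn uniform ests} and going straight to $L^{\infty}$ via the Solovej and operator-positivity mechanisms.
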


\begin{proof}[Proof of \Cref{Proposition - Finite Regularity Est}]
If $m \equiv 0$, then for all $R_{n}$, clearly $u_{R_{n}} = \phi_{R_{n}} = m_{R_{n}} = 0$ satisfies (\ref{eq: u phi Rn pair}) and (\ref{eq: uRn H4 unif est})--(\ref{eq: phiRn H2 unif est}).

If $m \not\equiv 0$, then there exists a constant $R_{0} \geq 0$ such
that $R_{n} \geq R_{0}$ ensures that $\int_{\R} m_{R_{n}} > 0$. Recall
\begin{align*}
E^{\TFW}_{R_{n}} (v,m_{R_{n}}) &= \int |\nabla v|^{2} + \int v^{10/3} + \frac{1}{2} D( m_{R_{n}} - v^{2}, m_{R_{n}} - v^{2}).
\end{align*}
For each $R_{n}$, consider the minimisation problem
\begin{align*}
I^{\TFW}(m_{R_{n}}) = \inf \left\{ \, E^{\TFW}(v,m_{R_{n}})  \, \bigg| \, v \in H^{1}(\R), v \geq 0, \int_{\R} v^{2} = \int_{\R} m_{R_{n}} > 0 \, \right \}.
\end{align*}
The constraint $\int_{\R} v^{2} = \int_{\R} m_{R_{n}}$ ensures the system is charge neutral, and by \cite[Theorem 7.19]{Lieb_Summary} there exists a unique non-negative minimiser $u_{R_{n}} \in H^{1}(\R)$ to $I^{\TFW}(m_{R_{n}})$ solving
\begin{align}
- \Delta u_{R_{n}} + \frac{5}{3} u_{R_{n}}^{7/3} &- \left( (m_{R_{n}} - u_{R_{n}}^{2}) * \smcb \right) u_{R_{n}} = - \theta_{R_{n}} u_{R_{n}}, \label{eq: un initial eq} \\
&\int_{\R} u_{R_{n}}^{2} = \int_{\R} m_{R_{n}} > 0. \label{eq: un charge constraint eq}
\end{align}
Here $\theta_{R_{n}} > 0$ is the Lagrange multiplier associated with the charge constraint (\ref{eq: un charge constraint eq}) \cite{Lieb_Summary,C/LB/L}. Define $\phi_{R_{n}} : \R \to \mathbb{R}$ by
\begin{align}
\phi_{R_{n}} = \left( (m_{R_{n}} - u_{R_{n}}^{2}) * \smcb \right) - \theta_{R_{n}}, \label{eq: phi n definition}
\end{align}
so we can express (\ref{eq: un initial eq}) as the \Schrodinger--Poisson system (\ref{eq: u phi Rn pair})
\begin{align*}
- \Delta u_{R_{n}} &+ \frac{5}{3} u_{R_{n}}^{7/3} - \phi_{R_{n}} u_{R_{n}} = 0, \\
- \Delta \phi_{R_{n}} &= 4 \pi ( m_{R_{n}} - u_{R_{n}}^{2} ).
\end{align*}
Decompose
\begin{align*}
(m_{R_{n}} - u_{R_{n}}^{2}) * \smcb &= (m_{R_{n}} - u_{R_{n}}^{2}) * \left( \smcb \chi_{B_{1}(0)} \right) + (m_{R_{n}} - u_{R_{n}}^{2}) * \left( \smcb \chi_{B_{1}(0)^{\rm c}} \right),
\end{align*}
then as $u_{R_{n}} \in H^{1}(\R) \hookrightarrow L^{6}(\R)$ and $m \in L^{2}_{\unif}(\R)$  applying Young's inequality gives
\begin{align*}
\left \| (m_{R_{n}} - u_{R_{n}}^{2}) * \smcb \right \|_{L^{\infty}(\R)} \nonumber &\leq \| (m_{R_{n}} - u_{R_{n}}^{2}) \|_{L^{5/3}(\R)} \left \| \smcb \chi_{B_{1}(0)} \right \|_{L^{5/2}(\R)} \\ & \quad + \| (m_{R_{n}} - u_{R_{n}}^{2}) \|_{L^{7/5}(\R)} \left \| \smcb \chi_{B_{1}(0)^{\rm c}} \right \|_{L^{7/2}(\R)} \nonumber \\
&\leq C \left( (R_{n}^{3/10} + R_{n}^{9/14}) \| m_{R_{n}} \|_{L^{2}(\R)} + \| u_{R_{n}} \|_{H^{1}(\R)}^{2} \right) \nonumber \\
&\leq C \left( (R_{n}^{9/5} + R_{n}^{15/7}) \| m \|_{L^{2}_{\unif}(\R)} + \| u_{R_{n}} \|_{H^{1}(\R)}^{2} \right) \\
&\leq C \left( (R_{n}^{9/5} + R_{n}^{15/7}) M + \| u_{R_{n}} \|_{H^{1}(\R)}^{2} \right).
\end{align*}
By \cite[Lemma II.25]{Lieb/Simon_TF} we deduce that $(m_{R_{n}} - u_{R_{n}}^{2}) * \frac{1}{|\cdot|}$ is a continuous function vanishing at infinity. It follows that $\phi_{R_{n}} \in L^{\infty}(\R)$ and is also continuous. Also, $|\nabla \phi_{R_{n}}| \in L^{2}(\R)$
\begin{align}
\frac{1}{8\pi} \int_{\R} |\nabla \phi_{R_{n}}|^{2} 
= &\,\,\frac{1}{8\pi} \int_{\R} \phi_{R_{n}} \left( -\Delta \phi_{R_{n}} \right) \nonumber \\
= &\,\,\frac{1}{2} \int_{\R} \phi_{R_{n}} (m_{R_{n}}-u_{R_{n}}^{2}) 
  \nonumber \\
= &\,\,\frac{1}{2} \int_{\R} \phi_{R_{n}} (m_{R_{n}}-u_{R_{n}}^{2}) + \frac{\theta_{R_{n}}}{2} \int_{\R}(m_{R_{n}}-u_{R_{n}}^{2}) \nonumber \\
= &\,\, \frac{1}{2} \int_{\R} \left( \phi_{R_{n}} + \theta_{R_{n}} \right)(m_{R_{n}}-u_{R_{n}}^{2}) \nonumber \\
= &\,\,\frac{1}{2} \int_{\R} \left( (m_{R_{n}}-u_{R_{n}}^{2}) * \smcb \right)(m_{R_{n}}-u_{R_{n}}^{2}), \label{eq: energy 1 eq}
\end{align}
hence $\phi_{R_{n}} \in H^{1}_{\unif}(\R)$.
Now, consider $u_{R_{n}} \in H^{1}(\R)$, which solves
\begin{align}
- \Delta u_{R_{n}} &=  -\frac{5}{3} u_{R_{n}}^{7/3} + \phi_{R_{n}} u_{R_{n}}. \label{eq: Delta uRn = rhs eq }
\end{align}
The right-hand side can be estimated in $L^{2}(\R)$ by
\begin{align*}
\| \smfrac{5}{3} u_{R_{n}}^{7/3} - \phi_{R_{n}} u_{R_{n}}  \|_{L^{2}(\R)} &\leq
\frac{5}{3} \| u_{R_{n}}^{7/3} \|_{L^{2}(\R)} + \| \phi_{R_{n}} \|_{L^{\infty}(\R)} \| u_{R_{n}} \|_{L^{2}(\R)} \\
&\leq C \| u_{R_{n}} \|_{H^{1}(\R)}^{7/3} + \| \phi_{R_{n}} \|_{L^{\infty}(\R)} \| u_{R_{n}} \|_{H^{1}(\R)},
\end{align*}
which implies $u_{R_{n}} \in H^{2}(\R)$ as $\Delta u_{R_{n}} \in L^{2}(\R)$. By the Sobolev Embedding Theorem \cite{Evans} $u_{R_{n}} \in H^{2}(\R) \hookrightarrow C^{0,1/2}(\R)$, hence $u_{R_{n}}$ is continuous. Also, by \cite[Lemma 9]{B/B/L}, $u_{R_{n}}$ decays at infinity. We now justify this. Recall (\ref{eq: Delta uRn = rhs eq }) and since $u_{R_{n}} \geq 0$, we have $- \Delta u_{R_{n}} \leq \phi_{R_{n}} u_{R_{n}}$,
hence
\begin{align*}
- \Delta u_{R_{n}} + u_{R_{n}} \leq ( 1 + \phi_{R_{n}} ) u_{R_{n}}
\end{align*}
As $\phi_{R_{n}} \in L^{\infty}(\R)$ and $u_{R_{n}} \in H^{1}(\R)$, the right-hand side belongs to $L^{2}(\R)$ hence by the Lax-Milgram theorem there exists a unique $g_{R_{n}} \in H^{1}(\R)$ satisfying
\begin{align*}
- \Delta g_{R_{n}} + g_{R_{n}} = (1 + \phi_{R_{n}}) u_{R_{n}}.
\end{align*}
Moreover, using the Green's function $g_{R_{n}} = \smfrac{e^{-|\cdot|}}{|\cdot|} * (1 + \phi_{R_{n}}) u_{R_{n}}$ and since $\smfrac{e^{-|\cdot|}}{|\cdot|}, (1 + \phi_{R_{n}}) u_{R_{n}} \in L^{2}(\R)$ by \cite[Lemma II.25]{Lieb/Simon_TF} $g_{R_{n}}$ is continuous function that decays at infinity, hence $g_{R_{n}} \in L^{\infty}(\R)$. It follows from the comparison principle that $u_{R_{n}} \leq g_{R_{n}}$, so $u_{R_{n}} \in L^{\infty}(\R)$ and decays at infinity.

Using that $u_{R_{n}},\phi_{R_{n}} + \theta_{R_{n}}$ are continuous and decay at infinity, by arguing as in \cite{Solovej_Universality}, there exists a universal constant $C_{S} > 0$, independent of the nuclear distribution, satisfying
\begin{align}
0 &< \theta_{R_{n}} \leq C_{S}, \\
\frac{10}{9} u_{R_{n}}^{4/3} &\leq \phi_{R_{n}} + C_{S}. \label{eq: Solovej est}
\end{align}
As $u_{R_{n}} \geq 0$, from the Solovej estimate (\ref{eq: Solovej est}) we obtain the uniform lower bound
\begin{align}
\phi_{R_{n}} \geq - C_{S}. \label{eq: phi Rn lower bound}
\end{align}
We aim to show a uniform upper bound for $\phi_{R_{n}}$, which together with (\ref{eq: Solovej est}) will yield the uniform estimate
\begin{align}
\| u_{R_{n}} \|_{L^{\infty}(\R)}^{4/3} + \| \phi_{R_{n}} \|_{L^{\infty}(\R)} \leq C(M), \label{eq: u Rn phi Rn uniform L inf est}
\end{align}
which is independent of $R_{n}$. 

If $\phi_{R_{n}}$ is non-positive, then (\ref{eq: u Rn phi Rn uniform L inf est}) holds as
\begin{align*}
\| u_{R_{n}} \|_{L^{\infty}(\R)}^{4/3} + \| \phi_{R_{n}} \|_{L^{\infty}(\R)} \leq 2C_{S}.
\end{align*}
Instead, suppose that $\phi^{+}_{R_{n}}$ is non-zero at some point in $\R$. By (\ref{eq: phi n definition}) $\phi_{R_{n}}$ is a continuous function that converges to a negative limit at infinity, $\phi_{R_{n}}^{+} \in C_{c}(\R)$, hence there exists a point $x_{R_{n}} \in \R$ such that
\begin{align}
\phi_{R_{n}}^{+}(x_{R_{n}}) = \| \phi_{R_{n}}^{+} \|_{L^{\infty}(\R)} > 0. \label{eq: phi Rn sup xn = 0}
\end{align}
Without loss of generality, we assume $x_{R_{n}} = 0$.

We now show that $u_{R_{n}} > 0$ on $\R$, arguing by contradiction. Suppose that there exists $z \in \R$ such that $u_{R_{n}}(z) = 0$. Since $u_{R_{n}}$ is a non-negative, continuous function decaying at infinity, there exists $y_{n} \in \R$ such that
\begin{align*}
u_{R_{n}}(y_{n}) = \sup_{x \in \R}u_{R_{n}}(x).
\end{align*}
Let $R > |y_{n} - z|$, then by the Harnack inequality
\cite{Trudinger_MeasurableCoefficients}, we infer
\begin{align*}
0 \leq u_{R_{n}}(y_{n}) = \sup_{x \in B_{R}(y_{n})} u_{R_{n}}(x) \leq C(R) \inf_{x \in B_{R}(y_{n})} u_{R_{n}}(x) = u_{R_{n}}(z) = 0,
\end{align*}
so $u_{R_{n}} \equiv 0$. This contradicts the charge constraint (\ref{eq: un charge constraint eq}) $\int_{\R} u_{R_{n}}^{2} = \int_{\R} m_{R_{n}} > 0$, hence $u_{R_{n}} > 0$ on $\R$.

As $u_{R_{n}} \in H^{1}(\R) \cap L^{\infty}(\R), \phi_{R_{n}} \in H^{1}_{\unif}(\R) \cap L^{\infty}(\R)$ and $u_{R_{n}} > 0$, \Cref{Lemma - Eigenvalue Argument} implies that $L_{R_{n}} = - \Delta + \frac{5}{3} u_{R_{n}}^{4/3} - \phi_{R_{n}}$ is a non-negative operator.

Choose $\varphi \in C^{\infty}_{\textnormal{c}}(B_{1}(0))$ satisfying $0 \leq \varphi \leq 1$, $\varphi = 1$ on $B_{1/2}(0)$, $\int_{\R} \varphi^{2} = 1$ and $\int_{\R} |\nabla \varphi|^{2} =: c_{\varphi}$, then for $y \in \R$, define $\varphi_{y} \in C^{\infty}_{\textnormal{c}}(B_{1}(y))$ by $\varphi_{y} = \varphi(\cdot - y)$. As $L_{R_{n}}$ is non-negative (\ref{eq: L eig non-neg result}) implies
\begin{align*}
\langle \varphi_{y}, L_{R_{n}} \varphi_{y} \rangle = \int_{\R} |\nabla \varphi_{y}|^{2} + \int_{\R} \left( \frac{5}{3} u_{R_{n}}^{4/3} - \phi_{R_{n}} \right) \varphi_{y}^{2} \geq 0,
\end{align*}
which can be re-arranged and expressed using convolutions as
\begin{align}
\frac{5}{3} \left( u_{R_{n}}^{4/3} * \varphi^{2} \right) &\geq  \left( \phi_{R_{n}} * \varphi^{2} - \int_{\R} |\nabla \varphi|^{2} \right)_{+} \nonumber \\
&= \left( \phi_{R_{n}} * \varphi^{2} - c_{\varphi} \right)_{+} \label{eq: phi Rn first convolution est}
\end{align}
Observe that $\phi_{R_{n}} * \varphi^{2}$ solves
\begin{align}
- \Delta \left( \phi_{R_{n}} * \varphi^{2} \right) &= 4 \pi \left( m_{R_{n}} * \varphi^{2} - u^{2}_{R_{n}} * \varphi^{2}  \right). \label{eq: Delta phi Rn convolution est}
\end{align}
We estimate the first term using (\ref{eq: m L2 est})
\begin{align}
\left(m_{R_{n}} * \varphi^{2} \right)(x) &= \int_{B_{1}(x)} m_{R_{n}}(y) \varphi^{2}(x - y) \id y \nonumber \\ &\leq \int_{B_{1}(x)} m(y) \id y \leq C_{0} \| m \|_{L^{2}_{\unif}(\R)} \leq C_{0} M. \label{eq: m Rn convolution est}
\end{align}
For the second term, using the convexity of $t \mapsto t^{3/2}$ for $t \geq 0$ and the fact that $\int \varphi^{2} = 1$, applying Jensen's inequality and (\ref{eq: phi Rn first convolution est}) we deduce
\begin{align}
4 \pi \, u_{R_{n}}^{2} * \varphi^{2}(x) &\geq \frac{5}{3} u_{R_{n}}^{2} * \varphi^{2}(x) \nonumber \\
&= \frac{5}{3} \int_{\R} u_{R_{n}}^{2}(x-y) \varphi^{2}(y) \id y \nonumber \\
&= \frac{5}{3} \int_{\R} \left(u_{R_{n}}^{4/3}(x-y)\right)^{3/2} \varphi^{2}(y) \id y \nonumber \\
&\geq \frac{5}{3} \left( \int_{\R} u_{R_{n}}^{4/3}(x-y) \varphi^{2}(y) \id y \right)^{3/2} \nonumber \\
&= \frac{5}{3}(u^{4/3}_{R_{n}} * \varphi^{2})^{3/2} \geq \left( \phi_{R_{n}} * \varphi^{2} - c_{\varphi}  \right)^{3/2}_{+}. \label{eq: phi Rn convolution est}
\end{align}
Combining the estimates (\ref{eq: Delta phi Rn convolution est}) - (\ref{eq: phi Rn convolution est}) we conclude that
\begin{align*}
- \Delta \left( \phi_{R_{n}} * \varphi^{2} \right) + \left( \phi_{R_{n}} * \varphi^{2} - c_{\varphi} \right)^{3/2}_{+} \leq C_{0} M.
\end{align*}
 By (\ref{eq: phi n definition}), as $\phi_{R_{n}}$ is a continuous function that converges to a negative limit at infinity, $\phi_{R_{n}} * \varphi^{2}$ also shares these properties. Now consider the set
\begin{align*}
S = \{ \, x \in \R \, | \, \phi_{R_{n}} * \varphi^{2} - c_{\varphi} > 0 \, \},
\end{align*}
it follows that $S$ is open and bounded and that $\phi_{R_{n}} * \varphi^{2} - c_{\varphi} = 0$ on $\partial S$. Observe that the constant function $h = (C_{0}M)^{2/3}$ satisfies
\begin{align*}
- &\Delta h + h^{3/2}_{+} = C_{0}M \quad \text{ on } S, \\
0 = \, &\phi_{R_{n}} * \varphi^{2} - c_{\varphi} \leq h \quad \text{ in } \partial S,
\end{align*}
so by the maximum principle $\phi_{R_{n}} * \varphi^{2} \leq (c_{\varphi} + C_{0}^{2/3} M^{2/3})$
over $S$, and also on $S^{\rm c}$, hence
\begin{align}
\phi_{R_{n}} * \varphi^{2} \leq C_{1}(1 + M^{2/3}), \label{eq: phi Rn M convolution est}
\end{align}
where $C_{1} = \max\{ c_{\varphi}, C_{0}^{2/3} \}$ is independent of $M$.

Observe that
\begin{align*}
\phi_{R_{n}}^{+} * \varphi^{2} &= \phi_{R_{n}}^{-} * \varphi^{2} + \phi_{R_{n}} * \varphi^{2} \leq C_{S} + C_{1}(1 + M^{2/3})  = C(1 + M^{2/3}),
\end{align*}
and by estimating (\ref{eq: u phi eqs, phiRn}) directly, that
\begin{align*}
-\Delta \phi_{R_{n}}^{+} = -\Delta \phi_{R_{n}} \chi_{\{\phi_{R_{n}} > 0\}} = 4 \pi \left( m_{R_{n}} - u_{R_{n}}^{2} \right) \chi_{\{\phi_{R_{n}} > 0\}} \leq 4 \pi m_{R_{n}} \chi_{\{\phi_{R_{n}} > 0\}}  \leq 4 \pi m_{R_{n}}.
\end{align*}
As $0 \leq \varphi \leq 1$ and $\varphi = 1$ on $B_{1/2}(0)$, then
\begin{align}
\int_{B_{1/2}(0)} \phi^{+}_{R_{n}}(x) \id x \leq \left(\phi_{R_{n}}^{+} * \varphi^{2}\right)(0) \leq C(1 + M^{2/3}). \label{eq: phi + B1/2 int est}
\end{align}
Using a change of variables, (\ref{eq: phi + B1/2 int est}) can be expressed as
\begin{align*}
\int_{B_{1/2}(0)} \phi^{+}_{R_{n}}(x) \id x = \int_{0}^{1/2} \int_{S_{t}(0)} \phi^{+}_{R_{n}}(t \gamma) \id S_{t}(\gamma) \id t.
\end{align*}
Define $f: [0,1/2] \to \mathbb{R}$ by
\begin{align*}
f(t) = \int_{S_{t}(0)} \phi^{+}_{R_{n}}(t \gamma) \id S_{t}(\gamma)
\end{align*}
and suppose that for all $t \in (1/4,1/2)$
\begin{align*}
f(t) > 4 \int_{B_{1/2}(0)} \phi^{+}_{R_{n}}(x) \id x > 0,
\end{align*}
then
\begin{align*}
\int_{B_{1/2}(0)} \phi^{+}_{R_{n}}(x) \id x = \int_{0}^{1/2} f(t) \id t \geq \int_{1/4}^{1/2} f(t) \id t > \int_{B_{1/2}(0)} \phi^{+}_{R_{n}}(x) \id x,
\end{align*}
which gives a contradiction, hence for some $t \in (1/4,1/2)$
\begin{align}
\int_{S_{t}(0)} \phi^{+}_{R_{n}}(t \gamma) \id S_{t}(\gamma) \leq 4 \int_{B_{1/2}(0)} \phi^{+}_{R_{n}}(x) \id x \leq  C(1 + M^{2/3}). \label{eq: phi + St int est}
\end{align}
Since $t > 1/4$, (\ref{eq: phi + St int est}) implies
\begin{align*}
\dashint_{S_{t}(0)} \phi^{+}_{R_{n}}(t \gamma) \id S_{t}(\gamma) &= \frac{1}{|S_{t}(0)|} \int_{S_{t}(0)} \phi^{+}_{R_{n}}(t \gamma) \id S_{t}(\gamma) \\ &\leq \frac{ C + M^{2/3}   }{|S_{1/4}(0)|} \leq C ( 1 + M^{2/3} ) =: C_{1}(M).
\end{align*}
We now construct an upper bound for $\phi^{+}_{R_{n}}$ as follows. Let $\phi_{1}$ satisfy
\begin{align*}
- \Delta \phi_{1} &= 0 \qquad \,\, \text{ in } B_{t}(0), \\
\phi_{1} &= \phi^{+}_{R_{n}} \quad \text{ on } S_{t}(0).
\end{align*}
As $\phi_{1}$ is harmonic, it satisfies the mean value property
\begin{align}
\phi_{1}(0) \leq \dashint_{S_{t}(0)} \phi^{+}_{R_{n}}(t \gamma) \id S_{t}(\gamma) \leq C_{1}(M). \label{eq: phi 1 est}
\end{align}
Then consider the Dirichlet problem
\begin{align*}
- \Delta \phi_{2} &= 4 \pi m \quad \text{ in } B_{t}(0), \\
\phi_{2} &= 0 \qquad \,\,\, \text{ on } S_{t}(0).
\end{align*}
By Lax-Milgram, this has a unique weak solution $\phi_{2} \in H^{1}_{0}(B_{t}(0))$. By standard elliptic regularity theory \cite{Evans} $\phi_{2} \in H^{2}(B_{t}(0)) \hookrightarrow C^{0,1/2}(\overline{B_{t}(0)})$ and
\begin{align}
\| \phi_{2} \|_{C^{0,1/2}(\overline{B_{t}(0)})} &\leq C \| \phi_{2} \|_{H^{2}(B_{t}(0))} \leq C \| m \|_{L^{2}(B_{t}(0))} \leq C t^{3/2} \| m \|_{L^{2}_{\unif}(\R)} \leq C M. \label{eq: phi 2 est}
\end{align}
The constructed functions $\phi_{1}, \phi_{2}$ satisfy
\begin{align*}
- \Delta \phi^{+}_{R_{n}} &\leq -\Delta (\phi_{1} + \phi_{2}) \quad \text{ in } B_{t}(0), \\
\phi^{+}_{R_{n}} &= \phi_{1} + \phi_{2} \,\, \quad \qquad \text{ on } S_{t}(0),
\end{align*}
hence by the maximum principle $\phi^{+}_{R_{n}} \leq \phi_{1} + \phi_{2}$, in particular (\ref{eq: phi 1 est})--(\ref{eq: phi 2 est}) imply
\begin{displaymath}
  \| \phi^{+}_{R_{n}} \|_{L^{\infty}(\R)} = \phi^{+}_{R_{n}}(0) \leq \phi_{1}(0) 
  + \phi_{2}(0) \leq C(1 + M), 
\end{displaymath}
where the right-hand side is independent of $R_{n}$. Combining this with the lower bound (\ref{eq: phi Rn lower bound}) and the Solovej estimate (\ref{eq: Solovej est}), we obtain the estimate (\ref{eq: u Rn phi Rn uniform L inf est})
\begin{align*}
\| u_{R_{n}} \|_{L^{\infty}(\R)}^{4/3} + \| \phi_{R_{n}} \|_{L^{\infty}(\R)} \leq C(1 + M).
\end{align*}
It follows immediately that for all $x \in \R$ and $r \in [1,\infty]$
\begin{align}
\| u_{R_{n}} \|_{L^{r}(B_{2}(x))} \leq C(1 + M^{3/4}),  \label{eq: u Rn any Lp est}
\end{align}
independently of both $x, r$ and $R_{n}$. Using (\ref{eq: u Rn phi Rn uniform L inf est}) and (\ref{eq: u Rn any Lp est}), we now obtain uniform local estimates for the right-hand side of (\ref{eq: Delta uRn = rhs eq })
\begin{align*}
- \Delta u_{R_{n}} &=  -\frac{5}{3} u_{R_{n}}^{7/3} + \phi_{R_{n}} u_{R_{n}}
\end{align*}
by
\begin{align*}
\| \smfrac{5}{3} u_{R_{n}}^{7/3} - \phi_{R_{n}} u_{R_{n}} \|_{L^{2}(B_{2}(x))} &\leq 
C \| \smfrac{5}{3} u_{R_{n}}^{7/3} - \phi_{R_{n}} u_{R_{n}} \|_{L^{\infty}(\R)} \\
&\leq C ( \| u_{R_{n}} \|_{L^{\infty}(\R)}^{7/3} + \| \phi_{R_{n}} \|_{L^{\infty}(\R)} \| u_{R_{n}} \|_{L^{\infty}(\R)} ) \\
&\leq C ( 1 + M^{7/4} ).
\end{align*}
Consequently, for any $x \in \R$, the elliptic regularity estimate \cite{Evans} gives
\begin{align}
\| u_{R_{n}} \|_{H^{2}(B_{1}(x))} &\leq C ( \| \smfrac{5}{3} u_{R_{n}}^{7/3} - \phi_{R_{n}} u_{R_{n}} \|_{L^{2}(B_{2}(x))} + \| u_{R_{n}} \|_{L^{2}(B_{2}(x))} ) \nonumber \\
&\leq C( 1 + M^{7/4}) + C( 1 + M^{1/2} ) \leq C( 1 + M^{7/4}). \label{eq: u Rn H2 B1 x est}
\end{align}
As (\ref{eq: u Rn H2 B1 x est}) is independent of $x \in \R$, we obtain
\begin{align}
\| u_{R_{n}} \|_{H^{2}_{\unif}(\R)} \leq C( 1 + M^{7/4}).  \label{eq: uRn H2 unif est}
\end{align}
Applying a similar argument to estimate the right-hand side of (\ref{eq: u phi eqs, phiRn})
\begin{align*}
- \Delta \phi_{R_{n}} &= 4 \pi ( m_{R_{n}} - u_{R_{n}}^{2} )
\end{align*}
yields (\ref{eq: phiRn H2 unif est})
\begin{align*}
\| \phi_{R_{n}} \|_{H^{2}_{\unif}(\R)} \leq C ( 1 + M^{3/2} ).
\end{align*}
Using that $\phi_{R_{n}} \in H^{2}_{\unif}(\R)$ and arguing as in (\ref{eq: uRn H2 unif est}), we obtain the desired estimate (\ref{eq: uRn H4 unif est})
\begin{equation*}
\| u_{R_{n}} \|_{H^{4}_{\unif}(\R)} \leq C( 1 + M^{15/4}). \qedhere
\end{equation*}
We remark that while the constants appearing in the final estimates \eqref{eq: uRn H4 unif est}--\eqref{eq: phiRn H2 unif est} depend on $c_{\varphi}$, they are independent of $M$.
\end{proof}

\begin{remark}
\label{Remark - Energy Definitions are equal}
We now justify the claim that for finite and neutral systems and for $\Omega = \R$, the three energies shown in \eqref{eq: energy three agree} agree. Recall (\ref{eq: energy 1 eq}), which shows that the Coulomb energy can be expressed as
\begin{align*}
\frac{1}{2} \int_{\R} \left( (m_{R_{n}}-u_{R_{n}}^{2}) * \smcb \right)(m_{R_{n}}-u_{R_{n}}^{2}) = \frac{1}{2} \int_{\R} \phi_{R_{n}} (m_{R_{n}}-u_{R_{n}}^{2}) = 
\frac{1}{8\pi} \int_{\R} |\nabla \phi_{R_{n}}|^{2},
\end{align*}
so it follows that the energies defined in \eqref{eq: energy three agree} agree for $\Omega = \R$. \qed
\end{remark}

We now discuss passing to the limit in (\ref{eq: u phi Rn pair}) to obtain regularity for the infinite system.

\begin{proof}[Proof of \Cref{Proposition - General Regularity Est}]
First suppose that $\spt(m)$ is bounded, then for sufficiently large $R_{n}$, $m = m_{R_{n}}$ and hence by \Cref{Proposition - Finite Regularity Est} $(u,\phi) = (u_{R_{n}},\phi_{R_{n}})$ solves (\ref{eq: u phi eq pair}) and satisfies the desired estimates (\ref{eq: gen u H2 unif est})--(\ref{eq: gen phi H2 unif est}).

Now suppose $\spt(m)$ is unbounded, then
the estimates (\ref{eq: uRn H4 unif est})--(\ref{eq: phiRn H2 unif est}) of \Cref{Proposition - Finite Regularity Est} guarantee that the sequences $u_{R_{n}}, \phi_{R_{n}}$ are bounded uniformly in $H^{2}_{\unif}(\R)$. Consequently, there exist $u,  \phi \in H^{2}_{\unif}(\R) \cap L^{\infty}(\R)$ such that along a subsequence $u_{R_{n}}, \phi_{R_{n}}$ converges to $u, \phi$, weakly in $H^{2}(B_{R}(0))$, strongly in $H^{1}(B_{R}(0))$ for all $R>0$ and pointwise almost everywhere. It follows from the pointwise convergence that $u \geq 0$ and
\begin{align*}
\| u \|_{L^{\infty}(\R)} &\leq C( 1 + M^{3/4} ), \\
\| \phi \|_{L^{\infty}(\R)} &\leq C( 1 + M ).
\end{align*}
Passing to the limit of the equations (\ref{eq: u phi Rn pair}) in distribution, we find that the limit $(u,\phi)$ solves
\begin{align*}
- \Delta u &+ \frac{5}{3} u^{7/3} - \phi u = 0, \\
- \Delta \phi &= 4 \pi ( m - u^{2} ).
\end{align*}
Arguing as in (\ref{eq: uRn H4 unif est})--(\ref{eq: phiRn H2 unif est}), we deduce that the desired estimates (\ref{eq: gen u H2 unif est})--(\ref{eq: gen phi H2 unif est}) hold
\begin{equation*}
\| u \|_{H^{4}_{\unif}(\R)} \leq C(1 + M^{15/4}), 
\end{equation*}
\vspace{-17pt}
\begin{equation*}
\hspace{-3pt} \| \phi \|_{H^{2}_{\unif}(\R)} \leq C(1 + M^{3/2}). \qedhere
\end{equation*}
\end{proof}

\begin{proof}[Proof of \Cref{Proposition - Uniform inf u estimate}]
As $m \in \mathcal{M}_{L^{2}}(M,\omega)$, it satisfies (H1)--(H2), 
hence by \Cref{Theorem - C/LB/L Main Result}, the solution $(u,\phi)$ of (\ref{eq: u phi eq pair}) defined in \Cref{Proposition - General Regularity Est} is unique and satisfies $\inf u > 0$.
Now suppose
\begin{align}
\inf_{m \in \mathcal{M}_{L^{2}}(M,\omega)} \inf_{x \in \R} u(x) = 0, \label{eq: inf u is zero contr ass}
\end{align}
we show that this contradicts the assumption that for all $m \in \mathcal{M}_{L^{2}}(M,\omega)$ and $R > 0$
\begin{align*}
\inf_{x \in \R} \int_{B_{R}(x)} m(z) \id z \geq \omega_{0} R^{3} - \omega_{1}.
\end{align*}
It follows from (\ref{eq: inf u is zero contr ass}) that there exists $m_{n} \in \mathcal{M}_{L^{2}}(M,\omega)$ with corresponding solution $(u_{n},\phi_{n})$ and $x_{n} \in \R$ such that for all $n \in \mathbb{N}$
\begin{align*}
u_{n}(x_{n}) \leq \frac{1}{n}.
\end{align*}
Recall the uniform estimates (\ref{eq: uRn H4 unif est})--(\ref{eq: phiRn H2 unif est}) from \Cref{Proposition - General Regularity Est}
\begin{align*}
\| u_{n} \|_{H^{4}_{\unif}(\R)} &\leq C(1 + M^{15/4}), \\
\| \phi_{n} \|_{H^{2}_{\unif}(\R)} &\leq C(1 + M^{3/2}).
\end{align*}
It follows that
\begin{align*}
\left\| \smfrac{5}{3} u_{n}^{4/3} - \phi_{n} u_{n} \right\|_{L^{2}_{\unif}(\R)} &\leq C \| u_{n} \|_{L^{\infty}(\R)}^{4/3} + \| \phi_{n} \|_{L^{2}_{\unif}(\R)} \| u_{n} \|_{L^{\infty}(\R)} \leq C(M).
\end{align*}
As $\smfrac{5}{3} u_{n}^{4/3} - \phi_{n} u_{n} \in L^{2}_{\loc}(\R)$, $u_{n} \in H^{1}_{\unif}(\R)$ and $u_{n} > 0$ solves
\begin{align*}
L_{n}u_{n} := \left( - \Delta + \frac{5}{3} u_{n}^{4/3} - \phi_{n} \right) u_{n} = 0,
\end{align*}
applying the Harnack inequality \cite{Trudinger_MeasurableCoefficients}, and
observing that the coefficients of $L_{n}$ are uniformly estimated by \Cref{Proposition - General Regularity Est}, this yields a uniform Harnack
constant, hence for all $R > 0$, there exists $C = C(R, M) > 0$ such
that
\begin{align*}
\sup_{x \in B_{R}(x_{n})} u_{n}(x) \leq C \inf_{x \in B_{R}(x_{n})} u_{n}(x) \leq \frac{C}{n}.
\end{align*}
It follows that the sequence of functions $u_{n}(\cdot + x_{n})$ converges
uniformly to zero on compact sets. Consider the ground state $(u_{n},\phi_{n})$
corresponding to the nuclear distribution $m_{n}$.

Recall that $\phi_{n}$ solves the following equation in distribution
\begin{align}
- \Delta \phi_{n} = 4 \pi \left( m_{n} - u_{n}^{2} \right). \label{eq: phi mn eq}
\end{align}
We translate the system and then pass to the limit in (\ref{eq: phi mn eq}) as $n$ tends to infinity. To do this, we use the following estimates, which are translation invariant:
\begin{align*}
\| m_{n}(\cdot + x_{n}) \|_{L^{2}_{\unif}(\R)} &\leq M, \\
\| \phi_{n}(\cdot + x_{n}) \|_{H^{2}_{\unif}(\R)} &\leq C(M).
\end{align*}
It follows that, up to a subsequence, $\phi_{n}(\cdot + x_{n})$ converges to
$\wt \phi$, weakly in $H^{2}(B_{R}(0))$, strongly in $H^{1}(B_{R}(0))$ for
all $R > 0$ and pointwise almost everywhere. Moreover, $m_{n}(\cdot + x_{n})$
converges to $\wt m$, weakly in $L^{2}(B_{R}(0))$ for all $R > 0$. By
applying the Lebesgue-Besicovitch Differentiation Theorem \cite{Evans/Gariepy}
we deduce that $\wt m \in \mathcal{M}_{L^{2}}(M,\omega)$. Passing to the
limit in
\begin{align*}
- \Delta \phi_{n}(\cdot + x_{n}) = 4 \pi \left( m_{n}(\cdot + x_{n}) - u_{n}^{2}(\cdot + x_{n}) \right),
\end{align*}
it follows that $\wt \phi$ is a distributional solution of
\begin{align}
- \Delta \wt \phi = 4\pi \wt m. \label{eq: phi bar H2 eq}
\end{align}
Arguing as in \cite[Theorem 6.10]{C/LB/L}, we show that for all $R > 0$
\begin{align}
\int_{B_{R}(0)} \wt m(z) \id z \leq C R. \label{eq: m bar BR contradiction est}
\end{align}
As $\wt m \in \mathcal{M}_{L^{2}}(M,\omega)$, this leads to the contradiction that for all $R > 0$
\begin{align*}
\omega_{0} R^{3} - \omega_{1} \leq \int_{B_{R}(0)} \wt m(z) \id z \leq C R.
\end{align*}
To show (\ref{eq: m bar BR contradiction est}) choose $\varphi \in C^{\infty}_{\textnormal{c}}(B_{2}(0))$ such that $0 \leq \varphi \leq 1$ and $\varphi = 1$ on $B_{1}(0)$. Let $R>0$, then testing (\ref{eq: phi bar H2 eq}) with $\varphi(\cdot/R)$ gives
\begin{align}
- \frac{1}{R^{2}} \int_{B_{2R}(0)} \wt \phi(z) (\Delta \varphi) (z/R) \id z = 4 \pi \int_{B_{2R}(0)} \wt m(z) \varphi(z/R) \id z. \label{eq: overline m contr est}
\end{align}
The left-hand side can be estimated by
\begin{align*}
\frac{1}{R^{2}} \bigg | \int_{B_{2R}(0)} \overline \phi(z) (\Delta \varphi) (z/R) \id z \bigg | \leq \| \overline \phi \|_{L^{\infty}(\R)} \| \Delta \varphi \|_{L^{\infty}} \frac{|B_{2R}(0)|}{R^{2}} \leq C R,
\end{align*}
where the constant $C > 0$ is independent of $R$. As $\wt m \geq 0$, from (\ref{eq: overline m contr est}) we obtain (\ref{eq: m bar BR contradiction est})
\begin{align*}
\int_{B_{R}(0)} \wt m(z) \id z \leq \int_{B_{2R}(0)} \wt m(z) \varphi(z/R) \id z \leq C R.
\end{align*}
The contradiction ensures that there exists a constant $c_{M,\omega} > 0$ such that for all $m \in \mathcal{M}_{L^{2}}(M,\omega)$, the corresponding electron density $u$ satisfies
\begin{equation*}
\inf_{x \in \R} u(x) \geq c_{M,\omega} > 0. \qedhere
\end{equation*}
\end{proof}

\begin{proof}[Proof of \Cref{Corollary - General Est Ck version}]
Our aim is to show by induction that for all $k \in \mathbb{N}_{0}$, if $m \in \mathcal{M}_{H^{k}}(M,\omega)$ then the corresponding solution $(u,\phi)$ to (\ref{eq: u phi eq pair}) satisfies
\begin{align}
\| u \|_{H^{k+4}_{\unif}(\R)} + \| \phi \|_{H^{k+2}_{\unif}(\R)} \leq C ( k, M, \omega ). \label{eq: u phi Hk induction est}
\end{align}

In \Cref{Proposition - General Regularity Est}, by combining the estimates (\ref{eq: gen u H2 unif est}) and (\ref{eq: gen phi H2 unif est}), it follows that (\ref{eq: u phi Hk induction est}) holds for the case $k = 0$: for all $m \in \mathcal{M}_{L^{2}}(M,\omega)$ the corresponding solution $(u,\phi)$ satisfies
\begin{align*}
\| u \|_{H^{4}_{\unif}(\R)} + \| \phi \|_{H^{2}_{\unif}(\R)} \leq C ( M, \omega ).
\end{align*}
We now show the induction step. Suppose the result is true for $k \in \mathbb{N}_{0}$, then consider $m \in \mathcal{M}_{H^{k+1}}(M,\omega) \subset \mathcal{M}_{H^{k}}(M,\omega)$, so by the induction hypothesis the corresponding solution $(u,\phi)$ satisfies
\begin{align}
\| u \|_{H^{k+4}_{\unif}(\R)} + \| \phi \|_{H^{k+2}_{\unif}(\R)} \leq C \left( k,  \| m \|_{H^{k}_{\unif}(\R)}, \omega \right). \label{eq: u phi Hk induction hyp est}
\end{align}
We remark that as $0 < c_{M,\omega} \leq u \leq C(M)$ and $u \in H^{k+4}_{\unif}(\R)$, it follows that for all $r \in \mathbb{R}$, $u^{r} \in H^{k+4}_{\unif}(\R)$. As $(u,\phi)$ solve (\ref{eq: u phi eq pair})
\begin{align*}
- \Delta u &= - \frac{5}{3} u^{7/3} + \phi u, \\
- \Delta \phi &= 4\pi (m - u^{2}), 
\end{align*}
by standard elliptic regularity theory \cite{Evans} for any $x \in \R$
\begin{align*}
\| \phi \|_{H^{k+3}(B_{1}(x))} &\leq C \left( \| m - u^{2} \|_{H^{k+1}(B_{2}(x))} +  \|\phi \|_{L^{2}(B_{2}(x))} \right) \\
&\leq C \left( \| m \|_{H^{k+1}(B_{2}(x))} + \| u^{2} \|_{H^{k+1}(B_{2}(x))} + \|\phi \|_{L^{2}(B_{2}(x))} \right) \\
&\leq C \left( \| m \|_{H^{k+1}_{\unif}(\R)} + \|\phi \|_{L^{2}_{\unif}(\R)} \right) + C \left( k+1, \| u \|_{H^{k+1}_{\unif}(\R)} \right) \\
&\leq C \| m \|_{H^{k+1}_{\unif}(\R)} + C \left( k+1, \| m \|_{H^{k}_{\unif}(\R)}, \omega \right) \\
&\leq C \left( k+1,  \| m \|_{H^{k+1}_{\unif}(\R)}, \omega \right),
\end{align*}
hence
\begin{align}
\| \phi \|_{H^{k+3}_{\unif}(\R)} = \sup_{x \in \R} \| \phi \|_{H^{k+3}(B_{1}(x))} &\leq C \left( k+1,  \| m \|_{H^{k+1}_{\unif}(\R)}, \omega \right). \label{eq: phi Hk3 induction est}
\end{align}
We use an identical argument together and apply the estimate (\ref{eq: phi Hk3 induction est}) to deduce
\begin{align}
\| u \|_{H^{k+5}_{\unif}(\R)} &\leq C \left( \| \smfrac{5}{3} u^{7/3} - \phi u \|_{H^{k+3}_{\unif}(\R)} +  \| u \|_{L^{2}_{\unif}(\R)} \right) \nonumber \\
&\leq C \left( \| \phi \|_{H^{k+3}_{\unif}(\R)}, \| u \|_{H^{k+3}_{\unif}(\R)} \right) \nonumber \\
&\leq C \left( k+1,  \| m \|_{H^{k+1}_{\unif}(\R)}, \omega \right). \label{eq: u Hk5 induction est}
\end{align}
Combining (\ref{eq: phi Hk3 induction est}) and (\ref{eq: u Hk5 induction est}) we obtain the desired estimate
\begin{align*}
\| u \|_{H^{k+5}_{\unif}(\R)} + \| \phi \|_{H^{k+3}_{\unif}(\R)} \leq C \left( k+1,  \| m \|_{H^{k+1}_{\unif}(\R)}, \omega \right),
\end{align*}
which completes the proof of (\ref{eq: u phi Hk induction est}) by induction.
\end{proof}

\subsection{Proofs of Pointwise Stability Estimates}
\label{Subsection - Proofs of Pointwise Stability Estimates}
To prove Theorems \ref{Theorem - One inf pointwise stability estimate alt} and \ref{Theorem - Exponential Est Integral RHS}, we adapt the proof of uniqueness of the TFW equations, shown in \cite{C/LB/L,Blanc_Uniqueness}. Due to the length of the argument, we shall prove several intermediate results. Before showing these results, we outline the structure of the proof.

First, we state two alternative sets of assumptions on nuclear
    distributions $m_1, m_2$:
\begin{itemize}
\item[(A)] 
Let $k = 0$, $m_{1} \in \mathcal{M}_{L^{2}}(M,\omega)$, and let
$(u_{1}, \phi_{1})$ denote the corresponding ground state. Also, let $m_{2} : \R \to \mathbb{R}_{\geq 0}$ satisfy $\| m_{2} \|_{L^{2}_{\unif}(\R)} \leq M'$ and suppose there exists $(u_{2},\phi_{2})$ solving \textnormal{(\ref{eq: u phi eq pair})} corresponding to $m_{2}$, satisfying $u_{2} \geq 0$ and
\begin{align}
\| u_{2} \|_{H^{4}_{\unif}(\R)} &+ \| \phi_{2} \|_{H^{2}_{\unif}(\R)} \leq C(M'). \label{eq: u2 phi2 reg A* est}
\end{align} 
In addition, we assume that either $m_{2} \not\equiv 0$ and $u_{2} > 0$, or $m_{2} = u_{2} = \phi_{2} = 0$. 

\item[(B)] Let $k \in \mathbb{N}_{0}$,
$m_{1}, m_{2} \in \mathcal{M}_{H^{k}}(M, \omega)$ and let
$(u_{1},\phi_{1}), (u_{2},\phi_{2})$ denote the corresponding ground states.
(Note that (B) implies (A), with $M' = C(M)$.)
\end{itemize}
\begin{remark}
We point out that in (A) we assume $u_2 > 0$, while in \Cref{Theorem - One inf pointwise stability estimate alt} we only require $u_2 \geq 0$. The
restriction $u_2 > 0$ allows us to directly use results from
\cite{C/LB/L}, in particular \Cref{Lemma - Eigenvalue Argument}, and will be lifted via a thermodynamic limit argument in the third part of its proof on page \pageref{proof-case2}.
\end{remark} 

Throughout the remainder of the paper we use the notation
\begin{align*}
w = u_{1} - u_{2}, \quad \psi = \phi_{1} - \phi_{2}, \quad R_{m} = 4 \pi (m_{1} - m_{2}).
\end{align*} 

By treating the coupled system of equations as a linear system and by exploiting the coupling between the electron density and electrostatic potential arising from the Coulomb energy term of the TFW functional, we obtain the following initial estimates

\begin{lemma}
\label{Lemma - Exp est lemma 1}
Suppose \textnormal{(A)} holds, then there exists $C = C(M,M',\omega) > 0$ such
that for any $\xi \in H^{1}(\R)$
\begin{align}
\int_{\R} \left( w^{2} + |\nabla w|^{2} + |\nabla \psi|^{2} \right) \xi^{2} \leq C \left( \int_{\R} R_{m} \psi \xi^{2} + \int_{\R} ( w^{2} + \psi^{2} ) |\nabla \xi|^{2} \right). \label{eq: main est 1}
\end{align}
\end{lemma}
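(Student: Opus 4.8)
\emph{Plan.} The plan is to adapt the PDE form of the Catto--Le Bris--Lions / Blanc uniqueness argument, localised by the weight $\xi^2$. First I would subtract the two copies of \textnormal{(\ref{eq: u phi eq pair})} to obtain the coupled system for $w$ and $\psi$: writing $\phi_1u_1-\phi_2u_2=\phi_iw+u_j\psi$ (for either choice $\{i,j\}=\{1,2\}$) gives $-\Delta w+\tfrac53(u_1^{7/3}-u_2^{7/3})=\phi_iw+u_j\psi$, while the exact factorisation $u_1^2-u_2^2=(u_1+u_2)w$ turns the Poisson equation into $-\Delta\psi+4\pi(u_1+u_2)w=R_m$. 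The latter identity is what makes the Coulomb coupling manageable. I would then record three structural facts: (i) by \Cref{Lemma - Eigenvalue Argument} the \Schrodinger\ operators $L_i:=-\Delta+\tfrac53u_i^{4/3}-\phi_i$ are non-negative, since $L_iu_i=0$ and $u_i>0$ (for $u_1$ this is even a uniform bound $\inf u_1\ge c_{M,\omega}>0$ by \Cref{Proposition - Uniform inf u estimate}, and $u_2>0$ is part of (A)); (ii) $u_i,\nabla u_i,\phi_i$ are uniformly bounded by \Cref{Proposition - General Regularity Est}; (iii) by convexity of $t\mapsto t^{7/3}$ and the lower bound on $u_1$ one has the pointwise identities $\tfrac53\big[(u_1^{7/3}-u_2^{7/3})w-u_i^{4/3}w^2\big]=\tfrac53u_j(u_1-u_2)(u_1^{4/3}-u_2^{4/3})\ge0$, and after summing over $i=1,2$, $\tfrac53(u_1+u_2)(u_1-u_2)(u_1^{4/3}-u_2^{4/3})\ge\kappa w^2$ with $\kappa=\kappa(c_{M,\omega})>0$.

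Next I would test the $w$-equation against $w\xi^2$, add the inequalities $0\le\langle w\xi,L_i(w\xi)\rangle$ for $i=1,2$, and substitute one into the other: the $\int|\nabla w|^2\xi^2$ and $\int\phi_iw^2\xi^2$ contributions cancel, leaving $\kappa\int w^2\xi^2\le C\int(u_1+u_2)w\psi\,\xi^2+C\int w^2|\nabla\xi|^2$. The cross term is eliminated using the Poisson equation $4\pi(u_1+u_2)w=R_m+\Delta\psi$: integration by parts gives $\int(u_1+u_2)w\psi\,\xi^2=\tfrac1{4\pi}\big(\int R_m\psi\,\xi^2-\int|\nabla\psi|^2\xi^2-\int\psi\,\nabla\psi\cdot\nabla(\xi^2)\big)$, and the last term is absorbed by Young's inequality into a small multiple of $\int|\nabla\psi|^2\xi^2$ plus $C\int\psi^2|\nabla\xi|^2$. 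This delivers the $\int w^2\xi^2$ and $\int|\nabla\psi|^2\xi^2$ parts of the estimate. For $\int|\nabla w|^2\xi^2$ I would exploit that the linearised operator $\mathcal L:=-\Delta+\tfrac53\frac{u_1^{7/3}-u_2^{7/3}}{u_1-u_2}-\phi_2$ satisfies $\mathcal Lw=u_1\psi$ and, by (i) and (iii), the coercivity $\langle v,\mathcal Lv\rangle\ge\delta\|v\|_{L^2}^2$ with $\delta=\delta(c_{M,\omega})>0$; since the potential of $\mathcal L$ is bounded this upgrades to $\|\nabla v\|_{L^2}^2\le C\langle v,\mathcal Lv\rangle$, and applying it with $v=w\xi$ (so that $\langle w\xi,\mathcal L(w\xi)\rangle=\int u_1\psi w\,\xi^2+\int w^2|\nabla\xi|^2$) reduces matters once more to the $w\psi$ cross term, which is treated exactly as above using the Poisson equation together with the bounds on $\int w^2\xi^2$ and $\int|\nabla\psi|^2\xi^2$ already obtained; a standard Caccioppoli estimate on $-\Delta w=\phi_iw+u_j\psi-\tfrac53(u_1^{7/3}-u_2^{7/3})$ provides an alternative route to the same conclusion.

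\emph{Main obstacle.} The crux is the bookkeeping that makes the estimate \emph{close up} with only $\int R_m\psi\,\xi^2$ and $|\nabla\xi|^2$-weighted quantities surviving on the right: the potentially large terms $\int\phi_iw^2\xi^2$ must be annihilated by the operator positivity $L_i\ge0$; the coercive term $\kappa w^2$ (and likewise $\delta\|v\|_{L^2}^2$) must be generated by the quantitative convexity of the $t^{7/3}$-nonlinearity together with $\inf u_1>0$, which is exactly why the hypotheses force $m_1\in\mathcal M_{L^2}(M,\omega)$ and $u_2>0$; and the electron--potential coupling $\int(u_1+u_2)w\psi\,\xi^2$ must be converted, via the Poisson equation, into $-\tfrac1{4\pi}\int|\nabla\psi|^2\xi^2$ (which helps) plus the source $\int R_m\psi\,\xi^2$. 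Tracking the cut-off commutators so that they cost only $\int(w^2+\psi^2)|\nabla\xi|^2$, and keeping all constants dependent on $M,M',\omega$ but independent of $\xi$, is where the care lies.
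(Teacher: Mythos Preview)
Your overall strategy coincides with the paper's: test the $w$-equation against $w\xi^2$, use the non-negativity of the \Schrodinger\ operators $L_i=-\Delta+\tfrac53u_i^{4/3}-\phi_i$ together with the convexity of $t\mapsto t^{7/3}$ and the lower bound $\inf u_1\geq c_{M,\omega}$ to produce a coercive $w^2$ term, and use the Poisson equation (tested against $\psi\xi^2$) to convert the cross term $\int(u_1+u_2)w\psi\,\xi^2$ into $\tfrac{1}{4\pi}\int R_m\psi\,\xi^2-\tfrac{1}{4\pi}\int|\nabla\psi|^2\xi^2$ up to cut-off commutators. The paper organises this via the symmetric decomposition $\phi_1u_1-\phi_2u_2=\tfrac12(\phi_1+\phi_2)w+\tfrac12(u_1+u_2)\psi$; your averaging over $i=1,2$ amounts to the same thing, and your estimates for $\int w^2\xi^2$ and $\int|\nabla\psi|^2\xi^2$ are correct.

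There is, however, a gap in your argument for $\int|\nabla w|^2\xi^2$. With the asymmetric operator $\mathcal L=-\Delta+\tfrac53\frac{u_1^{7/3}-u_2^{7/3}}{u_1-u_2}-\phi_2$ you correctly obtain coercivity $\mathcal L\geq\delta>0$ (since $\mathcal L=L_2+\tfrac53\,\frac{u_1(u_1^{4/3}-u_2^{4/3})}{u_1-u_2}$ and the extra potential is $\geq\tfrac53 c_{M,\omega}^{4/3}$) and hence $\|\nabla v\|_{L^2}^2\leq C\langle v,\mathcal Lv\rangle$. But $\langle w\xi,\mathcal L(w\xi)\rangle=\int u_1\psi w\,\xi^2+\int w^2|\nabla\xi|^2$ leaves you with $\int u_1\psi w\,\xi^2$, which is \emph{not} of the form $\int(u_1+u_2)\psi w\,\xi^2$ that the Poisson identity controls. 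Splitting $u_1=(u_1+u_2)-u_2$ produces a remainder $\int u_2\psi w\,\xi^2$ which, after Cauchy--Schwarz, requires a bound on $\int\psi^2\xi^2$ --- and that bound is only established in \Cref{Lemma - Exp est lemma 2}, not here. The same obstruction arises in your Caccioppoli alternative with the asymmetric right-hand side $\phi_iw+u_j\psi$.

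The paper sidesteps this by never discarding the quadratic form: it works throughout with the symmetric $L=\tfrac12(L_1+L_2)$, arriving at a single inequality
\[
\langle w\xi,L(w\xi)\rangle+\nu\int w^2\xi^2+\tfrac{1}{8\pi}\int|\nabla(\psi\xi)|^2\;\leq\;\tfrac{1}{8\pi}\int R_m\psi\,\xi^2+\int w^2|\nabla\xi|^2+\tfrac{1}{8\pi}\int\psi^2|\nabla\xi|^2,
\]
and then extracts the gradient via the elementary bound $L\geq\varepsilon(-\Delta)-\varepsilon\|a\|_{L^\infty}$ (valid because $L=-\Delta+a\geq 0$ with $a$ bounded), absorbing $\varepsilon\|a\|_{L^\infty}\int w^2\xi^2$ into $\nu\int w^2\xi^2$ for $\varepsilon$ small. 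Your approach is easily repaired along the same lines: either retain $\langle w\xi,L(w\xi)\rangle$ in your first step rather than spending it via $L_i\geq0$, or replace $\mathcal L$ by the symmetric $\mathcal L'=-\Delta+\tfrac53\frac{u_1^{7/3}-u_2^{7/3}}{u_1-u_2}-\tfrac12(\phi_1+\phi_2)$, which is equally coercive and satisfies $\mathcal L'w=\tfrac12(u_1+u_2)\psi$, so that the cross term is exactly the one the Poisson trick handles.
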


To control the $\psi$-dependence on the right-hand side of \eqref{eq: main est
  1}, we require an estimate of the form
\begin{align}
\int_{\R} \psi^{2} \xi^{2} \leq C \left( \int_{\R} R_{m} \psi \xi^{2} + \int_{\R} ( w^{2} + \psi^{2} ) |\nabla \xi|^{2} \right), \label{eq: psi integral est}
\end{align}
which holds for $\xi \in H_{1}$, i.e. $\xi \in H^{1}(\R)$ satisfying $|\nabla \xi| \leq \xi$ on $\R$.

Suppose (\ref{eq: psi integral est}) holds, then applying \Holder's inequality and (\ref{eq: main est 1}) yields
\begin{align*}
\int_{\R} ( w^{2} + \psi^{2} ) \xi^{2} \leq C' \left( \int_{\R} R_{m}^{2} \xi^{2} + \int_{\R} ( w^{2} + \psi^{2} ) |\nabla \xi|^{2} \right).
\end{align*}
To remove the term $\int (w^{2}+\psi^{2})|\nabla \xi|^{2}$ on the right-hand
side, we simply restrict from $\xi \in H_1$ to a narrower class of test
functions,
\begin{align*}
H_{\gamma} = \{ \, \xi \in H^{1}(\R) \, | \, |\nabla \xi(x) | \leq \gamma |\xi(x)| \,\, \forall \, x \in \R \, \},
\end{align*}
where $\gamma = \min \{1, (2C')^{-1/2} \} > 0$, to show
\begin{align}
\int_{\R} ( w^{2} + |\nabla w|^{2} + \psi^{2} + |\nabla \psi|^{2} ) \xi^{2} \leq 2C' \int_{\R} R_{m}^{2} \xi^{2}.
\end{align}
In order to show (\ref{eq: psi integral est}), we adapt the argument used in
\cite{Blanc_Uniqueness} . At the same time, since the equations for $(w,\psi)$ hold pointwise, we obtain additional estimates for $\Delta w, \Delta \psi$.
\begin{lemma}
\label{Lemma - Exp est lemma 2}
Suppose \textnormal{(A)} holds, then there exists $C = C(M,M',\omega), \gamma = \gamma(M,M',\omega) > 0$ such
that for any $\xi \in H_{\gamma}$
\begin{align}
\int_{\R} \bigg( w^{2} + |\nabla w|^{2} + |\Delta w|^{2} &+ \psi^{2} + |\nabla \psi|^{2} + |\Delta \psi|^{2} \bigg) \xi^{2} \leq C \int_{\R} R_{m}^{2} \xi^{2}. \label{eq: w psi H2 xi alt est}
\end{align} 
\end{lemma}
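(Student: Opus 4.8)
The plan is to follow the blueprint given just before the statement. First I would subtract the two copies of \eqref{eq: u phi eq pair}: writing $\frac53(u_1^{7/3}-u_2^{7/3}) = b\,w$ with $b := \frac53\,\frac{u_1^{7/3}-u_2^{7/3}}{u_1-u_2}\ge0$ (extended by $\frac{35}{9}u_1^{4/3}$ where $w=0$), and decomposing $\phi_1 u_1-\phi_2 u_2 = \phi_2 w + u_1\psi$, the difference equations become
\begin{align*}
  -\Delta w + (b-\phi_2)w &= u_1\psi, \\
  -\Delta \psi + 4\pi(u_1+u_2)w &= R_m .
\end{align*}
Under (A), $u_i\in H^4_{\unif}$ and $\phi_i\in H^2_{\unif}$, so these hold pointwise a.e.\ and $\|b-\phi_2\|_{L^\infty}+\|u_1\|_{L^\infty}+\|u_2\|_{L^\infty}\le C(M,M')$; moreover, since $m_1\in\mathcal M_{L^2}(M,\omega)$, \Cref{Theorem - C/LB/L Main Result} and \Cref{Proposition - Uniform inf u estimate} give $\inf_\R u_1\ge c_{M,\omega}>0$. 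The point of splitting $\phi_1 u_1-\phi_2 u_2$ through $u_1$ (not $u_2$) is exactly this uniform lower bound: $u_2$ may vanish at infinity when $\spt m_2$ is bounded, whereas $u_1$ does not.

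The heart of the proof is the auxiliary estimate \eqref{eq: psi integral est}, $\int_\R\psi^2\xi^2\le C(\int_\R R_m\psi\,\xi^2+\int_\R(w^2+\psi^2)|\nabla\xi|^2)$ for $\xi\in H_1$. I would prove it by testing the first equation against $\psi\xi^2$, which is an admissible test function because $\xi\in H_1$ forces $\xi\in L^\infty(\R)$ (the pointwise bound $|\nabla\xi|\le|\xi|$ gives $|\xi(x)|\le|\xi(y)|e^{|x-y|}$, so $\|\xi\|_{L^\infty}\lesssim\|\xi\|_{L^2}$), whence $\psi\xi^2\in H^1(\R)$. This gives
\begin{align*}
  \int_\R u_1\psi^2\xi^2 = \int_\R \xi^2\,\nabla w\cdot\nabla\psi + 2\int_\R \psi\,\xi\,\nabla w\cdot\nabla\xi + \int_\R (b-\phi_2)\,w\,\psi\,\xi^2 .
\end{align*}
I would use $\inf u_1\ge c_{M,\omega}$ on the left, Young's inequality on the first two right-hand terms (yielding $C\int(|\nabla w|^2+|\nabla\psi|^2)\xi^2+C\int\psi^2|\nabla\xi|^2$) and on the third ($\le\delta\int\psi^2\xi^2+C_\delta\int w^2\xi^2$), then invoke \Cref{Lemma - Exp est lemma 1} to bound $\int(w^2+|\nabla w|^2+|\nabla\psi|^2)\xi^2$ by $C(\int R_m\psi\,\xi^2+\int(w^2+\psi^2)|\nabla\xi|^2)$. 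Choosing $\delta$ with $\delta\|b-\phi_2\|_{L^\infty}\le\tfrac12 c_{M,\omega}$ and absorbing $\tfrac12 c_{M,\omega}\int\psi^2\xi^2$ to the left yields \eqref{eq: psi integral est}. This is the step that adapts the uniqueness argument of \cite{Blanc_Uniqueness} (there $R_m=0$ and letting $\xi\uparrow1$ forces $\psi\equiv0$, hence $w\equiv0$).

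With \eqref{eq: psi integral est} available, I would add it to \eqref{eq: main est 1}, apply Young's inequality $\int R_m\psi\,\xi^2\le\varepsilon\int\psi^2\xi^2+C_\varepsilon\int R_m^2\xi^2$, and absorb $\varepsilon\int\psi^2\xi^2$ for small $\varepsilon$, obtaining $\int_\R(w^2+|\nabla w|^2+\psi^2+|\nabla\psi|^2)\xi^2\le C'(\int_\R R_m^2\xi^2+\int_\R(w^2+\psi^2)|\nabla\xi|^2)$ for all $\xi\in H_1$. Restricting to $\xi\in H_\gamma$ with $\gamma:=\min\{1,(2C')^{-1/2}\}$ makes $|\nabla\xi|^2\le(2C')^{-1}\xi^2$, so the last term is absorbed and $\int(w^2+|\nabla w|^2+\psi^2+|\nabla\psi|^2)\xi^2\le2C'\int R_m^2\xi^2$. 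Finally I would read off $\Delta w$ and $\Delta\psi$ from the difference equations, $\Delta w=(b-\phi_2)w-u_1\psi$ and $\Delta\psi=4\pi(u_1+u_2)w-R_m$, so that $|\Delta w|^2\le C(w^2+\psi^2)$ and $|\Delta\psi|^2\le C(w^2+R_m^2)$ pointwise; integrating against $\xi^2$ and using the bound just obtained supplies the two remaining terms of \eqref{eq: w psi H2 xi alt est}, with $C,\gamma$ depending only on $M,M',\omega$.

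The main obstacle is \eqref{eq: psi integral est}, specifically the structural observation that the Coulomb coupling must be exploited through the multiplier $u_1$ — uniformly bounded below because $m_1\in\mathcal M_{L^2}(M,\omega)$ — rather than through $u_2$, so that the argument closes even though $u_2$ may decay at infinity (this is where the asymmetry of (A), which controls $m_2$ only in $L^2_{\unif}$, is essential). The rest is bookkeeping: ensuring every error term has the absorbable form $\int(w^2+\psi^2)|\nabla\xi|^2$, that the final $\gamma$ depends only on $M,M',\omega$, and that the test-function manipulations are justified by the $L^\infty$-boundedness and exponential-type decay of functions in $H_\gamma$.
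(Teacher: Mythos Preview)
Your proposal is correct and follows essentially the same approach as the paper: test the $w$-equation against $\psi\xi^2$, use the uniform lower bound on the electron density to extract $\int\psi^2\xi^2$, feed in \Cref{Lemma - Exp est lemma 1}, absorb via $\xi\in H_\gamma$, and finish with the pointwise equations for $\Delta w,\Delta\psi$. The only cosmetic difference is that the paper uses the symmetric decomposition $\phi_1 u_1-\phi_2 u_2=\tfrac{\phi_1+\phi_2}{2}w+\tfrac{u_1+u_2}{2}\psi$ (so the coercive multiplier is $\tfrac{u_1+u_2}{2}\ge\tfrac{c_{M,\omega}}{2}$, using $u_2\ge0$) whereas you use the asymmetric one $\phi_2 w+u_1\psi$; both exploit the same lower bound $\inf u_1\ge c_{M,\omega}$ and lead to identical estimates.
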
 

Clearly Lemmas \ref{Lemma - Exp est lemma 1} and \ref{Lemma - Exp est lemma 2}
hold also under the assumption (B) since (B) implies (A), with $M' = C(M)$.  In the case (B) where
$m_{1}, m_{2}$ are both uniformly bounded below and have higher regularity,
arguing as in \Cref{Corollary - General Est Ck version} and \Cref{Lemma - Exp est lemma 2}, we obtain improved estimates for $w$ and $\psi$. 

Observe that in Case (B), $M' = C(M)$. Due to this, we omit the dependence of
$M'$ in the constants that appear in the following lemmas, whenever we assume
(B) holds.
%

\begin{lemma}
\label{Lemma - Exp Est Final Est} 
Suppose that either \textnormal{(A)} or \textnormal{(B)} holds, then there exist $C = C_{A}(M,M',\omega), \gamma = \gamma_{A}( M,M',\omega) > 0$ or $C = C_{B}(k, M,\omega),$ $\gamma = \gamma_{B}( M,\omega) > 0$, where $\gamma_{B}$ independent of $k$, such that for any $\xi \in H_{\gamma}$
\begin{align}
\int_{\R} \bigg( \sum_{|\alpha_{1}| \leq k+4} |\partial^{\alpha_{1}} w|^{2} + \sum_{|\alpha_{2}| \leq k+2} |\partial^{\alpha_{2}} \psi|^{2} \bigg) \xi^{2} \leq C \int_{\R} \sum_{|\beta| \leq k} |\partial^{\beta} R_{m}|^{2} \xi^{2}. \label{eq: w and psi partial xi global final lemma est}
\end{align}
In particular, for any $y \in \R$,
\begin{align}
\sum_{|\alpha_{1}| \leq k+2} |\partial^{\alpha_{1}} w(y)|^{2} + \sum_{|\alpha_{2}| \leq k} |\partial^{\alpha_{2}} \psi(y)|^{2} \leq C \int_{\R} \sum_{|\beta| \leq k} |\partial^{\beta} R_{m}(x)|^{2} e^{-2\gamma |x - y|} \id x. \label{eq: w and psi pointwise rhs exp integral final lemma est}
\end{align}

\end{lemma}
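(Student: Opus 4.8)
The plan is to bootstrap from \Cref{Lemma - Exp est lemma 2} by repeatedly applying elliptic regularity in the weighted space $L^{2}(\xi^{2}\,{\rm d}x)$. Subtracting the two Poisson equations and using $u_{1}^{2}-u_{2}^{2}=(u_{1}+u_{2})w$ gives
\begin{align*}
  -\Delta\psi = R_{m} - 4\pi(u_{1}+u_{2})w ,
\end{align*}
while subtracting the two \Schrodinger\ equations, writing $\phi_{1}u_{1}-\phi_{2}u_{2}=\phi_{1}w+u_{2}\psi$ together with
\begin{align*}
  u_{1}^{7/3}-u_{2}^{7/3}=a\,w , \qquad a:=\tfrac{7}{3}\int_{0}^{1}\big((1-s)u_{2}+su_{1}\big)^{4/3}\,{\rm d}s ,
\end{align*}
gives
\begin{align*}
  -\Delta w = -\tfrac{5}{3}\,a\,w + \phi_{1}w + u_{2}\psi .
\end{align*}
A routine computation shows that $a$, $u_{1}+u_{2}$, $\phi_{1}$ lie in $H^{k+2}_{\unif}(\R)$ and $u_{2}$ in $H^{k+4}_{\unif}(\R)$, with norms bounded by $C(k,M,\omega)$ (respectively $C(M,M',\omega)$ in case (A)): under (B) one uses $\inf u_{i}\geq c_{M,\omega}>0$ from \Cref{Proposition - Uniform inf u estimate} together with the bounds of \Cref{Corollary - General Est Ck version}; under (A) one uses $\inf u_{1}\geq c_{M,\omega}>0$ (which keeps the averaged coefficients $\int_{0}^{1}\big((1-s)u_{2}+su_{1}\big)^{-r}\,{\rm d}s$ bounded for $r<1$), the regularity \eqref{eq: u2 phi2 reg A* est}, and $u_{2}\geq0$. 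Moreover \Cref{Corollary - General Est Ck version} (respectively the hypotheses of (A)) already gives $w\in H^{k+4}_{\unif}(\R)$, $\psi\in H^{k+2}_{\unif}(\R)$, so all weighted quantities below are finite a priori and only their size is at issue.

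The main tool is a weighted interior estimate: if $-\Delta v=g$ on $\R$ and $\xi\in H_{\gamma}$ with $\gamma\leq1$, then for every $j\geq0$
\begin{align*}
  \int_{\R}\sum_{|\alpha|\leq j+2}|\partial^{\alpha}v|^{2}\,\xi^{2} \ \leq\ C(j)\bigg( \int_{\R}\sum_{|\alpha|\leq j}|\partial^{\alpha}g|^{2}\,\xi^{2} + \int_{\R}|v|^{2}\,\xi^{2} \bigg).
\end{align*}
This follows by covering $\R$ by balls $B_{1}(z)$, $z\in\mathbb{Z}^{3}$, whose doublings $B_{2}(z)$ have bounded overlap, applying the standard interior elliptic estimate on each $B_{2}(z)$, multiplying by $\xi^{2}$, and summing. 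The only subtlety is that a non-zero $\xi\in H_{\gamma}$ is continuous and strictly positive (the pointwise bound $|\nabla\xi|\leq\gamma|\xi|$ makes $\log|\xi|$ locally Lipschitz, so $\xi$ cannot vanish without vanishing identically), whence $\sup_{B_{2}(z)}\xi^{2}\leq e^{8\gamma}\inf_{B_{2}(z)}\xi^{2}\leq e^{8}\inf_{B_{2}(z)}\xi^{2}$, so that $\xi^{2}$ can be pulled through each local estimate at the cost of a fixed factor. Crucially $\gamma$ is never decreased during the iteration: we simply retain the value furnished by \Cref{Lemma - Exp est lemma 2}, which is $\leq1$ and, in case (B), independent of $k$.

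Now iterate. In the base case $k=0$ — which is exactly case (A), and the start of the induction for case (B) — \Cref{Lemma - Exp est lemma 2} bounds the weighted $L^{2}$-norms of $w,\nabla w,\psi,\nabla\psi$ by $\big(\int R_{m}^{2}\xi^{2}\big)^{1/2}$. Applying the weighted estimate with $j=0$ to the $\psi$-equation (its right-hand side being bounded in weighted $L^{2}$ via $u_{1}+u_{2}\in L^{\infty}$) puts $\psi$ in weighted $H^{2}$; applying it to the $w$-equation likewise puts $w$ in weighted $H^{2}$; and feeding this back in and using that $H^{s}(\R)$ is an algebra for $s\geq2$ — so $\|aw\|_{H^{2}(B_{2}(z))}\lesssim\|a\|_{H^{2}_{\unif}(\R)}\|w\|_{H^{2}(B_{2}(z))}$ after localisation, and similarly for $\phi_{1}w$ and $u_{2}\psi$ — shows the right-hand side of the $w$-equation lies in weighted $H^{2}$, hence $w$ in weighted $H^{4}$, with all constants controlled by the uniform bounds above. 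For the inductive step in case (B), assume \eqref{eq: w and psi partial xi global final lemma est} at level $k-1$; since $\mathcal{M}_{H^{k}}(M,\omega)\subseteq\mathcal{M}_{H^{k-1}}(M,\omega)$, it gives $w$ in weighted $H^{k+3}$ and $\psi$ in weighted $H^{k+1}$, each bounded by the right-hand side at level $k$. Then the $\psi$-equation has right-hand side bounded in weighted $H^{k}$ by that same quantity (using the algebra property of $H^{k+2}(\R)$ for $k\geq2$, and $\nabla(u_{1}+u_{2})\in L^{\infty}$ for $k=1$), so $\psi$ lies in weighted $H^{k+2}$; and the $w$-equation then has right-hand side bounded in weighted $H^{k+2}$ by a multiple of the weighted $H^{k+2}$-norms of $w$ and $\psi$, so $w$ lies in weighted $H^{k+4}$. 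This closes the induction and establishes \eqref{eq: w and psi partial xi global final lemma est}. Finally, \eqref{eq: w and psi pointwise rhs exp integral final lemma est} follows by taking $\xi(x)=e^{-\gamma|x-y|}\in H_{\gamma}$: the left-hand side of \eqref{eq: w and psi partial xi global final lemma est} then dominates $e^{-2\gamma}\big(\|w\|_{H^{k+4}(B_{1}(y))}^{2}+\|\psi\|_{H^{k+2}(B_{1}(y))}^{2}\big)$, and the three-dimensional Sobolev embeddings $H^{k+4}(B_{1})\hookrightarrow C^{k+2}(\overline{B_{1}})$, $H^{k+2}(B_{1})\hookrightarrow C^{k}(\overline{B_{1}})$ bound the pointwise quantity $\sum_{|\alpha_{1}|\leq k+2}|\partial^{\alpha_{1}}w(y)|^{2}+\sum_{|\alpha_{2}|\leq k}|\partial^{\alpha_{2}}\psi(y)|^{2}$ by it.

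The main obstacle I expect is the weighted elliptic bootstrap: carrying it out with every constant independent of $\xi$, and above all keeping $\gamma$ fixed throughout. The latter works because $\gamma$ is pinned down once and for all by \Cref{Lemma - Exp est lemma 2} and the weighted interior estimate holds for \emph{every} $\xi\in H_{\gamma}$ with that same $\gamma$, its proof needing only $\gamma\leq1$. A secondary point requiring care is that the products $aw$, $\phi_{1}w$, $u_{2}\psi$ must be estimated \emph{after} localisation to unit balls, using the algebra property of $H^{s}(\R)$ for $s\geq2$ and the a priori membership $w\in H^{k+4}_{\unif}(\R)$, $\psi\in H^{k+2}_{\unif}(\R)$, so that the $L^{\infty}$-size of $w$ or $\psi$ on a ball is absorbed into the local Sobolev norm appearing in the weighted estimate rather than entering as an uncontrolled constant.
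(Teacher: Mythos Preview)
Your argument is correct and follows the same overall architecture as the paper --- bootstrap from \Cref{Lemma - Exp est lemma 2}, keep the value of $\gamma$ fixed throughout, and finish with the Sobolev embedding on $B_{1}(y)$ after inserting $\xi(x)=e^{-\gamma|x-y|}$ --- but the mechanism you use for the weighted higher-order gain is genuinely different. The paper does not localise: it proves directly, by the integration-by-parts identity
\[
  \int |\Delta\partial^{\beta} w|^{2}\xi^{2}
  = \int \sum_{|\alpha|=2}|\partial^{\alpha+\beta}w|^{2}\xi^{2}
    + \text{(cross terms involving }\nabla\xi),
\]
that $\int\sum_{|\alpha|=k'+2}|\partial^{\alpha}w|^{2}\xi^{2}$ is controlled by $\int\sum_{|\beta|=k'}|\partial^{\beta}\Delta w|^{2}\xi^{2}$ plus a lower-order term, and then bounds $\partial^{\beta}\Delta w$, $\partial^{\beta}\Delta\psi$ \emph{pointwise} from the equations. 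Your route --- cover by unit balls, apply the standard interior estimate on each, and use $\sup_{B_{2}(z)}\xi^{2}\leq e^{8\gamma}\inf_{B_{2}(z)}\xi^{2}$ to transfer the weight --- is more modular and lets you invoke the $H^{s}$--algebra property on each ball to handle the products $aw$, $\phi_{1}w$, $u_{2}\psi$ without ever writing out Leibniz expansions; in particular it absorbs the fact that the highest derivatives of $\phi_{1}$ are only in $L^{2}_{\unif}$, not $L^{\infty}$, which your last paragraph correctly flags. The paper's approach, by contrast, avoids the (mild) side argument that a nonzero $\xi\in H_{\gamma}$ has a continuous, strictly positive representative. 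Both arguments yield the same $\gamma$ (the one from \Cref{Lemma - Exp est lemma 2}, independent of $k$ in case~(B)); your observation that the covering lemma needs only $\gamma\leq 1$ is exactly what makes this work.
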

We remark that in the following proofs, all integrals are taken over $\R$, unless stated otherwise.
\begin{proof}[Proof of \Cref{Lemma - Exp est lemma 1}]
\emph{Case 1.} First suppose that $m_{2} \not\equiv 0$ and $u_{2} > 0$. Recall that $m_{1} \in \mathcal{M}_{L^{2}}(M,\omega)$, hence by Propositions \ref{Proposition - General Regularity Est}, \ref{Proposition - Uniform inf u estimate} and (\ref{eq: u2 phi2 reg A* est})
\begin{align*}
\| u_{1} \|_{H^{4}_{\unif}(\R)} + \| \phi_{1} \|_{H^{2}_{\unif}(\R)} &\leq C(M), \\
\| u_{2} \|_{H^{4}_{\unif}(\R)} + \| \phi_{2} \|_{H^{2}_{\unif}(\R)} &\leq C(M'), \\
\inf_{x \in \R} u_{1}(x) \geq c_{M,\omega} &> 0.
\end{align*}
By the Sobolev embedding: for all $k \in \mathbb{N}_{0}$ and $x \in \R$ $H^{k+2}(B_{1}(x)) \hookrightarrow C^{k,1/2}(B_{1}(x))$, so it follows that
\begin{align*}
\| u_{1} \|_{W^{2,\infty}(\R)} + \| \phi_{1} \|_{L^{\infty}(\R)} &\leq C(M), \\
\| u_{2} \|_{W^{2,\infty}(\R)} + \| \phi_{2} \|_{L^{\infty}(\R)} &\leq C(M'),
\end{align*}
hence $w = u_{1} - u_{2} \in H^{4}_{\unif}(\R) \cap W^{2,\infty}(\R)$, $\psi = \phi_{1} - \phi_{2} \in H^{2}_{\unif}(\R) \cap L^{\infty}(\R)$, and solve
\begin{subequations}
\label{eq: w psi pair eq}
\begin{align}
- \Delta w &= \frac{5}{3} \left( {u_{2}}^{7/3} - u_{1}^{7/3} \right) + \phi_{1} u_{1} - \phi_{2} u_{2}, \label{eq: w initial eq} \\
- \Delta \psi &= 4 \pi \left( u_{2}^{2} - u_{1}^{2} \right) + R_{m}, \label{eq: psi initial eq}
\end{align}
\end{subequations}
pointwise. Let $\xi \in H^{1}(\R)$ then test (\ref{eq: w initial eq}) with $w \xi^{2}$ to obtain
\begin{align}
\int \nabla w \cdot \nabla (w\xi^{2}) + {\frac{5}{3}} \int (u_{1}^{7/3} - u_{2}^{7/3}) w \xi^{2} - \int ( \phi_{1} u_{1} - \phi_{2} u_{2} ) w \xi^{2} &= 0. \label{eq: w with w xi2 est}
\end{align}
We will use the following rearrangements
\begin{align}
\phi_{1} u_{1} - \phi_{2} u_{2} &= \frac{\phi_{1} + \phi_{2}}{2} w + \frac{u_{1} + u_{2}}{2} \psi, \label{eq: phi u rearrange} \\ 
\int \nabla w \cdot \nabla (w \xi^{2}) &= \int | \nabla (w\xi) |^{2} - \int w^{2} |\nabla \xi|^{2}, \label{eq: grad w alt} \\
\int \nabla \psi \cdot \nabla (\psi \xi^{2}) &= \int | \nabla (\psi \xi) |^{2} - \int \psi^{2} |\nabla \xi|^{2}. \label{eq: grad psi alt}
\end{align}
To estimate the second term of (\ref{eq: w with w xi2 est}), observe that \Cref{Proposition - Uniform inf u estimate} and (A) imply that
$\inf u_{1} \geq c_{M,\omega} > 0$ and recall the assumption $u_{2} > 0$. It follows that for
$\nu = \frac{1}{2} \inf (u_{1}^{4/3} + u_{2}^{4/3}) \geq \frac{1}{2}
c_{M,\omega}^{4/3} > 0$
\begin{align}
(u_{1}^{7/3} - u_{2}^{7/3}) (u_{1} - u_{2}) &= (u_{1}^{4/3} + u_{2}^{4/3}) w^{2} + u_{1} u_{2} ( u_{1}^{1/3} - u_{2}^{1/3} ) w \nonumber \\
&\geq (u_{1}^{4/3} + u_{2}^{4/3}) w^{2}\nonumber \\
&\geq \frac{1}{2} ( u_{1}^{4/3} + u_{2}^{4/3} ) w^{2} + \nu w^{2}. \label{eq: nu est}
\end{align}

Combining the estimates (\ref{eq: w with w xi2 est})--(\ref{eq: grad w alt}) and (\ref{eq: nu est}), we obtain
\begin{align}
\int |\nabla (w\xi)|^{2} &+ {\frac{5}{6}} \int ( u_{1}^{4/3} + u_{2}^{4/3}) w^{2} \xi^{2} - {\frac{1}{2}} \int (\phi_{1} + \phi_{2}) w^{2} \xi^{2} + \nu \int w^{2} \xi^{2} \nonumber \\ 
&\leq \int w^{2} |\nabla \xi|^{2} + {\frac{1}{2}} \int \psi ( u_{1}^{2} - u_{2}^{2} ) \xi^{2}. \label{eq: before l}
\end{align}
We define the following operators
\begin{align*}
L_{1} &= - \Delta + \frac{5}{3} u_{1}^{4/3} - \phi_{1}, \\
L_{2} &= - \Delta + \frac{5}{3} u_{2}^{4/3} - \phi_{2}, \\
L_{\phantom{1}} &= \frac{1}{2} L_{1} + \frac{1}{2} L_{2} = - \Delta + \frac{5}{6} (u_{1}^{4/3} + u_{2}^{4/3}) - {\frac{1}{2}} (\phi_{1} + \phi_{2}).
\end{align*}
As $u_{1}, u_{2} > 0$, \Cref{Lemma - Eigenvalue Argument} implies that $L_{1}, L_{2}$ are non-negative operators, hence for any $\varphi \in H^{1}(\R)$
\begin{align}
\langle \varphi , L \varphi \rangle &= {\frac{1}{2}} \langle \varphi , L_{1}\varphi \rangle + {\frac{1}{2}} \langle \varphi , L_{2}\varphi \rangle \geq 0. \label{eq: L varphi geq 0}
\end{align}
Observe that as $w \in W^{2,\infty}(\R)$ and $\xi \in H^{1}(\R)$, $w\xi \in H^{1}(\R)$. We can express (\ref{eq: before l}) as
\begin{align}
\langle w \xi, L(w \xi) \rangle + \nu \int w^{2} \xi^{2} &\leq \int w^{2} |\nabla \xi|^{2} + {\frac{1}{2}} \int \psi ( u_{1}^{2} - u_{2}^{2} ) \xi^{2}. \label{eq: est 1}
\end{align}
To control the final term of (\ref{eq: est 1}), we begin by testing (\ref{eq: psi initial eq}) with $\psi \xi^{2}$ to obtain
\begin{align}
\int \nabla \psi \cdot \nabla (\psi \xi^{2}) &= 4 \pi \int \psi ( u_{2}^{2} - u_{1}^{2} ) \xi^{2} + \int R_{m} \psi \xi^{2}. \label{eq: psi test eq}
\end{align}
Rearranging (\ref{eq: psi test eq}) and applying (\ref{eq: grad psi alt}) yields
\begin{align}
{\frac{1}{2}} \int \psi ( u_{1}^{2} - u_{2}^{2}) \xi^{2} &= {\frac{1}{8 \pi}} \int R_{m} \psi \xi^{2} - {\frac{1}{8 \pi}} \int \nabla \psi \cdot \nabla (\psi \xi^{2}) \nonumber \\
&= {\frac{1}{8 \pi}} \int R_{m} \psi \xi^{2} - {\frac{1}{8 \pi}} \int | \nabla (\psi \xi) |^{2} + {\frac{1}{8 \pi}} \int \psi^{2} |\nabla \xi|^{2}. \label{eq: est 2 start}
\end{align}
Combining (\ref{eq: est 1}) and (\ref{eq: est 2 start}) yields
\begin{align}
\langle w \xi, L(w \xi) \rangle &+ \nu \int w^{2} \xi^{2} + {\frac{1}{8 \pi}} \int | \nabla  (\psi \xi) |^{2} 
 \nonumber \\ 
 &\leq {\frac{1}{8 \pi}} \int R_{m}\psi \xi^{2} 
 + \int w^{2} |\nabla \xi|^{2} + {\frac{1}{8 \pi}} \int \psi^{2} |\nabla \xi|^{2}. \label{eq: est 2 main}
\end{align}
As $\xi \nabla \psi = \nabla (\psi \xi) - \psi \nabla \xi$, we have
\begin{align}
\int | \nabla \psi |^{2} \xi^{2} &\leq C \left( \int | \nabla (\psi \xi)|^{2} + \int \psi^{2} |\nabla \xi|^{2} \right) \nonumber \\
&\leq C \left( \int R_{m} \psi \xi^{2}
 + \int ( w^{2} + \psi^{2} ) |\nabla \xi|^{2} \right). \label{eq: nabla psi rearrangement est}
\end{align}

Combining the estimates (\ref{eq: est 2 main})--(\ref{eq: nabla psi rearrangement est}), we obtain
\begin{align}
\langle w \xi, L(w \xi) \rangle &+ \nu \int w^{2} \xi^{2} + \int | \nabla \psi|^{2} \xi^{2} 
 \nonumber \\ 
 &\leq C \left( \int R_{m} \psi \xi^{2} + \int ( w^{2} + \psi^{2} ) |\nabla \xi|^{2} \right). \label{eq: L w grad psi est} 
\end{align}
Next we obtain an estimate for $\int |\nabla w|^{2} \xi^{2}$, using the fact that $L$ is a non-negative operator. We can express $L$ as
\begin{align*}
L = -\Delta + a, \quad \text{ where } \,\, a = \frac{5(u_{1}^{4/3} + u_{2}^{4/3})}{6} - \frac{\phi_{1} + \phi_{2}}{2} \in H^{2}_{\unif}(\mathbb{R}^{3}).
\end{align*}
From (\ref{eq: L varphi geq 0}), we have shown that $L = - \Delta + a \geq 0$ in the sense that $\langle \varphi , L \varphi \rangle \geq 0$ for every $\varphi \in H^{1}(\mathbb{R}^{3})$. So for $\varepsilon \in (0,1)$
\begin{align*}
L &= (1 - \varepsilon) (-\Delta + a) + \varepsilon(-\Delta) + \varepsilon a \geq \varepsilon(-\Delta)  - \varepsilon \| a \|_{L^{\infty}(\mathbb{R}^{3})}.
\end{align*}
Applying this to (\ref{eq: L w grad psi est}) gives
\begin{align*}
\varepsilon \int &| \nabla (w \xi)|^{2} + (\nu - \varepsilon \| a \|_{L^{\infty}(\mathbb{R}^{3})} ) \int w^{2} \xi^{2}
 \nonumber \\ 
 &\leq C \left( \int R_{m} \psi \xi^{2} + \int ( w^{2} + \psi^{2} ) |\nabla \xi|^{2} \right),
\end{align*}
so choosing $\varepsilon = \min \{ \frac{\nu}{2\| a \|_{L^{\infty}}} , \frac{1}{2} \}$, we deduce
\begin{align*}
\int | \nabla (w \xi)|^{2} &\leq C \left( \int R_{m} \psi \xi^{2} + \int ( w^{2} + \psi^{2} ) |\nabla \xi|^{2} \right)
\end{align*}
and since $\xi \nabla w = \nabla(w \xi) - w \nabla \xi$, we deduce
\begin{align}
\int | \nabla w| ^{2} \xi^{2} &\leq C \left( \int R_{m} \psi \xi^{2} + \int ( w^{2} + \psi^{2} ) |\nabla \xi|^{2} \right). \label{eq: grad w est}
\end{align}
We combine the estimates (\ref{eq: L w grad psi est}) and (\ref{eq: grad w est}) to obtain the desired estimate (\ref{eq: main est 1})
\begin{align*}
\int &w^{2} \xi^{2} + \int | \nabla w| ^{2} \xi^{2} + \int | \nabla \psi|^{2} \xi^{2} \leq C \left( \int R_{m} \psi \xi^{2} + \int ( w^{2} + \psi^{2} ) |\nabla \xi|^{2} \right)
\end{align*}
and observe that this estimate is valid for any $\xi \in H^{1}(\R)$.

\emph{Case 2.} Suppose now that $m_{2} = u_{2} = \phi_{2} = 0$, then the argument used to show (\ref{eq: before l}) holds to give
\begin{align}
\int |\nabla (w\xi)|^{2} &+ {\frac{5}{6}} \int u_{1}^{4/3} w^{2} \xi^{2} - {\frac{1}{2}} \int \phi_{1} w^{2} \xi^{2} + \nu \int w^{2} \xi^{2} \nonumber \\ 
&\leq \int w^{2} |\nabla \xi|^{2} + {\frac{1}{2}} \int \psi  u_{1}^{2} \xi^{2}.
\end{align}
Now using that $L_{1}$ is a non-negative operator, we obtain
\begin{align*}
\frac{1}{2}\int |\nabla (w\xi)|^{2} + \nu \int w^{2} \xi^{2} &\leq
{\frac{1}{2}} \langle \varphi , L_{1}\varphi \rangle + \frac{1}{2}\int |\nabla (w\xi)|^{2} + \nu \int w^{2} \xi^{2} \\
&= \int |\nabla (w\xi)|^{2} + {\frac{5}{6}} \int u_{1}^{4/3} w^{2} \xi^{2} - {\frac{1}{2}} \int \phi_{1} w^{2} \xi^{2} + \nu \int w^{2} \xi^{2} \nonumber \\ 
&\leq \int w^{2} |\nabla \xi|^{2} + {\frac{1}{2}} \int \psi  u_{1}^{2} \xi^{2}.
\end{align*}
Then applying the estimates (\ref{eq: psi test eq})--(\ref{eq: L w grad psi est}) yields the desired estimate (\ref{eq: main est 1}): for all $\xi \in H^{1}(\R)$
\begin{align*}
\int &w^{2} \xi^{2} + \int | \nabla w| ^{2} \xi^{2} + \int | \nabla \psi|^{2} \xi^{2} \leq C \left( \int R_{m} \psi \xi^{2} + \int ( w^{2} + \psi^{2} ) |\nabla \xi|^{2} \right) \qedhere
\end{align*}
\end{proof}

\begin{proof}[Proof of \Cref{Lemma - Exp est lemma 2}]
To obtain an integral estimate for $\psi$, first recall (\ref{eq: w initial eq}), that $w$ solves
\begin{align*}
- \Delta w + \frac{5}{3} \left( \, u_{1}^{7/3} - u_{2}^{7/3} \, \right) - \frac{\phi_{1} + \phi_{2}}{2} w &= \frac{u_{1} + u_{2}}{2} \psi,
\end{align*}
then testing this equation with $\psi \xi^{2}$, for $\xi \in H^{1}(\R)$, yields
\begin{align}
\int \frac{u_{1} + u_{2}}{2} \psi^{2} \xi^{2} = - \int \Delta w \psi \xi^{2} + \frac{5}{3} \int \left( \, u_{1}^{7/3} - u_{2}^{7/3} \, \right) \psi \xi^{2} - \int \frac{\phi_{1} + \phi_{2}}{2} w \psi \xi^{2}. \label{eq: alt psi est w eq}
\end{align}
The first term of the right-hand side can be estimated using integration by parts
\begin{align*}
\bigg| \int& \Delta w \psi \xi^{2} \bigg| = \bigg| \int \nabla w \cdot \nabla \left( \psi \xi^{2} \right) \bigg| \leq \bigg| \int \nabla w \cdot \nabla \psi \xi^{2} \bigg| + 2 \bigg| \int \nabla w \cdot \nabla \xi \psi \xi \bigg| \\
&\leq \left( \int |\nabla w|^{2} \xi^{2} \right)^{1/2} \left( \int |\nabla \psi|^{2} \xi^{2} \right)^{1/2} + 2 \left( \int |\nabla w|^{2} |\nabla \xi|^{2} \right)^{1/2} \left( \int \psi^{2} \xi^{2} \right)^{1/2}.
\end{align*}
By restricting $\xi \in H_{1}$, we have $|\nabla \xi| \leq |\xi|$ hence 
\begin{align}
\bigg| \int& \Delta w \psi \xi^{2} \bigg| \leq 2 \left( \int |\nabla w|^{2} \xi^{2} \right)^{1/2} \left( \int \psi^{2} \xi^{2} \right)^{1/2} + \int \left( |\nabla w|^{2} + |\nabla \psi|^{2}  \right) \xi^{2}. \label{eq: alt psi est Delta term}
\end{align}
We now estimate the remaining terms on the right-hand side of (\ref{eq: alt psi est w eq})
\begin{align}
\bigg | \frac{5}{3} \int \left( \, u_{1}^{7/3} - u_{2}^{7/3} \, \right) &\psi \xi^{2} - \int \frac{\phi_{1} + \phi_{2}}{2} w \psi \xi^{2} \bigg | \nonumber \\ &\leq C \int |w||\psi| \xi^{2} \leq C \left( \int w^{2} \xi^{2} \right)^{1/2} \left( \int \psi^{2} \xi^{2} \right)^{1/2}. \label{eq: alt psi est final terms}
\end{align}
Combining the estimates (\ref{eq: alt psi est Delta term})--(\ref{eq: alt psi
  est final terms}) with (\ref{eq: alt psi est w eq}) and using that
$\inf u_{1} \geq c_{M,\omega} > 0$ and $u_{2} \geq 0$, we obtain
\begin{align}
\int \psi^{2} \xi^{2} \leq \frac{2}{c_{M,\omega}} \int \frac{u_{1} + u_{2}}{2} \psi^{2} \xi^{2} &\leq C \left[ \left( \int |\nabla w|^{2} \xi^{2} \right)^{1/2} + \left( \int w^{2} \xi^{2} \right)^{1/2} \right] \left( \int \psi^{2} \xi^{2} \right)^{1/2} \nonumber \\ 
& \qquad + \int \left( |\nabla w|^{2} + |\nabla \psi|^{2}  \right) \xi^{2} \label{eq: psi est lower bound}
\end{align}
Applying Young's inequality twice and using (\ref{eq: main est 1}) of \Cref{Lemma - Exp est lemma 1} yields
\begin{align*}
\int \psi^{2} \xi^{2} &\leq \frac{1}{2} \int \psi^{2} \xi^{2} + C \int \left( w^{2} + |\nabla w|^{2} + |\nabla \psi|^{2}  \right) \xi^{2} \\ &\leq \frac{1}{2} \int \psi^{2} \xi^{2} + C \left( \int R_{m} \psi \xi^{2} + \int \left( w^{2} + \psi^{2} \right) |\nabla \xi|^{2} \right) \\
& \leq \frac{3}{4} \int \psi^{2} \xi^{2} + C \left( \int R_{m}^{2} \xi^{2} + \int \left( w^{2} + \psi^{2} \right) |\nabla \xi|^{2} \right),
\end{align*}
hence we obtain
\begin{align}
\int \left( w^{2} + |\nabla w|^{2} + \psi^{2} + |\nabla \psi|^{2} \right) \xi^{2} \leq C \left( \int R_{m}^{2} \xi^{2} + \int \left( w^{2} + \psi^{2} \right) |\nabla \xi|^{2} \right). \label{eq: alt w psi H1 xi est}
\end{align}
We further restrict the choice of the test function $\xi$, to remove the terms depending on $w$ and $\psi$ from the right-hand side. Given $C = C(M',M,\omega) > 0$, define $\gamma = \min\{ 1, (2C)^{-1/2} \} > 0$. First note that $H_{\gamma} \subseteq H_{1}$, so for any $\xi \in H_{\gamma}$ the estimate (\ref{eq: alt w psi H1 xi est}) continues to hold. In addition, $|\nabla \xi| \leq \gamma |\xi|$, hence
\begin{align*}
&\int \left( w^{2} + |\nabla w|^{2} + \psi^{2} + |\nabla \psi|^{2} \right) \xi^{2} \nonumber \leq C \left( \int R_{m}^{2} \xi^{2} + \int ( w^{2} + \psi^{2} ) |\nabla \xi|^{2} \right) \\
&\leq C \left( \int R_{m}^{2} \xi^{2} + \wt \gamma^{2} \int ( w^{2} + \psi^{2} ) \xi^{2} \right) \leq C \int R_{m}^{2} \xi^{2} + \frac{1}{2} \int ( w^{2} + \psi^{2} ) \xi^{2}.
\end{align*}
After re-arranging, it follows that for any $\xi \in H_{\gamma}$
\begin{align}
\int \left( w^{2} + |\nabla w|^{2} + \psi^{2} + |\nabla \psi|^{2} \right) \xi^{2} \leq C \int  R_{m}^{2} \xi^{2}. \label{eq: not quite pointwise est}
\end{align} 
Finally, as the equations (\ref{eq: w psi pair eq}) hold pointwise, squaring each equation and integrating them against $\xi^{2}$ yields
\begin{align*}
\int |\Delta w|^{2}\xi^{2} &\leq C \int \left( w^{2} + \psi^{2} \right) \xi^{2} \\
\int |\Delta \psi|^{2}\xi^{2} &\leq C \int \left( R_{m}^{2} + w^{2} \right) \xi^{2}.
\end{align*}
Combining these estimates with (\ref{eq: alt w psi H1 xi est}), we obtain the desired result (\ref{eq: w psi H2 xi alt est})
\begin{equation*}
\int \big( w^{2} + |\nabla w|^{2} + |\Delta w|^{2} + \psi^{2} + |\nabla \psi|^{2} + |\Delta \psi|^{2} \big) \xi^{2} \leq C \int R_{m}^{2} \xi^{2}. \qedhere
\end{equation*}
\end{proof}

\begin{proof}[Proof of \Cref{Lemma - Exp Est Final Est}]
\emph{Case 1.} Suppose (B) holds, so $m_{i} \in \mathcal{M}_{H^{k}}(M, \omega)$ for some $k \in \mathbb{N}_{0}$. 
By \Cref{Corollary - General Est Ck version}, for $i \in \{1,2\}$
\begin{align}
\| u_{i} \|_{H^{k+4}_{\unif}(\R)} +
\| \phi_{i} \|_{H^{k+2}_{\unif}(\R)} \leq C(k,M,\omega). \label{eq: corr 4 u phi extra reg est}
\end{align}
Using integration by parts, we shall obtain integral estimates for derivatives of $w$ in terms of derivatives of $\Delta w$. We will use the Einstein summation convention throughout this proof. 

To begin, we approximate $w \in H^{k+4}_{\unif}(\R)$ by smooth functions $w_{h} \in C^{\infty}(\R)$ such that for all $|\beta| \leq k+4$, $\partial^{\beta} w_{h}$ converges to $\partial^{\beta} w$ pointwise \cite{Jost_PDE}. This approximation is necessary in order to obtain estimates for $\partial^{\alpha} w$ when $|\alpha| = k+4$.

Fix $\xi \in H_{\gamma}$ and let $|\beta| = k' \leq k+2$. Then using integration by parts gives
\begin{align*}
\int& |\Delta \partial^{\beta} w_{h}|^{2} \xi^{2} = \int \partial_{ii} \partial^{\beta} w_{h} \partial_{jj} \partial^{\beta} w_{h} \xi^{2} \nonumber \\
&= - \int \partial_{i} \partial^{\beta} w_{h} \partial_{ijj} \partial^{\beta} w_{h} \xi^{2} - 2 \int  \partial_{i} \partial^{\beta} w_{h} \partial_{jj} \partial^{\beta} w_{h} \xi \partial_{i} \xi \nonumber \\
&= \int \partial_{ij} \partial^{\beta} w_{h} \partial_{ij} \partial^{\beta} w_{h} \xi^{2} + 2 \int \partial_{i} \partial^{\beta} w_{h} \partial_{ij} \partial^{\beta} w_{h} \xi \partial_{j} \xi - 2 \int \partial_{i} \partial^{\beta} w_{h} \partial_{jj} w_{h} \xi \partial_{i} \xi \nonumber \\
&= \int \sum_{|\alpha| = 2}| \partial^{\alpha + \beta} w|^{2} \xi^{2} + 2 \int \partial_{i} \partial^{\beta} w_{h} \partial_{ij} \partial^{\beta} w_{h} \xi \partial_{j} \xi - 2 \int \partial_{i} \partial^{\beta} w_{h} \partial_{jj} \partial^{\beta} w_{h} \xi \partial_{i} \xi.
\end{align*}
Summing over $|\beta| = k'$ and rearranging yields
\begin{align*}
\int& \sum_{|\alpha| = k' + 2}| \partial^{\alpha} w_{h}|^{2} \xi^{2} = \int \sum_{|\beta| = k'} |\Delta \partial^{\beta} w_{h}|^{2} \xi^{2} \\ &+ 2 \sum_{|\beta| = k'} \sum_{i,j=1}^{3} \bigg( \int \partial_{i} \partial^{\beta} w_{h} \partial_{ij} \partial^{\beta} w_{h} \xi \partial_{j} \xi - \int \partial_{i} \partial^{\beta} w_{h} \partial_{jj} \partial^{\beta} w_{h} \xi \partial_{i} \xi \bigg).
\end{align*}
Then, using that $\xi \in H_{\gamma} \subseteq H_{1}$, hence $|\nabla \xi| \leq |\xi|$, we can estimate the right-hand side using \Holder's inequality,
\begin{align*}
\int& \sum_{|\alpha| = k' + 2}| \partial^{\alpha} w_{h}|^{2} \xi^{2} \leq \int \sum_{|\beta| = k'} |\Delta \partial^{\beta} w_{h}|^{2} \xi^{2} \\
&+ C \sum_{|\beta| = k'} \sum_{i,j=1}^{3} \bigg( \int |\partial_{i} \partial^{\beta} w_{h}| |\partial_{ij} \partial^{\beta} w_{h}| \xi^{2} + \int |\partial_{i} \partial^{\beta} w_{h}| |\partial_{jj} \partial^{\beta} w_{h}| \xi^{2} \bigg) \\
&\leq \frac{1}{2} \int \sum_{|\alpha| = k' + 2}| \partial^{\alpha} w_{h}|^{2} \xi^{2} + C \bigg( \int \sum_{|\beta_{1}| = k'} |\Delta \partial^{\beta_{1}} w_{h}|^{2} \xi^{2} + \int \sum_{|\beta_{2}| = k'+1} |\partial^{\beta_{2}} w_{h}|^{2} \xi^{2} \bigg).
\end{align*}
Re-arranging this and letting $h \to 0$, we obtain
\begin{align}
\sum_{|\alpha| = k' + 2} &\int | \partial^{\alpha} w|^{2} \xi^{2} \leq C \bigg( \int \sum_{|\beta_{1}| = k'} |\partial^{\beta_{1}} \Delta w|^{2} \xi^{2} + \int \sum_{|\beta_{2}| = k'+1} |\partial^{\beta_{2}} w|^{2} \xi^{2} \bigg). \label{eq: w k' partial est}
\end{align}
Using an identical argument, we obtain similar estimates for $\psi$, for $k' \leq k$,
\begin{align}
\sum_{|\alpha| = k' + 2} &\int | \partial^{\alpha} \psi|^{2} \xi^{2} \leq C \bigg( \int \sum_{|\beta_{1}| = k'} |\partial^{\beta_{1}} \Delta \psi|^{2} \xi^{2} + \int \sum_{|\beta_{2}| = k'+1} |\partial^{\beta_{2}} \psi|^{2} \xi^{2} \bigg). \label{eq: psi k' partial est}
\end{align}
In the case $k' = 0$, combining (\ref{eq: w k' partial est}), (\ref{eq: psi k' partial est}) and (\ref{eq: w psi H2 xi alt est}) of Lemma \ref{Lemma - Exp est lemma 2} yields: there exists $C, \gamma > 0$ such that for all $\xi \in H_{\gamma}$
\begin{align}
\int \sum_{|\alpha| = 2} ( |\partial^{\alpha} w|^{2} &+ |\partial^{\alpha} \psi|^{2} ) \xi^{2} \nonumber \\ &\leq C\int \left( |\nabla w|^{2} + |\Delta w|^{2} + |\nabla \psi|^{2} + |\Delta \psi|^{2} \right) \xi^{2} \leq C \int R_{m}^{2} \xi^{2}.
\end{align} 

We will now provide estimates for the right-hand terms of the form $\partial^{\beta} \Delta w, \partial^{\beta} \Delta \psi$. Recall (\ref{eq: w psi pair eq})
\begin{align*}
- \Delta w &= \frac{5}{3} \left( {u_{2}}^{7/3} - u_{1}^{7/3} \right) + \frac{\phi_{1} + \phi_{2}}{2} w + \frac{u_{1} + u_{2}}{2} \psi =: f_{1}, \\
- \Delta \psi &= 4 \pi \left( u_{2}^{2} - u_{1}^{2} \right) + R_{m} =: f_{2}.
\end{align*}

From (\ref{eq: corr 4 u phi extra reg est}) it follows that $f_{1} \in H^{k+2}_{\unif}(\R), f_{2} \in H^{k}_{\unif}(\R)$. Let $|\alpha_{1}| = j_{1} \leq k+2, |\alpha_{2}| = j_{2} \leq k$, then differentiating (\ref{eq: w psi pair eq}) yields
\begin{align}
|\partial^{\alpha_{1}} \Delta w| &\leq C(j_{1},M,\omega) \sum_{|\beta_{1}| \leq j_{1}} \left( |\partial^{\beta_{1}} w| + |\partial^{\beta_{1}} \psi| \right), \label{eq: Delta partial w pointwise est} \\
|\partial^{\alpha_{2}} \Delta \psi| &\leq C(j_{2},M,\omega) \sum_{|\beta_{2}| \leq j_{2}} \left( |\partial^{\beta_{2}} R_{m}| + |\partial^{\beta_{2}} w| \right). \label{eq: Delta partial psi pointwise est}
\end{align}
Squaring (\ref{eq: Delta partial w pointwise est})--(\ref{eq: Delta partial psi pointwise est}), summing over partial derivatives and integrating against $\xi^{2}$ we deduce
\begin{align}
\int \sum_{|\alpha_{1}| = j_{1}} |\partial^{\alpha_{1}} \Delta w|^{2} \xi^{2} &\leq
C \int \sum_{|\beta_{1}| \leq j_{1}} \left( |\partial^{\beta_{1}} w|^{2} + |\partial^{\beta_{1}} \psi|^{2} \right) \xi^{2}, \label{eq: Delta partial w j part 2 est} \\
\int \sum_{|\alpha_{2}| = j_{2}} |\partial^{\alpha_{2}} \Delta \psi|^{2} \xi^{2}
&\leq C \int \sum_{|\beta_{2}| \leq j_{2}} \left( |\partial^{\beta_{2}} R_{m}|^{2} + |\partial^{\beta_{2}} w|^{2} \right) \xi^{2}. \label{eq: Delta partial psi j part 2 est}
\end{align}
Substituting (\ref{eq: Delta partial w j part 2 est}) into (\ref{eq: w k' partial est}) gives for $i \leq k+4$
\begin{align}
\int \sum_{|\alpha| = i_{1}}| \partial^{\alpha} w|^{2} \xi^{2} \leq C \int \bigg( \sum_{|\beta_{1}| = i_{1}-1}  |\partial^{\beta_{1}} w|^{2} + \sum_{|\beta_{2}| = i_{1}-2}  |\partial^{\beta_{2}} \Delta w|^{2}  \bigg) \xi^{2}& \nonumber \\
\leq C \int \bigg( \sum_{|\beta_{1}| = i_{1}-1}  |\partial^{\beta_{1}} w|^{2} +  \sum_{|\beta_{1}| \leq i_{1}-2} \left( |\partial^{\beta_{1}} w|^{2} + |\partial^{\beta_{1}} \psi|^{2} \right) \bigg) \xi^{2}&. \label{eq: partial w i1 est}
\end{align}
Similarly, substituting (\ref{eq: Delta partial psi j part 2 est}) into (\ref{eq: psi k' partial est}) gives for $i_{2} \leq k+2$
\begin{align}
\int \sum_{|\alpha| = i_{2}}| \partial^{\alpha} \psi|^{2} \xi^{2} \leq C \int \bigg( \sum_{|\beta_{1}| = i_{2}-1}  |\partial^{\beta_{1}} \psi|^{2} + \sum_{|\beta_{2}| = i_{2}-2}  |\partial^{\beta_{2}} \Delta \psi|^{2}  \bigg)\xi^{2}& \nonumber \\
\leq C \int \bigg( \sum_{|\beta_{1}| = i_{2}-1}  |\partial^{\beta_{1}} \psi|^{2} + \sum_{|\beta_{2}| \leq i_{2}-2} \left( |\partial^{\beta_{2}} R_{m}|^{2} + |\partial^{\beta_{2}} w|^{2} \right) \bigg) \xi^{2}&. \label{eq: partial psi i2 est}
\end{align}
Using (\ref{eq: partial w i1 est}) and (\ref{eq: partial psi i2 est}), arguing by induction over $i_{1}, i_{2}$ simultaneously gives
\begin{align*}
\int \sum_{|\alpha| \leq k+2} \left( |\partial^{\alpha} w|^{2} + | \partial^{\alpha} \psi|^{2} \right) \xi^{2} \leq C \int \sum_{|\beta| \leq k} |\partial^{\beta} R_{m}|^{2} \xi^{2}.
\end{align*} 
To show the remaining estimate for the derivatives of $w$, applying (\ref{eq: partial w i1 est}) with $i_{1} = k+3, k+4$ yields the estimate (\ref{eq: w and psi partial xi global final lemma est})
\begin{equation*}
\int \bigg( \sum_{|\alpha_{1}| \leq k+4} |\partial^{\alpha_{1}} w|^{2} + \sum_{|\alpha_{2}| \leq k+2} | \partial^{\alpha_{2}} \psi|^{2} \bigg) \xi^{2} \leq C \int \sum_{|\beta| \leq k} |\partial^{\beta} R_{m}|^{2} \xi^{2}.
\end{equation*}
Now fix $y \in \R$ and choose $\xi(x) = e^{-\gamma |x - y|}$. We will now show the lower pointwise lower bound for $w$ and $\psi$
\begin{align}
\sum_{|\alpha_{1}| \leq k+2} &|\partial^{\alpha_{1}} w(y)|^{2} + \sum_{|\alpha_{2}| \leq k} |\partial^{\alpha_{2}} \psi(y)|^{2} \nonumber \\
&\leq C \int \bigg( \sum_{|\alpha_{1}| \leq k+4} |\partial^{\alpha_{1}} w|^{2} + \sum_{|\alpha_{2}| \leq k+2} |\partial^{\alpha_{2}} \psi|^{2} \bigg) e^{-2 \gamma |x - y|} \id x, \label{eq: w psi lhs pointwise est}
\end{align}
where the constant $C$ is independent of $y$ and $\gamma$. 

By \Cref{Corollary - General Est Ck version},
$w \in H^{k+4}(B_{1}(y)), \psi \in H^{k+2}(B_{1}(y))$, hence by the Sobolev
embedding theorem \cite{Evans}
$w \in C^{k+2,1/2}(B_{1}(y)), \psi \in C^{k,1/2}(B_{1}(y))$ and
\begin{align*}
\| w \|_{C^{k+2}(B_{1}(y))} \leq C \| w \|_{H^{k+4}(B_{1}(y))}, \\
\| \psi \|_{C^{k}(B_{1}(y))} \leq C \| \psi \|_{H^{k+2}(B_{1}(y))}.
\end{align*}
We use these estimates to show (\ref{eq: w psi lhs pointwise est})
\begin{align*}
\sum_{|\alpha_{1}| \leq k+2} |\partial^{\alpha_{1}} w(y)|^{2} &+ \sum_{|\alpha_{2}| \leq k} |\partial^{\alpha_{2}} \psi(y)|^{2} \\
&\leq \| w \|_{C^{k+2,1/2}(B_{1}(y))}^{2} + \| \psi \|_{C^{k,1/2}(B_{1}(y))}^{2} \\
&\leq C \left( \| w \|_{H^{k+4}(B_{1}(y))}^{2} + \| \psi \|_{H^{k+2}(B_{1}(y))}^{2} \right) \\
&= C \int_{B_{1}(y)} \bigg( \sum_{|\alpha_{1}| \leq k+4} |\partial^{\alpha_{1}} w|^{2} + \sum_{|\alpha_{2}| \leq k+2} |\partial^{\alpha_{2}} \psi|^{2} \bigg) \\
&\leq C \int_{\R} \bigg( \sum_{|\alpha_{1}| \leq k+4} |\partial^{\alpha_{1}} w|^{2} + \sum_{|\alpha_{2}| \leq k+2} |\partial^{\alpha_{2}} \psi|^{2} \bigg) e^{-2 \gamma |x - y|} \id x.
\end{align*}
Combining (\ref{eq: w and psi partial xi global final lemma est}) and (\ref{eq: w psi lhs pointwise est}), we obtain the desired estimate (\ref{eq: w and psi pointwise rhs exp integral final lemma est})
\begin{equation*}
\sum_{|\alpha_{1}| \leq k+2} |\partial^{\alpha_{1}} w(y)|^{2} + \sum_{|\alpha_{2}| \leq k} |\partial^{\alpha_{2}} \psi(y)|^{2} \leq C \int \sum_{|\beta| \leq k} |\partial^{\beta} R_{m}(x)|^{2} e^{-2 \gamma |x-y|} \id x.
\end{equation*}

\emph{Case 2.} Suppose (A) holds, then as
$m_{1} \in \mathcal{M}_{L^{2}}(M,\omega)$, by Proposition \ref{Proposition -
  General Regularity Est} and (\ref{eq: u2 phi2 reg A* est}),
\begin{align*}
\| u_{1} \|_{H^{4}_{\unif}(\R)} +
\| \phi_{1} \|_{H^{2}_{\unif}(\R)} &\leq C(M), \\
\| u_{2} \|_{H^{4}_{\unif}(\R)} +
\| \phi_{2} \|_{H^{2}_{\unif}(\R)} &\leq C(M').
\end{align*}
The argument used to show (\ref{eq: w k' partial est}) holds for $k' \leq 2$, so
for $\xi \in H_{1}$
\begin{align*}
\sum_{|\alpha_{1}| \leq 4} \int | \partial^{\alpha_{1}} w|^{2} \xi^{2} &\leq C \bigg( \int \sum_{|\beta_{1}| \leq 2} |\partial^{\beta_{1}} \Delta w|^{2} \xi^{2} + \int \sum_{|\beta_{2}| \leq 2} |\partial^{\beta_{2}} w|^{2} \xi^{2} \bigg).
\end{align*}
Then, as (\ref{eq: Delta partial w j part 2 est}) holds with $j_{1} \leq 2$, applying this and (\ref{eq: w k' partial est}) for $k' = 0$ yields
\begin{align}
\sum_{|\alpha_{1}| \leq 4} \int | \partial^{\alpha_{1}} w|^{2} \xi^{2}  &\leq
C \int \sum_{|\beta_{1}| \leq 2} \left( |\partial^{\beta_{1}} w|^{2} + |\partial^{\beta_{1}} \psi|^{2} \right) \xi^{2} \nonumber \\
&\leq
C \bigg( \int |\Delta w|^{2} \xi^{2} + \int \sum_{|\beta_{1}| \leq 1}|\partial^{\beta_{1}} w|^{2} \xi^{2} + \sum_{|\beta_{2}| \leq 2} |\partial^{\beta_{2}} w|^{2} \xi^{2} \bigg). \label{eq: w A est}
\end{align}
Similarly, the argument used to show (\ref{eq: psi k' partial est}) holds for $k' = 0$, to give
\begin{align}
\sum_{|\alpha_{2}| \leq 2} \int | \partial^{\alpha_{2}} \psi|^{2} \xi^{2} &\leq C \bigg( \int |\Delta \psi|^{2} \xi^{2} + \int \sum_{|\beta_{2}| \leq 1} |\partial^{\beta_{2}} \psi|^{2} \xi^{2} \bigg). \label{eq: psi A est}
\end{align}
Finally, combining (\ref{eq: w A est})--(\ref{eq: psi A est}) and applying (\ref{eq: w psi H2 xi alt est}) from Lemma \ref{Lemma - Exp est lemma 2}, we obtain the desired estimate (\ref{eq: w and psi partial xi global final lemma est}) with $k = 0$
\begin{align*}
\sum_{|\alpha_{1}| \leq 4} \int &| \partial^{\alpha_{1}} w|^{2} \xi^{2} + \sum_{|\alpha_{2}| \leq 2} \int | \partial^{\alpha_{2}} \psi|^{2} \xi^{2} \\ &\leq
C \bigg( \int \left( |\Delta w|^{2} + |\Delta \psi|^{2} \right) \xi^{2} + \int \sum_{|\beta_{1}| \leq 1}| \left( \partial^{\beta_{1}} w|^{2} + \partial^{\beta_{1}} \psi|^{2} \right) \xi^{2} \bigg) 
\leq C \int R_{m}^{2} \xi^{2}.
\end{align*} 
The argument used in Case 1 holds for $k = 0$ to show the desired estimate (\ref{eq: w and psi pointwise rhs exp integral final lemma est})
\begin{equation*}
\sum_{|\alpha_{1}| \leq 2} |\partial^{\alpha_{1}} w(y)|^{2} + |\psi(y)|^{2} \leq C \int |R_{m}(x)|^{2} e^{-2 \gamma |x-y|} \id x. \qedhere
\end{equation*} 
\end{proof} 
We have now established all technical prerequisites to prove Theorems
\ref{Theorem - One inf pointwise stability estimate alt} and \ref{Theorem -
  Exponential Est Integral RHS}.

\begin{proof}[Proof of \Cref{Theorem - Exponential Est Integral RHS}]
Applying Lemmas \ref{Lemma - Exp est lemma 1} -- \ref{Lemma - Exp Est Final Est} with the assumption (B) yields the desired estimates (\ref{eq: w and psi partial xi global est})--(\ref{eq: w and psi pointwise rhs exp integral est}).
\end{proof}

\begin{proof}[Proof of \Cref{Theorem - One inf pointwise stability estimate alt}]\emph{Case 1.} Suppose $\spt(m_{2})$ is bounded and $m_{2} \not\equiv 0$. We show assumption (A) is satisfied, so by applying Lemmas \ref{Lemma - Exp est lemma 1} -- \ref{Lemma - Exp Est Final Est} we obtain the desired estimates (\ref{eq: w and psi partial xi global onesided est})--(\ref{eq: w and psi pointwise rhs exp integral onesided est}).

Since $m_{2} \in L^{2}_{\unif}(\R)$, it follows that $m_{2} \in L^{1}(\R)$ and since $m_{2} \geq 0$ and $m_{2} \not\equiv 0$, it follows that $\int m_{2} > 0$. Then, define the minimisation problem
\begin{align*}
I^{\TFW}(m_{2}) = \inf \left\{ \, E^{\TFW}(v,m_{2})  \, \bigg| \, v \in H^{1}(\R), v \geq 0, \int_{\R} v^{2} = \int_{\R} m_{2} > 0 \, \right \},
\end{align*}
which yields a unique solution $(u_{2},\phi_{2})$ to (\ref{eq: u phi Rn pair}), satisfying $u_{2} > 0$, using \cite[Theorem 7.19]{Lieb_Summary}. Applying Proposition \ref{Proposition - General Regularity Est}, we obtain the uniform estimates
\begin{align*}
\| u_{2} \|_{H^{4}_{\unif}(\R)} + \| \phi_{2} \|_{H^{2}_{\unif}(\R)} &\leq C(M'),
\end{align*}
\emph{Case 2.} Suppose $m_{2} = u_{2} = \phi_{2} = 0$, then by definition $(u_{2},\phi_{2})$ solve (\ref{eq: u phi eq pair}) and (A) is satisfied, so Lemmas \ref{Lemma - Exp est lemma 1} -- \ref{Lemma - Exp Est Final Est} imply (\ref{eq: w and psi partial xi global onesided est})--(\ref{eq: w and psi pointwise rhs exp integral onesided est}).

\emph{Case 3.} \label{proof-case2} Suppose $\spt(m_{2})$ is unbounded. By \Cref{Proposition - Finite Regularity Est}, there exists $(u_{2},\phi_{2})$ solving (\ref{eq: u phi eq pair}) corresponding to $m_{2}$ and satisfying $u_{2} \geq 0$. As we can not guarantee that $u_{2} > 0$, we can not apply Lemmas \ref{Lemma - Exp est lemma 1} -- \ref{Lemma - Exp Est Final Est} directly to compare $(u_{1},\phi_{1})$ with $(u_{2},\phi_{2})$. Instead we follow the proof of \Cref{Proposition - Finite Regularity Est} and use a thermodynamic limit argument to construct a sequence of functions $(u_{2,R_{n}},\phi_{2,R_{n}})$ that satisfy (A) for sufficiently large $R_{n}$, which converges to $(u_{2},\phi_{2})$.

Let $R_{n} \uparrow \infty$ and define $m_{2,R_{n}} := m_{2} \cdot \chi_{B_{R_{n}}(0)}$, then as $m_{2} \in L^{2}_{\unif}(\R)$, $m_{2} \geq 0$ and $m_{2} \not \equiv 0$, it follows that $m_{2,R_{n}} \in L^{1}(\R)$ and for sufficiently large $R_{n}$, $\int m_{2,R_{n}} > 0$. By \Cref{Proposition - Finite Regularity Est}, the minimisation problem
\begin{align*}
I^{\TFW}(m_{2,R_{n}}) = \inf \left\{ \, E^{\TFW}(v,m_{2,R_{n}})  \, \bigg| \, v \in H^{1}(\R), v \geq 0, \int_{\R} v^{2} = \int_{\R} m_{2,R_{n}} \, \right \},
\end{align*}
defines a unique solution $(u_{2,R_{n}},\phi_{2,R_{n}})$ to (\ref{eq: u phi Rn pair}), satisfying $u_{2,R_{n}} > 0$ and
\begin{align}
\| u_{2,R_{n}} \|_{H^{4}_{\unif}(\R)} + \| \phi_{2,R_{n}} \|_{H^{2}_{\unif}(\R)} &\leq C(M'), \label{eq: u 2Rn unif est}
\end{align}
where the constant is independent of $R_{n}$. Passing to the limit in (\ref{eq: u 2Rn unif est}), there exist $u_{2} \in H^{4}_{\unif}(\R), \phi_{2} \in H^{2}_{\unif}(\R)$ such that, respectively, along a subsequence $u_{2,R_{n}}, \phi_{2,R_{n}}$ converges to $u_{2}, \phi_{2}$, weakly in $H^{4}(B_{R}(0))$ and $H^{2}(B_{R}(0))$, strongly in $H^{2}(B_{R}(0))$ and $L^{2}(B_{R}(0))$ for all $R>0$ and for all $|\alpha| \leq 2$, $\partial^{\alpha} u_{2,R_{n}}, \phi_{2,R_{n}}$ converges to $\partial^{\alpha} u_{2}, \phi_{2}$ pointwise. It follows that $(u_{2}, \phi_{2})$ is a solution of (\ref{eq: u phi eq pair}) corresponding to $m_{2}$, satisfying $u_{2} \geq 0$ and (\ref{eq: u2 phi2 reg est})
\begin{equation*}
\| u_{2} \|_{H^{4}_{\unif}(\R)} + \| \phi_{2} \|_{H^{2}_{\unif}(\R)} \leq C(M').
\end{equation*}
In addition, $(u_{1}',\phi_{1}') = (u_{1},\phi_{1})$ and $(u_{2}',\phi_{2}') = (u_{2,R_{n}},\phi_{2,R_{n}})$ satisfy assumption (A) for large $R_{n}$, so by Lemmas \ref{Lemma - Exp est lemma 1} -- \ref{Lemma - Exp Est Final Est} that there exist $C,\gamma > 0$, independent of $R_{n}$, such that for large $R_{n}$ and any $\xi \in H_{\gamma}$
\begin{align}
\int_{\R} \bigg( \sum_{|\alpha_{1}| \leq 4} |\partial^{\alpha_{1}} (u_{1} - u_{2,R_{n}})|^{2} + \sum_{|\alpha_{2}| \leq 2} |\partial^{\alpha_{2}} (\phi_{1} - \phi_{2,R_{n}})|^{2} \bigg) \xi^{2} \leq C \int_{\R} (m_{1} - m_{2,R_{n}})^{2} \xi^{2}, \label{eq: u1 u2Rn xi est}
\end{align}
and for any $y \in \R$,
\begin{align}
\sum_{|\alpha_{1}| \leq 2} |\partial^{\alpha_{1}} (u_{1} - u_{2,R_{n}})(y)|^{2} + |(\phi_{1} - \phi_{2,R_{n}})(y)|^{2} \leq C \int_{\R} |(m_{1} - m_{2,R_{n}})(x)|^{2} e^{-2\gamma |x - y|} \id x. \label{eq: u1 u2Rn exp est}
\end{align}
Using the pointwise convergence of $(u_{2,R_{n}},\phi_{2,R_{n}})$ to $(u_{2},\phi_{2})$, applying the Dominated Convergence Theorem and sending $R_{n} \to \infty$ in (\ref{eq: u1 u2Rn xi est})--(\ref{eq: u1 u2Rn exp est}) we obtain the desired estimates (\ref{eq: w and psi partial xi global onesided est})--(\ref{eq: w and psi pointwise rhs exp integral onesided est}).
\end{proof}

\subsection{Proofs of Applications}
\label{Subsection - Proofs of Applications}
The proof of \Cref{Proposition - Infinite Finite Ground state comparison} is an 
application of \Cref{Theorem - One inf pointwise stability estimate alt}.

\begin{proof}[Proof of \Cref{Proposition - Infinite Finite Ground state comparison}]

Observe that $(u_{1},\phi_{1}) = (u,\phi)$ and $(u_{2},\phi_{2}) = (u_{\Omega},\phi_{\Omega})$ satisfy the conditions of \Cref{Theorem - One inf pointwise stability estimate alt}, there exist $C,\wt \gamma > 0$, independent of $\Omega$, such that for all $y \in \R$
\begin{align*}
\sum_{|\alpha| \leq 2} |\partial^{\alpha} (u - u_{\Omega})(y)|^{2} + |(\phi - \phi_{\Omega})(y)|^{2} &\leq C \int_{\R} |(m - m_{\Omega})(x)|^{2} e^{-2\gamma |x - y|} \id x.
\end{align*}
Now let $y \in \Omega$, $d = {\rm dist}(y, \partial\Omega)$ and observe that
$m - m_{\Omega} \in
L^{2}_{\unif}(\R)$.
Since $\sup_{x \in A} e^{- 2\wt \gamma |x|} \leq C \inf_{x \in A} e^{-2 \wt \gamma |x|}$ for any
  $A \subset B_1(z), z \in \R,$ with $C = C(\wt \gamma)$ independent of $z$, we have the bound
\begin{displaymath}
  \int_{B_d(y)^c} |(m - m_{\Omega})(x)|^{2} e^{- 2 \wt \gamma |x-y|} \id x \leq C \left( \|m\|_{L^2_\unif(\R)}^2 + \|m_{\Omega}\|_{L^2_\unif(\R)}^2 \right) \int_{B_{d}(y)^c} e^{- 2\wt \gamma |x-y|} \id x.
\end{displaymath}
Therefore, we obtain the desired estimate (\ref{eq: w and psi Rbuf est})
\begin{align*}
\int_{\R} &|(m - m_{\Omega})(x)|^{2} e^{-2\gamma |x - y|} \id x =
\int_{\Omega^{c}} |(m - m_{\Omega})(x)|^{2} e^{-2\gamma |x - y|} \id x \\ &\leq \int_{\Omega_{\rm buf}^{\rm c}} m(x)^{2} e^{-2 \wt \gamma |x - y|} \id x \leq C M^2 \int_{B_{R_{\rm buf}}(0)^{\rm c}} e^{-2\wt\gamma |x-y|} \id x 
\leq C M^2 \int_{B_d(y)^c} e^{-2\wt\gamma |x-y|} \id x \\
&= C M^2 (1+d^2) e^{-2\wt \gamma d} 
\leq C M^2 e^{- 2\gamma d},
\end{align*}
for any given $0 < \gamma < \wt\gamma$, where $C = C(\wt\gamma,\gamma)$.
\end{proof}

Next, we now prove \Cref{Corollary - Exponential Estimates Consequences} as a
direct consequence of Theorems \ref{Theorem - One inf pointwise stability estimate alt} and \ref{Theorem - Exponential Est Integral RHS}. 

\begin{proof}[Proof of \Cref{Corollary - Exponential Estimates Consequences}] 
  Let $k \in \mathbb{N}_{0}$ and
  $m_{1}, m_{2} \in \mathcal{M}_{H^{k}}(M,\omega)$ and recall the estimate
  (\ref{eq: w and psi pointwise rhs exp integral est}) of \Cref{Theorem -
    Exponential Est Integral RHS}, that there exists $C, \wt \gamma > 0$ such
  that
\begin{align}
\sum_{|\alpha_{1}| \leq k+2} |\partial^{\alpha_{1}} w(y)|^{2} + \sum_{|\alpha_{2}| \leq k} |\partial^{\alpha_{2}} \psi(y)|^{2} &\leq C \int \sum_{|\beta| \leq k} |\partial^{\beta} R_{m}(x)|^{2} e^{-2 \wt \gamma |x-y|} \id x. \label{eq: Local est}
\end{align}

(1) $R_m$ having compact support is a special case of exponential
decay, hence we consider only the case
$\sum_{|\beta| \leq k} |\partial^{\beta} R_{m}(x)|^{2} \leq C e^{- 2\gamma' |x -
  z|}$.
It is straightforward to see that there exists $C, \gamma > 0$, depending on
$\wt \gamma, \gamma'$, such that
\begin{align}
\int_{\R} e^{- 2\gamma' |x - z|} e^{-2 \wt \gamma |x - y|} \id x \leq C e^{- 2 \gamma |y - z|}.
\label{eq: exp exp est}
\end{align}
Hence (\ref{eq: w psi local exp est}) follows immediately from 
combining (\ref{eq: Local est}) and (\ref{eq: exp exp est}).

(2) Suppose that $R_{m}$ satisfies the algebraic decay
$\sum_{|\beta| \leq k}| \partial^{\beta} R_{m}(x)|^{2} \leq C (1 + |x|)^{-2r}$.
It is again elementary to show that there exists $C = C(r) > 0$ such that for
all $y \in \R$
\begin{equation}
 \int_{\R} (1 + |x|)^{-2r} e^{-2\gamma |x-y|} \id x 
 \leq C (1 + |y|)^{-2r}. \label{eq: decay rhs est}
\end{equation}
Combining (\ref{eq: decay rhs est}) with (\ref{eq: Local est}) gives the desired estimate (\ref{eq: w psi decay est}).

(3) Now suppose that $R_{m} \in H^{k}(\R)$ and recall (\ref{eq: w and psi partial xi global est}) of \Cref{Theorem - Exponential Est Integral RHS}, that there exists $C, \wt \gamma > 0$ such that for all $\xi \in H_{\wt \gamma}$
\begin{align}
\int_{\R} \bigg( \sum_{|\alpha_{1}| \leq k+4} |\partial^{\alpha_{1}} w|^{2} + \sum_{|\alpha_{2}| \leq k+2} |\partial^{\alpha_{2}} \psi|^{2} \bigg) \xi^{2} \leq C \int_{\R} \sum_{|\beta| \leq k} |\partial^{\beta} R_{m}|^{2} \xi^{2}. \label{eq: Xi est}
\end{align}
For any $0 < \gamma \leq \wt \gamma$, the function $\xi_{\gamma}(x) = e^{-\gamma|x|} \in H_{\wt \gamma}$. Then substituting $\xi_{\gamma}$ into (\ref{eq: Xi est}) yields
\begin{align*}
  &\int_{\R} \bigg( \sum_{|\alpha_{1}| \leq k+4} |\partial^{\alpha_{1}} w(x)|^{2} + \sum_{|\alpha_{2}| \leq k+2} |\partial^{\alpha_{2}} \psi(x)|^{2} \bigg) e^{-2\gamma |x|} \id x \\
& \hspace{2cm} \leq C \int_{\R} \sum_{|\beta| \leq k} |\partial^{\beta} R_{m}(x)|^{2} e^{-2\gamma |x|} \id x 
\leq C \int_{\R} \sum_{|\beta| \leq k} |\partial^{\beta} R_{m}(x)|^{2} \id x.
\end{align*}
Sending $\gamma \to 0$ and applying the Dominated Convergence Theorem yields the desired estimate
(\ref{eq: w psi Sobolev est}).

Under the assumptions of Theorem \ref{Theorem - One inf pointwise stability
estimate alt} with $k = 0$, other than applying Theorem \ref{Theorem - One
inf pointwise stability estimate alt} instead of Theorem \ref{Theorem - Exponential Est Integral RHS}, the proof is identical.
\end{proof}

We turn to the proofs of the charge-neutrality estimates.

\begin{proof}[Proof of \Cref{Theorem - Neutrality Estimate}]
Recall that $\rho_{12} = m_{1} - u_{1}^{2} - m_{2} + u_{2}^{2}$. Let $R > 0$ and choose
  $\varphi_{R} \in C^{\infty}_{\textnormal{c}}(\R)$ satisfying
  $0 \leq \varphi_{R} \leq 1,$ $\varphi_{R} = 1$ on $B_{R}(0)$,
  $\varphi_{R} = 0$ outside $B_{R+1}(0)$ and
  $\|\varphi_R \|_{W^{2,\infty}(\R)} \leq c_\varphi$. Let $A_R := B_{R+1}(0)
  \setminus B_R(0)$.  Recall (\ref{eq: psi initial
    eq}), that the difference $\psi := \phi_{1} - \phi_{2}$ solves
\begin{equation}
  \label{eq:charge-neutr-prf-Delta-psi}
- \Delta \psi  = 4 \pi \rho_{12}
\end{equation}
pointwise. Testing (\ref{eq:charge-neutr-prf-Delta-psi}) with $\varphi_{R}$ and using integration
by parts yields
\begin{align*}
\int_{B_{R+1}(0)} \rho_{12} \varphi_{R} 
  = - \frac{1}{4 \pi} \int_{A_R} \psi \Delta \varphi_{R}&. \end{align*}
Since $\varphi_{R} = 1$ on $B_{R}(0)$, we deduce
\begin{align*}
  \int_{B_{R}(0)} \rho_{12} = - \frac{1}{4 \pi} \int_{A_R} \psi \Delta \varphi_{R} - \int_{A_R} \rho_{12} \varphi_{R},
\end{align*}
and hence
\begin{align}
\bigg| \int_{B_{R}(0)} \rho_{12} \bigg| \leq C \int_{A_R} 
  \left( |m_{1} - m_{2}| + |u_{1} - u_{2}| + |\phi_{1} - \phi_{2}| \right),
\label{eq: charge rearranged eq}
\end{align}
where $C$ depends only on $c_\varphi$. Observe that $|A_R| \leq C R^2$.

(1) By (\ref{eq: w psi local exp est}) of \Cref{Corollary - Exponential
  Estimates Consequences} there exists $C, \wt \gamma > 0$ such that
\begin{align*}
|(\phi_{1} - \phi_{2})(x)| + |(m_{1} - m_{2})(x)| +|(u_{1} - u_{2})(x)| \leq C e^{- \wt \gamma|x|}. 
\end{align*}
Then using (\ref{eq: charge rearranged eq}) we deduce
\begin{align}
\bigg| \int_{B_{R}(0)} \rho_{12} \bigg| &\leq C \int_{A_R} \left( |m_{1} - m_{2}| + |u_{1} - u_{2}| + |\phi_{1} - \phi_{2}| \right) \nonumber \\
&\leq C \int_{A_R} e^{-\wt \gamma |x|} \id x \leq C (1+R^{2}) e^{-\wt \gamma R}, \label{eq: exp charge est}
\end{align}
which implies (\ref{eq: local pert BR exp est}) for any $0 < \gamma < \wt\gamma$.


(2) Suppose now that  $|(m_{1}- m_{2})(x)| \leq C (1 + |x|)^{-r}$, then
 using (\ref{eq: charge rearranged eq}) we obtain
\begin{align*}
\bigg| \int_{B_{R}(0)} \rho_{12} \bigg| &\leq C \int_{A_R} (1+|x|)^{-r} \leq C (1+R)^{2-r}.
\end{align*}

(3) Suppose $m_{1} - m_{2} \in L^{2}(\R)$, then by \Cref{Corollary - Exponential
  Estimates Consequences}, $u_{1} - u_{2}, \phi_{1} - \phi_{2} \in
H^{2}(\R)$, hence by Proposition \ref{Proposition - General Regularity Est} $u_{1}^{2} - u_{2}^{2} \in L^{2}(\R)$.
Taking the Fourier transform,
$\widehat{f}(k) = \int_{\R} f(x) e^{-2\pi i k \cdot x} \id x$, of
\eqref{eq:charge-neutr-prf-Delta-psi} and rearranging, we obtain
\begin{align*}
\frac{\widehat{\rho}_{12}(k)}{|k|^2} = \pi \widehat{\psi}(k) \in L^{2}(\R).
\end{align*}
Arguing as in \cite{C/Ehr} we show that $0$ is a Lebesgue point for $\widehat{\rho}_{12}$. For $\varepsilon >0$,
\begin{align*}
\frac{1}{|B_{\varepsilon}(0)|} \int_{B_{\varepsilon}(0)} |\widehat{\rho}_{12}(k)| \id k &\leq \frac{1}{|B_{\varepsilon}(0)|} \left( \int_{B_{\varepsilon}(0)} |k|^4 \id k \right)^{1/2} \left( \int_{B_{\varepsilon}(0)} \frac{|\widehat{\rho}_{12}(k)|^{2}}{|k|^4} \id k \right)^{1/2} \\
&\leq C \varepsilon^{1/2} \| \phi_{1} - \phi_{2} \|_{L^{2}(\R)},
\end{align*}
which tends to 0 as $\varepsilon \to 0$, as claimed.
%
\end{proof}

\subsection{Proof of energy locality}
\label{sec:proofs energy locality}
To prove \Cref{Theorem - Forcing Est}, we first establish the existence, uniqueness and regularity of the solutions to the linearised TFW equations.

Fix $Y = (Y_j)_{j \in \mathbb{N}} \in \mathcal{Y}_{L^{2}}(M,\omega)$ and let
$m = m_{Y} \in \mathcal{M}_{L^{2}}(M,\omega)$.  Let
$V \in \R \smallsetminus \{0\}$, $k \in \mathbb{N}$ and for $h \in [0,1]$ define
\begin{align}
Y^{h} = \{ \, Y_{j} + \delta_{jk} h V  \, | \, j \in \mathbb{N} \, \}, \label{eq: Y h def}
\end{align}
and the associated nuclear configuration
\begin{align}
m_{h}(x) = m(x) + \eta(x - Y_{k} - h V) - \eta(x - Y_{k}). \label{eq: mh def}
\end{align}

\begin{lemma}
\label{Lemma - M1 omega1 Space}
There exist $M', \omega_{0}', \omega_{1}' > 0$,
such that for $\omega' = (\omega_{0}', \omega_{1}')$, $m_{h} \in \mathcal{M}_{L^{2}}(M',\omega')$ for all $h \in [0,1]$.
In particular, $Y^{h} \in \mathcal{Y}_{L^{2}}(M',\omega')$ for all $h \in [0,1]$.
\end{lemma}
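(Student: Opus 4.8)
The plan is to reduce the statement to a direct verification, for the single configuration $m_{h}$, of the two defining conditions of $\mathcal{M}_{L^{2}}(M',\omega')$ in \eqref{eq: M L2 def}. Combining the definitions \eqref{eq: m Y ass def}, \eqref{eq: Y h def} and \eqref{eq: mh def} one has the pointwise identity
\begin{align*}
  m_{Y^{h}}(x) = \sum_{j \in \mathbb{N}} \eta(x - Y_{j} - \delta_{jk} h V)
  = m_{Y}(x) - \eta(x - Y_{k}) + \eta(x - Y_{k} - h V) = m_{h}(x),
\end{align*}
so once $m_{h} \in \mathcal{M}_{L^{2}}(M',\omega')$ is established, $Y^{h} \in \mathcal{Y}_{L^{2}}(M',\omega')$ follows immediately from \eqref{eq: Y space def}. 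The first thing to record is that $m_{h} \geq 0$ for every $h \in [0,1]$: the term $\eta(\cdot - Y_{k})$ subtracted in \eqref{eq: mh def} is precisely one of the non-negative summands making up $m_{Y}$, hence $m_{Y} - \eta(\cdot - Y_{k}) = \sum_{j \neq k} \eta(\cdot - Y_{j}) \geq 0$, and adding the non-negative bump $\eta(\cdot - Y_{k} - hV)$ preserves positivity.

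For the $L^{2}_{\unif}$-bound I would use the triangle inequality together with the translation invariance of $\|\cdot\|_{L^{2}_{\unif}(\R)}$ and the fact that $\eta \in C^{\infty}_{\textnormal{c}}(B_{R_{0}}(0))$ is fixed:
\begin{align*}
  \| m_{h} \|_{L^{2}_{\unif}(\R)} \leq \| m_{Y} \|_{L^{2}_{\unif}(\R)} + 2 \| \eta \|_{L^{2}_{\unif}(\R)} \leq M + 2 \| \eta \|_{L^{2}(\R)} =: M',
\end{align*}
which is finite and independent of $h$, $V$, $k$ and $Y$. For the volumetric lower bound, fix $R > 0$ and $x \in \R$; discarding the non-negative term $\eta(\cdot - Y_{k} - hV)$ and using $0 \leq \int_{B_{R}(x)} \eta(z - Y_{k}) \id z \leq \int_{\R} \eta = 1$ together with $m_{Y} \in \mathcal{M}_{L^{2}}(M,\omega)$ gives
\begin{align*}
  \int_{B_{R}(x)} m_{h}(z) \id z \geq \int_{B_{R}(x)} m_{Y}(z) \id z - 1 \geq \omega_{0} R^{3} - \omega_{1} - 1 .
\end{align*}
Taking the infimum over $x \in \R$ shows that the second condition of \eqref{eq: M L2 def} holds with $\omega_{0}' := \omega_{0}$ and $\omega_{1}' := \omega_{1} + 1$, again uniformly in $h \in [0,1]$.

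Combining the two estimates yields $m_{h} \in \mathcal{M}_{L^{2}}(M',\omega')$ for all $h \in [0,1]$ with $\omega' = (\omega_{0}',\omega_{1}')$, and $M'$, $\omega'$ depending only on $M$, $\omega$ and the fixed mollifier $\eta$. This is an elementary bookkeeping argument with essentially no obstacle; the only point requiring a moment's care is the non-negativity of $m_{h}$, which is why it matters that the bump subtracted in \eqref{eq: mh def} cancels a genuine summand of $m_{Y}$ rather than being an arbitrary negative correction. Everything else is uniform in $h$ because the shifted bump $\eta(\cdot - Y_{k} - hV)$ is simply discarded in the lower bound and contributes only a fixed constant to the upper bound.
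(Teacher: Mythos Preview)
Your proof is correct and follows essentially the same approach as the paper: verify the $L^{2}_{\unif}$ upper bound via the triangle inequality and the volumetric lower bound by discarding the added bump and using $\int \eta = 1$, arriving at the same $\omega' = (\omega_{0}, \omega_{1}+1)$. You are slightly more explicit than the paper in checking $m_{h} \geq 0$ and the identity $m_{Y^{h}} = m_{h}$, and your constant $M' = M + 2\|\eta\|_{L^{2}}$ is marginally looser than the paper's $M + \|\eta\|_{L^{2}}$ (which exploits $0 \leq m - \eta(\cdot - Y_{k}) \leq m$ pointwise), but this is immaterial.
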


\begin{proof} [Proof of \Cref{Lemma - M1 omega1 Space}]
Recall that $m_{h}, \eta \geq 0$, $\eta \in C^{\infty}(\R)$ and $\int_{\R} \eta = 1$, then
\begin{align*}
\sup_{x \in \R} \| m_{h} \|_{L^{2}(B_{1}(x))} &\leq \sup_{x \in \R} \left( \| m \|_{L^{2}(B_{1}(x))} + \left( \int_{B_{1}(x)} \eta(z- Y_{k} - hV)^{2} \id z \right)^{1/2} \right) \\ &\leq M + \| \eta \|_{L^{2}(\R)} =: M'.
\end{align*}

Since $m \in \mathcal{M}_{L^{2}}(M,\omega)$, with $\omega = (\omega_{0},\omega_{1})$, for all $R > 0$,
\begin{align*}
\inf_{x \in \R} \int_{B_{R}(x)} m_{h}(z) \id z \geq 
\inf_{x \in \R} \int_{B_{R}(x)} m(z) \id z - \int_{B_{R}(x)} \eta(z - Y_{k}) \id z \geq \omega_{0} R^{3} - \omega_{1} - 1,
\end{align*}
hence for $\omega' = (\omega_{0}, \omega_{1}+1)$,
$m_{h} \in \mathcal{M}_{L^{2}}(M',\omega')$ for all $h \in [0,1]$, as claimed.
%
\end{proof}

As $m_{h} \in \mathcal{M}_{L^{2}}(M',\omega')$ for all $h \in [0,1]$, by \Cref{Theorem - C/LB/L Main Result} there exists a corresponding ground state $(u_{h},\phi_{h})$. Also, let $(u,\phi) = (u_{0},\phi_{0})$. We now use \Cref{Corollary - Exponential Estimates Consequences} to compare $(u_{h},\phi_{h})$ with $(u,\phi)$ and rigorously linearise the TFW equations.

\begin{lemma}
\label{Lemma - Linearised TFW Results}
Let $Y \in \mathcal{Y}_{L^{2}}(M,\omega)$ and let $m = m_{Y} \in \mathcal{M}_{L^{2}}(M,\omega)$. Also, let $k \in \mathbb{N}$, $V \in \R \smallsetminus\{0\}$ and $h_{0} = \min\{ 1, |V|^{-1}\}$. For $h \in [0,h_{0}]$ define
\begin{align*}
m_{h}(x) = m(x) + \eta(x - Y_{k} - h V) - \eta(x - Y_{k}).
\end{align*}
There exist $C = C(M',\omega')$, $\gamma_{0} = \gamma_{0}(M',\omega') > 0$, independent of $h$ and $|V|$, such that
\begin{align}
\sum_{|\alpha| \leq 2} \big( |\partial^{\alpha}(u_{h} - u)(x)| + |\partial^{\alpha}(\phi_{h} - \phi)(x)| \big) + |(m_{h} - m)(x)| &\leq C h e^{-\gamma_{0} |x - Y_{k}|}, \label{eq: uh - u0 exp est}
\\ \|u_{h} - u\|_{H^{4}(\R)} + \|\phi_{h} - \phi\|_{H^{2}(\R)} \leq C \| m_{h} - m \|_{L^{2}(\R)} &\leq C h. \label{eq: uh - u0 L2 est}
\end{align}
Moreover, the limits
\begin{align*}
\ou = \lim_{h \to 0} \frac{u_{h} - u}{h}, \quad \ophi = \lim_{h \to 0} \frac{\phi_{h} - \phi}{h}, \quad \om = \lim_{h \to 0} \frac{m_{h} - m}{h},
\end{align*}
exist and are the unique solution to the linearised TFW equations
\begin{subequations}
\label{eq: ou ophi eq pair}
\begin{align}
- \Delta \ou &+ \left( \frac{35}{9} u^{4/3} - \phi \right) \ou - u \ophi = 0, \label{eq: ou eq} \\
- \Delta \ophi &= 4 \pi \left( \om - 2u \ou \right). \label{eq: ophi eq}
\end{align}
\end{subequations}
Moreover, $\ou \in H^{4}(\R), \ophi \in H^{2}(\R), \om \in C^{\infty}_{\textnormal{c}}(\R)$ and satisfy
\begin{align}
&\sum_{|\alpha| \leq 2} \big( |\partial^{\alpha}\ou(x)| + |\partial^{\alpha}\ophi(x)| \big) + |\om(x)| \leq C e^{- \gamma_{0} |x - Y_{k}|}, \label{eq: ou ophi exp est} \\
&\|\ou\|_{H^{4}(\R)} + \|\ophi\|_{H^{2}(\R)} \leq C \| \om \|_{L^{2}(\R)}. \label{eq: ou ophi L2 est}
\end{align}

\end{lemma}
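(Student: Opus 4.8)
The plan is to combine the pointwise stability estimate of \Cref{Theorem - One inf pointwise stability estimate alt} with a difference-quotient argument, re-using the energy-estimate machinery developed for the nonlinear system. First, for the estimates \eqref{eq: uh - u0 exp est}--\eqref{eq: uh - u0 L2 est}: by \Cref{Lemma - M1 omega1 Space}, $m_{h} \in \mathcal{M}_{L^{2}}(M',\omega')$ for every $h \in [0,h_{0}]$, so \Cref{Theorem - C/LB/L Main Result} produces the ground states $(u_{h},\phi_{h})$, and Propositions \ref{Proposition - General Regularity Est} and \ref{Proposition - Uniform inf u estimate} give uniform regularity together with $\inf u, \inf u_{h} \geq c_{M',\omega'} > 0$; moreover, since $\eta$ is smooth and the nuclei $(Y_{j})$ have uniformly bounded local density, $m$ and $m_{h}$ lie in $\mathcal{M}_{H^{\ell}}$ for every $\ell$, with bounds depending only on $\ell, M', \omega'$ and $\eta$. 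The difference $m_{h} - m = \eta(\cdot - Y_{k} - hV) - \eta(\cdot - Y_{k})$ is smooth, supported in $B_{R_{0}+1}(Y_{k})$ (as $h|V| \leq 1$), and $\sum_{|\beta| \leq \ell}|\partial^{\beta}(m_{h}-m)| \leq C h\, \chi_{B_{R_{0}+1}(Y_{k})}$. Applying \Cref{Theorem - One inf pointwise stability estimate alt} with $m_{1} = m$, $m_{2} = m_{h}$ and integrating the exponential tail exactly as in the proof of \Cref{Corollary - Exponential Estimates Consequences} yields \eqref{eq: uh - u0 exp est}, while sending $\gamma \to 0$ in the global estimate \eqref{eq: w and psi partial xi global onesided est} (as in \Cref{Corollary - Exponential Estimates Consequences}(3)) yields \eqref{eq: uh - u0 L2 est}.

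For the linearisation, set $w_{h} = (u_{h}-u)/h$, $\psi_{h} = (\phi_{h}-\phi)/h$ and $\mu_{h} = (m_{h}-m)/h$; by the previous step $\{w_{h}\}, \{\psi_{h}\}$ are bounded in $H^{4}(\R), H^{2}(\R)$ and dominated uniformly by $C e^{-\gamma_{0}|\cdot - Y_{k}|}$, while $\mu_{h} \to \om := -V \cdot \nabla \eta(\cdot - Y_{k}) \in C^{\infty}_{\textnormal{c}}(\R)$. Subtracting the TFW equations for $(u_{h},\phi_{h})$ and $(u,\phi)$ and dividing by $h$, the pair $(w_{h},\psi_{h})$ solves $- \Delta w_{h} + \big(\tfrac{5}{3} g_{h} - \phi\big) w_{h} = u_{h} \psi_{h}$ and $- \Delta \psi_{h} = 4\pi\big(\mu_{h} - (u_{h}+u) w_{h}\big)$, where $g_{h} := (u_{h}^{7/3} - u^{7/3})/(u_{h} - u)$. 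Along a subsequence, $w_{h} \rightharpoonup \ou$ in $H^{4}(\R)$ and $\psi_{h} \rightharpoonup \ophi$ in $H^{2}(\R)$, with strong convergence in $H^{3}_{\loc}$ and $H^{1}_{\loc}$; by \eqref{eq: uh - u0 exp est}, $u_{h} \to u$ in $C^{2}_{\loc}(\R)$, and since $\inf u \geq c_{M',\omega'} > 0$ this forces $g_{h} \to \tfrac{7}{3} u^{4/3}$ locally uniformly. Passing to the limit in the two equations (using $\tfrac{5}{3}\cdot\tfrac{7}{3} = \tfrac{35}{9}$) shows that $(\ou,\ophi)$ solves \eqref{eq: ou ophi eq pair}.

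To obtain uniqueness, full convergence, and the limiting estimates: if $(\ou_{1},\ophi_{1}), (\ou_{2},\ophi_{2}) \in H^{4}(\R)\times H^{2}(\R)$ both solve \eqref{eq: ou ophi eq pair} with the same $\om$, their difference $(\delta u, \delta\phi)$ solves the homogeneous linearised system; testing the $\delta u$-equation with $\delta u\,\xi^{2}$ and with $\delta\phi\,\xi^{2}$, and the $\delta\phi$-equation with $\delta\phi\,\xi^{2}$, using that $L_{1} = -\Delta + \tfrac{5}{3} u^{4/3} - \phi \geq 0$ (\Cref{Lemma - Eigenvalue Argument}) together with $\tfrac{35}{9} u^{4/3} - \phi = (\tfrac{5}{3} u^{4/3} - \phi) + \tfrac{20}{9} u^{4/3} \geq (\tfrac{5}{3} u^{4/3} - \phi) + \tfrac{20}{9} c_{M',\omega'}^{4/3}$, and then repeating the estimates of Lemmas \ref{Lemma - Exp est lemma 1} and \ref{Lemma - Exp est lemma 2} with vanishing right-hand side to absorb $\int \delta\phi^{2}\xi^{2}$, forces $\delta u \equiv \delta\phi \equiv 0$ after restricting $\xi$ to $H_{\gamma}$ with $\gamma$ small and taking $\xi = e^{-\gamma|\cdot - y|}$. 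Hence the subsequential limit is independent of the subsequence, so the full difference quotients converge to $(\ou,\ophi,\om)$. Dividing \eqref{eq: uh - u0 exp est}--\eqref{eq: uh - u0 L2 est} by $h$ and letting $h \to 0$ (or, alternatively, applying the scheme just used directly to the linear system, whose right-hand side $\om$ is supported in $B_{R_{0}}(Y_{k})$) gives \eqref{eq: ou ophi exp est}--\eqref{eq: ou ophi L2 est}, and $\ou \in H^{4}(\R), \ophi \in H^{2}(\R)$ follow from the uniform bounds together with weak lower semicontinuity, or from elliptic regularity applied to \eqref{eq: ou ophi eq pair}.

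The step I expect to be the main obstacle is the passage to the limit in the second paragraph: controlling the nonlinear terms $u_{h}^{7/3}$ and $\phi_{h} u_{h}$ hinges on the locally uniform convergence $u_{h} \to u$, which is not elementary and is available only through the pointwise stability estimate \Cref{Theorem - One inf pointwise stability estimate alt} combined with the uniform lower bound $\inf u > 0$; and the uniqueness claim for the linearised system requires a second full pass through the energy-estimate scheme of Lemmas \ref{Lemma - Exp est lemma 1} and \ref{Lemma - Exp est lemma 2}.
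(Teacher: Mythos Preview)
Your argument is correct and runs parallel to the paper's, with two small but genuine deviations that are worth pointing out.

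First, your treatment of \eqref{eq: uh - u0 exp est} is cleaner than the paper's. You observe that $m, m_h \in \mathcal{M}_{H^{\ell}}$ for every $\ell$ (since $m_Y$ is a locally finite sum of translated copies of a $C^\infty_{\textnormal{c}}$ bump), and then appeal to the higher-order stability estimate to get pointwise exponential decay of $\partial^{\alpha}(\phi_h-\phi)$ for $|\alpha|\le 2$ in one stroke. The paper instead applies the $k=0$ result to get only $|(\phi_h-\phi)|$ pointwise, and then spends a page upgrading to $C^{2}$ via Schauder estimates on $-\Delta(\phi_h-\phi) = 4\pi(m_h-m-u_h^2+u^2)$. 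Your route is shorter; just note that the theorem you must invoke for this is \Cref{Theorem - Exponential Est Integral RHS} with $k=2$ (which is what \Cref{Corollary - Exponential Estimates Consequences} uses internally), not \Cref{Theorem - One inf pointwise stability estimate alt}, whose pointwise conclusion \eqref{eq: w and psi pointwise rhs exp integral onesided est} controls only $\psi$ itself and not $\partial^{\alpha}\psi$. You have the right hypothesis in place, so this is a citation slip, not a gap.

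Second, for uniqueness of the linearised system the paper simply cites \cite[Corollary 2.3]{Blanc_Uniqueness}, while you re-derive it by running the energy scheme of Lemmas \ref{Lemma - Exp est lemma 1}--\ref{Lemma - Exp est lemma 2} on the linear problem, exploiting the splitting $\tfrac{35}{9}u^{4/3}-\phi = (\tfrac{5}{3}u^{4/3}-\phi)+\tfrac{20}{9}u^{4/3}$ together with $L_1 = -\Delta + \tfrac{5}{3}u^{4/3}-\phi \geq 0$ and the uniform lower bound on $u$. This is exactly the content of the cited result, so your self-contained argument is equivalent. The passage to the limit in the difference quotients and the derivation of \eqref{eq: ou ophi exp est}--\eqref{eq: ou ophi L2 est} by taking $h\to 0$ in the uniform-in-$h$ bounds match the paper's proof.
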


\begin{proof}[Proof of \Cref{Lemma - Linearised TFW Results}]
By \Cref{Proposition - General Regularity Est} and \Cref{Proposition - Uniform inf u estimate}, for $h \in [0,h_{0}]$ the ground state $(u_{h},\phi_{h})$ satisfies
\begin{align}
&\| u_{h} \|_{H^{4}_{\unif}(\R)} + \| \phi_{h} \|_{H^{2}_{\unif}(\R)} \, \leq C(M'), \label{eq: uh phih reg est} \\
&\inf_{x \in \R} u_{h}(x) \geq c_{M',\omega'} > 0, \label{eq: inf uh est}
\end{align}
independently of $h$. From (\ref{eq: mh def}), it follows that
\begin{align}
|(m_{h}-m)(x)| &= |\eta(x - Y_{k} - h V) - \eta(x - Y_{k})| \nonumber \\ &\leq h |V| \int_{0}^{1} |\nabla \eta(x - Y_{k} - thV )| \id t \leq \int_{0}^{1} |\nabla \eta(x - Y_{k} - thV )| \id t. \label{eq: m mh rhs est}
\end{align}
For all $h \in [0,h_{0}]$, $ \spt(m_{h} - m) \subset B_{R_{0}+1}(Y_{k})$, so by \Cref{Corollary - Exponential Estimates Consequences} and (\ref{eq: m mh rhs est}) it follows that there exists $\gamma_{0} > 0$ such that
\begin{align}
&\sum_{|\alpha| \leq 2} |\partial^{\alpha}(u_{h} - u)(x)| + |(\phi_{h} - \phi)(x)| + |(m_{h} - m)(x)| \leq C h e^{-\gamma_{0} |x - Y_{k}|}, \label{eq: uh phih diff exp est 1}
\end{align}
and (\ref{eq: uh - u0 L2 est}) holds
\begin{align}
&\|u_{h} - u\|_{H^{4}(\R)} + \|\phi_{h} - \phi\|_{H^{2}(\R)} \leq C \| m_{h} - m \|_{L^{2}(\R)} \leq C h. \label{eq: u h - u 0 L2 est}
\end{align}
Due to the uniform estimates (\ref{eq: uh phih reg est})--(\ref{eq: inf uh est}) and (\ref{eq: m mh rhs est}), the constants appearing on the right-hand side are independent of $h$.

We now show
\begin{align}
\sum_{|\alpha| \leq 2} |\partial^{\alpha} (\phi_{h} - \phi)(x)| \leq C e^{-\gamma_{0}|x - Y_{k}|}. \label{eq: phi h C2 exp est}
\end{align}
Observe that for $h \in (0,h_{0}]$ as $\spt(m_{h} - m) \subset B_{R_{0}+1}(Y_{k})$, by the triangle inequality $x \in B_{R_{0}+3}^{\rm c}(Y_{k})$ implies $B_{2}(x) \subset B_{R_{0}+1}^{\rm c}(Y_{k})$. Consequently, for $x \in B_{R_{0}+3}^{\rm c}(Y_{k})$
\begin{align}
\| m_{h} - m \|_{C^{0,1/2}(B_{2}(x))} = 0, \label{eq: mh C0 part 1}
\end{align}
and for $x \in B_{R_{0}+3}(Y_{k})$, by (\ref{eq: mh def}) it follows that
\begin{align}
\| m_{h} - m \|_{C^{0,1/2}(B_{2}(x))} \leq 2 \| \eta \|_{C^{0,1/2}(B_{2}(x))}. \label{eq: mh C0 part 2}
\end{align} 
By (\ref{eq: mh C0 part 1})--(\ref{eq: mh C0 part 2}) we deduce that $x \mapsto \| m_{h} - m_{0} \|_{C^{0,1/2}(B_{2}(x))}$ is a bounded function with support in $B_{R_{0}+3}(Y_{k})$, hence there exists $C > 0$ such that
\begin{align}
\| m_{h} - m \|_{C^{0,1/2}(B_{2}(x))} \leq C e^{-\gamma_{0}|x - Y_{k}|}. \label{eq: mh diff exp est}
\end{align}
Then we apply the Schauder estimates \cite[Theorem 10.2.1, Lemma
10.1.1]{Jost_PDE} together with (\ref{eq: uh phih diff exp est 1}) and (\ref{eq:
  mh diff exp est}) to estimate
\begin{align}
\| \phi_{h} - \phi \|_{C^{2,1/2}(B_{1}(x))} &\leq C \left( \| m_{h} - m - u_{h}^{2} + u^{2} \|_{C^{0,1/2}(B_{2}(x))} + \| \phi_{h} - \phi \|_{L^{2}(B_{2}(x))} \right),\nonumber \\
&\leq C \left( \| m_{h} - m \|_{C^{0,1/2}(B_{2}(x))} + \| u_{h}^{2} - u^{2} \|_{C^{0,1/2}(B_{2}(x))} + \| \phi_{h} - \phi \|_{L^{2}(B_{2}(x))} \right),\nonumber \\
&\leq C \left( \| (u_{h}+u )( u_{h} - u ) \|_{C^{0,1/2}(B_{2}(x))} + e^{-\gamma_{0}|x - Y_{k}|} \right),\nonumber \\
&\leq C \left( \| u_{h}+u \|_{C^{0,1/2}(B_{2}(x))} \| u_{h} - u  \|_{C^{0,1/2}(B_{2}(x))} + e^{-\gamma_{0}|x - Y_{k}|} \right). \label{eq: phi C2 est part 1}
\end{align}
Applying the Sobolev embedding $C^{0,1/2}(B_{2}(x)) \hookrightarrow H^{2}(B_{2}(x))$ and using (\ref{eq: inf uh est}), it follows that
\begin{align}
\| u_{h}+u \|_{C^{0,1/2}(B_{2}(x))} \leq C \| u_{h}+u \|_{H^{2}(B_{2}(x))} \leq C \left( \| u_{h} \|_{H^{2}_{\unif}(\R)} + \| u \|_{H^{2}_{\unif}(\R)} \right) \leq C. \label{eq: uh + u est}
\end{align}
Applying (\ref{eq: uh + u est}) and (\ref{eq: uh phih diff exp est 1}) to (\ref{eq: phi C2 est part 1}), we obtain the desired estimate (\ref{eq: phi h C2 exp est}): for any multi-index $\alpha$ satisfying $|\alpha| \leq 2$
\begin{align}
|\partial^{\alpha}(\phi_{h} - \phi)(x)| &\leq \| \phi_{h} - \phi \|_{W^{2,\infty}(B_{1}(x))} \leq \| \phi_{h} - \phi \|_{C^{2,1/2}(B_{1}(x))} \nonumber \\
&\leq C \left( \| u_{h}+u \|_{C^{0,1/2}(B_{2}(x))} \| u_{h} - u  \|_{C^{0,1/2}(B_{2}(x))} + e^{-\gamma_{0}|x - Y_{k}|} \right) \nonumber \\
&\leq C \left( \| u_{h} - u  \|_{W^{1,\infty}(B_{2}(x))} + e^{-\gamma_{0}|x - Y_{k}|} \right) 
\leq C e^{-\gamma_{0}|x - Y_{k}|}. \label{eq: phi h diff exp est upto 2}
\end{align}

We will show next that there exist $\ou \in H^{4}(\R), \ophi \in H^{2}(\R)$
such that $\smfrac{u_{h}- u}{h}, \smfrac{\phi_{h}- \phi}{h}$ converge to
$\ou, \ophi$ respectively, weakly in $H^{4}(\R)$ and $H^{2}(\R)$, strongly in
$H^{3}(B_{R}(0))$ and $H^{1}(B_{R}(0))$ for all $R > 0$ and pointwise almost
everywhere, along with their derivatives as $h \to 0$. 

First consider any decreasing sequence $h_{n} \to 0$, then there exists a
subsequence (still denoted by $h_{n}$) such that
$\smfrac{u_{h_{n}}- u}{h_{n}}, \smfrac{\phi_{h_{n}}- \phi}{h_{n}}$ converge to
$\ou \in H^{4}(\R), \ophi \in H^{2}(\R)$ respectively, weakly in $H^{4}(\R)$ and
$H^{2}(\R)$, strongly in $H^{3}(B_{R}(0))$ and $H^{1}(B_{R}(0))$ for all $R > 0$
and pointwise almost everywhere, along with their derivatives. In addition, it follows that $(\ou,\ophi)$ satisfy (\ref{eq: ou ophi exp est})--(\ref{eq: ou ophi L2 est}).

We now verify
that the limiting functions are independent of the choice of sequence. First,
observe that by passing to the limit as $h_{n} \to 0$ in the equations
\begin{align*}
&- \Delta \left( \frac{ u_{h_{n}} - u }{h_{n}}  \right) + \frac{5}{3} \frac{ u_{h_{n}}^{7/3} - u^{7/3} }{h_{n}} - \frac{ \phi_{h_{n}} u_{h_{n}} - \phi u }{h_{n}}  = 0, \\
&- \Delta \left( \frac{ \phi_{h_{n}} - \phi }{h_{n}} \right) = 4 \pi \left( \frac{ m_{h_{n}} - m }{h_{n}} - \frac{ u_{h_{n}}^{2} - u^{2} }{h_{n}}  \right),
\end{align*}
it follows that $(\ou,\ophi)$ solve the linearised TFW equations (\ref{eq: ou
  ophi eq pair}) pointwise, 
\begin{align*}
  \notag
- \Delta \ou &+ \left( \frac{35}{9} u^{4/3} - \phi \right) \ou - u \ophi = 0, \\
  \notag
- \Delta \ophi &= 4 \pi \left( \om - 2u \ou \right), \\
  \text{where} \quad
\om(x) &= \lim_{h_{n} \to 0} \frac{(m_{h_{n}}- m)(x)}{h_{n}} = - \nabla \eta (x - Y_{k}) \cdot V.
\end{align*}
Clearly $\om$ is independent of the sequence $h_{n}$. Applying \cite[Corollary 2.3]{Blanc_Uniqueness}, it follows that the $(\ou, \ophi)$ is the unique solution to the linearised system (\ref{eq: ou ophi eq pair}), hence is independent of the sequence $(h_{n})$. It then follows that $\smfrac{u_{h}- u}{h}, \smfrac{\phi_{h}- \phi}{h}$ converge to $\ou, \ophi$ as $h \to 0$ as stated above.
\end{proof}


We are now in a position to prove \Cref{Theorem - Forcing Est}. 

\begin{proof}[Proof of \Cref{Theorem - Forcing Est}]
We will repeatedly use the fact that there exists $C, \gamma > 0$ such that, for all $h \in [0,h_{0}]$, $p \in [1,2]$,
\begin{align}
\int_{\R} (1 + m_{h}(x) + |\nabla \phi_{h}(x)|)^{p} e^{- \gamma_{0}|x-Y_{k}|} e^{- \wt \gamma|x-Y_{j}|} \id x \leq C e^{-\gamma |Y_{j} - Y_{k}|}, \label{eq: 1 + m exp est}
\end{align}
which is a consequence of the uniform bounds on $m_h, \phi_h$ and of \eqref{eq:
  exp exp est}.


Further, we require that there exist $C, \wt \gamma > 0$ such that, for $j \in
\mathbb{N}$, $h \in (0,h_{0}]$, $x \in \R$, 
\begin{align}
\left| \frac{ \varphi_j(Y^{h};x) - \varphi_j(Y;x)  }{h}  \right| \leq C e^{-\wt \gamma |x - Y_{j}|} e^{-\wt \gamma |x - Y_{k}|}, \label{eq: varphi deriv pre-limit est}
\end{align}
which follows directly from (\ref{eq: partition condition 2}).


For $i = 1,2$ and $j \in \mathbb{N}$, consider the difference
\begin{align}
&\frac{E^{i}_{j}(Y^{h}) - E^{i}_{j}(Y)}{h} = \int_{\R} \frac{ \mathcal{E}_{i}(Y^{h};x) \varphi_{j}(Y^{h};x) - \mathcal{E}_{i}(Y;x) \varphi_{j}(Y;x)}{h} \id x \nonumber \\
&= \int_{\R} \left( \frac{ \mathcal{E}_{i}(Y^{h};x) - \mathcal{E}_{i}(Y;x) }{h} \right) \varphi_{j}(Y^{h};x) \id x  \nonumber \\
 & \quad
   + \int_{\R} \mathcal{E}_{i}(Y;x) \left( \frac{ \varphi_{j}(Y^{h};x) - \varphi_{j}(Y;x) }{h} \right) \id x \label{eq: E diff two part est 2}
\end{align}
We wish to show that the limit of (\ref{eq: E diff two part est 2})  exists as
$h \to 0$ to obtain
\begin{align}
\frac{\partial E^{i}_{j} }{\partial Y_{k}} &= \int_{\R} \frac{ \partial \mathcal{E}_{i} }{\partial Y_{k} }(Y;x) \varphi_{j}(Y;x) \id x + \int_{\R} \mathcal{E}_{i}(Y;x) \frac{ \partial \varphi_{j}}{\partial Y_{k}}(Y;x) \id x, \label{eq: Forcing limit eq}
\end{align}
where
\begin{align}
\frac{ \partial \mathcal{E}_{1} }{\partial Y_{k} }(Y;\cdot) &= 2 \nabla u \cdot \nabla \ou  + \frac{10}{3} u^{7/3} \ou + \frac{1}{2} \ophi ( m - u^{2} ) + \frac{1}{2} \phi ( \om - 2u \ou ), \label{eq: curly E 1 derivative eq} \\
\frac{ \partial \mathcal{E}_{2} }{\partial Y_{k} }(Y;\cdot) &= 2 \nabla u \cdot \nabla \ou  + \frac{10}{3} u^{7/3} \ou + \frac{1}{4\pi} \nabla \ophi \cdot \nabla \phi. \label{eq: curly E 2 derivative eq}
\end{align}

\emph{Case 1.} First consider the energy density
\begin{align}
\mathcal{E}_{1}(Y;x) = |\nabla u(x)|^{2} + u^{10/3}(x) + \frac{1}{2} \phi(x)(m-u^{2})(x). \label{eq: E1 density def}
\end{align}

To show (\ref{eq: curly E 1 derivative eq}), consider the difference
\begin{align}
\frac{\mathcal{E}_{1}(Y^{h};\cdot) - \mathcal{E}_{1}(Y;\cdot)}{h} &= \nabla(u_{h}+u) \cdot \nabla\left( \frac{ u_{h}- u}{h} \right) + \left( \frac{ u_{h}^{10/3}- u^{10/3}}{h} \right) \nonumber \\ & \quad + \frac{1}{2h} \left( \phi_{h}(m_{h}-u_{h}^{2}) - \frac{1}{2} \phi (m-u^{2}) \right) \nonumber \\
&= \nabla(u_{h}+u) \cdot \nabla\left( \frac{ u_{h}- u}{h} \right) + \left( \frac{ u_{h}^{10/3}- u^{10/3}}{h} \right) \nonumber \\
& \quad + \frac{1}{2} \left( \frac{\phi_{h} - \phi}{h} \right)(m -u^{2}) + \frac{1}{2} \phi_{h} \left( \frac{m_{h} - m -u_{h}^{2} + u^{2} }{h} \right). \label{eq: curly E 1 difference eq}
\end{align}

It follows from (\ref{eq: curly E 1 difference eq}) and pointwise convergence of $u_{h},\nabla u_{h},\phi_{h}$ to $u,\nabla u,\phi$ and \newline $\frac{u_{h}-u}{h},\nabla \left( \frac{u_{h}-u}{h} \right),\frac{\phi_{h}-\phi }{h},\frac{m_{h}-m}{h}$ to $\ou, \nabla \ou,\ophi,\om$ as $h \to 0$, that (\ref{eq: curly E 1 derivative eq}) holds
\begin{align*}
\lim_{h \to 0} \frac{\mathcal{E}_{1}(Y^{h};\cdot) - \mathcal{E}_{1}(Y;\cdot)}{h} = 2 \nabla u \cdot \nabla \ou  + \frac{10}{3} u^{7/3} \ou + \frac{1}{2} \ophi ( m - u^{2} ) + \frac{1}{2} \phi ( \om - 2u \ou ) = \frac{ \partial \mathcal{E}_{1} }{\partial Y_{k} } .
\end{align*}

Applying (\ref{eq: uh - u0 exp est}) to (\ref{eq: curly E 1 difference eq}) yields
\begin{align}
\left| \mathcal{E}_{1}(Y^{h};x) - \mathcal{E}_{1}(Y;x) \right| &\leq C \left( |(u_{h} - u)(x)| + |\nabla(u_{h} - u)(x)| + |(m_{h} - m)(x)| \right) \nonumber \\ 
& \qquad + C (1 + m(x)) |(\phi_{h}-\phi)(x)| \nonumber \\ &\leq C h (1 + m(x)) e^{- \gamma_{0} |x - Y_{k}|}. \label{eq: curly E derivative h est}
\end{align}
Combining (\ref{eq: curly E derivative h est}) and (\ref{eq: partition condition
  1}), we deduce
\begin{align}
\left| \frac{ \mathcal{E}_{1}(Y^{h};x) - \mathcal{E}_{1}(Y;x) }{h} \varphi_{j}(Y;x) \right| &\leq C (1 + m(x)) e^{- \gamma_{0}|x-Y_{k}|} e^{- \wt \gamma|x-Y_{j}|}, \label{eq: E curly 1 est}
\end{align}
hence by (\ref{eq: 1 + m exp est}) and the Dominated Convergence Theorem,
\begin{align}
\int_{\R} \frac{ \partial \mathcal{E}_{1} }{\partial Y_{k} }(Y;x) \varphi_{j}(Y;x) \id x = \lim_{h \to 0}\int_{\R} \left( \frac{ \mathcal{E}_{1}(Y^{h};x) - \mathcal{E}_{1}(Y;x) }{h} \right) \varphi_{j}(Y;x) \id x. \label{eq: curly E 1 limit}
\end{align}
It follows from (\ref{eq: E curly 1 est}) and (\ref{eq: 1 + m exp est}) that
\begin{align}
\left| \int_{\R} \frac{ \partial \mathcal{E}_{1} }{\partial Y_{k} }(Y;x) \varphi_{j}(Y;x) \id x \right| \leq C \int_{\R} (1+m(x)) e^{- \gamma_{0}|x-Y_{k}|} e^{- \wt \gamma|x-Y_{j}|} \id x \leq C e^{-\gamma |Y_{j} - Y_{k}|}. \label{eq: E curly 1 part 1 est}
\end{align}

It remains to show that (\ref{eq: E diff two part est 2}) converges using (\ref{eq: 1 + m exp est}) and (\ref{eq: varphi deriv pre-limit est}). As $\varphi_{j}(Y;x)$ is differentiable with respect to $Y_{k}$, for all $x \in \R$
\begin{align*}
\mathcal{E}_{1}(Y;x) \, \frac{ \partial \varphi_{j} }{\partial Y_{k}}(Y;x) = \lim_{h \to 0} \mathcal{E}_{1}(Y;x) \left( \frac{ \varphi_{j}(Y^{h};x) - \varphi_{j}(Y;x) }{h} \right),
\end{align*}
and combining (\ref{eq: E1 density def}) with (\ref{eq: varphi deriv pre-limit est}) implies
\begin{align*}
\Big| \mathcal{E}_{1}(Y;x) \left(\frac{ \varphi_{j}(Y^{h};x) - \varphi_{j}(Y;x) }{h} \right) \Big| &\leq C  ( 1 + m(x)) e^{- \gamma_{0}|x-Y_{k}|} e^{- \wt\gamma|x - Y_{j}|},
\end{align*}
hence by (\ref{eq: 1 + m exp est}) and the Dominated Convergence Theorem,
\begin{align}
  &\int_{\R} \mathcal{E}_{1}(Y;x) \frac{ \partial \varphi_{j} }{\partial Y_{k}}(Y;x) \id x = \lim_{h \to 0} \int_{\R} \mathcal{E}_{1}(Y;x) \, 
   \left( \frac{ \varphi_{j}(Y^{h};x) - \varphi_{j}(Y;x) }{h} \right)
 \id x,  \notag \\
  & \text{and} \qquad 
\left | \int_{\R} \mathcal{E}_{1}(Y;x) \frac{ \partial \varphi_{j} }{\partial Y_{k}}(Y;x) \id x \right | \leq C e^{-\gamma |Y_{j} - Y_{k}|}. \label{eq: E curly 1 part 2 est}
\end{align}
Combining (\ref{eq: E curly 1 part 1 est}) and (\ref{eq: E curly 1 part 2 est})
yields the desired estimate (\ref{eq: forcing est}).

The second case, using $\mathcal{E}_2$ instead of $\mathcal{E}_1$, is analogous.
\end{proof}

\begin{proof}[Proof of \textnormal{(\ref{eq: Forcing 1 2 equal est})}]
We will use
\begin{align}
\sum_{j \in \mathbb{N}} e^{-\gamma |Y_{j} - Y_{k}|} < \infty, \label{eq: sum exp Yj finite}
\end{align}
which is a consequence of (H1) and that $Y \in \mathcal{Y}_{L^{2}}(M,\omega)$.
Then for $i \in \{1,2\}$
\begin{align*}
\sum_{j \in \mathbb{N}} \left| \frac{\partial E^{i}_{j} }{\partial Y_{k}} \right| &\leq \sum_{j \in \mathbb{N}} \left| \int_{\R} \frac{ \partial \mathcal{E}_{i} }{\partial Y_{k} }(Y;x) \varphi_{j}(Y;x) \id x \right| + \sum_{j \in \mathbb{N}} \left| \int_{\R} \mathcal{E}_{i}(Y;x) \frac{ \partial \varphi_{j}}{\partial Y_{k}}(Y;x) \id x \right| \\ &\leq C \sum_{j \in \mathbb{N}} e^{-\gamma |Y_{j} - Y_{k}|} < \infty,
\end{align*}
hence by the Monotone Convergence Theorem, the sum is well-defined
\begin{align*}
\sum_{j \in \mathbb{N}} \frac{\partial E^{1}_{j} }{\partial Y_{k}} &= \int_{\R} \frac{ \partial \mathcal{E}_{i} }{\partial Y_{k} }(Y;x) \left( \sum_{j \in \mathbb{N}} \varphi_{j}(Y;x) \right) \id x + \int_{\R} \mathcal{E}_{i}(Y;x) \left( \sum_{j \in \mathbb{N}} \frac{ \partial \varphi_{j}}{\partial Y_{k}}(Y;x) \right) \id x.
\end{align*}
As $(\varphi_{j})_{j \in \mathbb{N}}$ satisfies (\ref{eq: sum 1 eq}) for all $h \in [0,h_{0}]$, it follows that
\begin{align*}
\sum_{j \in \mathbb{N}} \frac{\partial \varphi_{j} }{\partial Y_{k}}(Y; x) = 0,
\end{align*}
and consequently, 
\begin{align*}
\sum_{j \in \mathbb{N}} \frac{\partial E^{i}_{j} }{\partial Y_{k}} &= \int_{\R} \frac{ \partial \mathcal{E}_{i} }{\partial Y_{k} }(Y;x) \id x.
\end{align*}
Now consider the difference of (\ref{eq: curly E 1 derivative eq})--(\ref{eq: curly E 2 derivative eq})
 \begin{align}
 \left(\frac{\partial \mathcal{E}_{1}}{\partial Y_{k}} - \frac{\partial \mathcal{E}_{2}}{\partial Y_{k}} \right)(Y;\cdot) &= \frac{1}{2} \ophi (m - u^{2}) + \frac{1}{2} \phi (\om - 2u \ou) - \frac{1}{4\pi} \nabla \ophi \cdot \nabla \phi, \label{eq: E 1 2 deriv difference eq}
 \end{align}
and applying integration by parts yields
 \begin{align*}
 \int_{\R} \left(\frac{\partial \mathcal{E}_{1}}{\partial Y_{k}} - \frac{\partial \mathcal{E}_{2}}{\partial Y_{k}} \right)(Y;x) \id x &= \int_{\R} \left( \frac{1}{2} \ophi (m - u^{2}) + \frac{1}{2} \phi (\om - 2u \ou) - \frac{1}{4\pi} \nabla \ophi \cdot \nabla \phi \right)\\ &= \frac{1}{8\pi} \int_{\R} \left( \ophi (-\Delta \phi) + \phi (-\Delta \ophi) - 2 \nabla \ophi \cdot \nabla \phi \right) \\
 &= \frac{1}{8\pi} \int_{\R} \left( 2\nabla \ophi \cdot\nabla \phi - 2 \nabla \ophi \cdot \nabla \phi \right) = 0.
 \end{align*}
 In addition, since
 \begin{align*}
 \frac{1}{4\pi} \int_{\R} \nabla \phi \cdot \nabla \ophi = \frac{1}{4\pi} \int_{\R} \phi (-\Delta \ophi) = \int_{\R} \phi (\om - 2u \ou)
 \end{align*}
 and since $u$ solves (\ref{eq: u inf eq}),
 $- \Delta u + \frac{5}{3} u^{7/3} - \phi u = 0,$
 the desired result (\ref{eq: Forcing 1 2 equal est}) holds:
 \begin{equation*}
 \int_{\R} \frac{\partial \mathcal{E}_{2}}{\partial Y_{k}}(Y;x) \id x = 2 \int_{R} \left( \nabla u \cdot \nabla \ou + \frac{5}{3} u^{7/3} \ou - \phi u \ou \right) + \int_{\R} \phi \, \om = \int_{\R} \phi \, \om. \qedhere
 \end{equation*}
\end{proof}

Finally, we establish (\ref{eq: TFW energy 1})--(\ref{eq: TFW energy 2}): if the
partition functions $\varphi_j$ are invariant under permutations and isometries,
then so are the site energies.

\begin{lemma}
\label{Lemma - Site Energy Invariance}
If the partition $(\varphi_j)_{j \in \mathbb{N}}$ is permutation and  isometry
invariant \textnormal{(\ref{eq: Permutation Invariance def})--(\ref{eq: Isometry
    Invariance def})}, then for $i =1,2$, for any bijection $P: \mathbb{N} \to \mathbb{N}$, isometry $A: \R \to \R$, $j \in \mathbb{N}$ and $Y \in \mathcal{Y}_{L^{2}}(M,\omega)$
\begin{align}
E^{i}_{j}(Y \circ P) = E^{i}_{j}(Y), \label{eq:SiteEnergyPermInv} \\
E^{i}_{j}(AY) = E^{i}_{j}(Y). \label{eq:SiteEnergyIsoInv}
\end{align}
\end{lemma}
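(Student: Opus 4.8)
The plan is to track how each of the three ingredients of the site energy \eqref{E ell h def}---the ground state $(u,\phi)$, the energy density $\mathcal{E}_i(Y;\cdot)$, and the partition function $\varphi_j(Y;\cdot)$---is affected by each symmetry operation, and then to check that the transformations cancel inside the defining integral.

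\emph{Permutation invariance \eqref{eq:SiteEnergyPermInv}.} First I would observe that $m_{Y\circ P}=m_Y$: since $P$ is a bijection of $\mathbb{N}$, reindexing the sum in \eqref{eq: m Y ass def} gives $\sum_{j}\eta(\cdot-Y_{Pj})=\sum_{l}\eta(\cdot-Y_l)$. In particular $m_{Y\circ P}\in\mathcal{M}_{L^{2}}(M,\omega)$, so $Y\circ P\in\mathcal{Y}_{L^{2}}(M,\omega)$ and the corresponding ground state is defined and unique by \Cref{Theorem - C/LB/L Main Result}; and since that ground state depends on the configuration only through its nuclear density, both $(u,\phi)$ and hence $\mathcal{E}_i(Y;\cdot)$ (see \eqref{eq: TFW energy 1}--\eqref{eq: TFW energy 2}) are unchanged by the relabelling, i.e.\ $\mathcal{E}_i(Y\circ P;\cdot)=\mathcal{E}_i(Y;\cdot)$. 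Inserting this identity and the partition permutation invariance \eqref{eq: Permutation Invariance def} into the definition \eqref{E ell h def} of $E^i_j$ then yields \eqref{eq:SiteEnergyPermInv}.

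\emph{Isometry invariance \eqref{eq:SiteEnergyIsoInv}.} Write $A(z)=Qz+b$ with $Q$ orthogonal. First I would verify $m_{AY}=m_Y\circ A^{-1}$: because $\eta$ is radially symmetric and $A^{-1}$ is again an isometry, $\eta(x-AY_j)=\eta(A^{-1}x-Y_j)$ for each $j$, and summing over $j$ gives the identity; this also shows $\|m_{AY}\|_{L^{2}_{\unif}(\R)}=\|m_Y\|_{L^{2}_{\unif}(\R)}$ and that the lower bound in \eqref{eq: M L2 def} is preserved, so $AY\in\mathcal{Y}_{L^{2}}(M,\omega)$. Next, since the Laplacian commutes with isometries, $(u\circ A^{-1},\phi\circ A^{-1})$ solves \eqref{eq: u phi eq pair} with nuclear density $m_{AY}$, is nonnegative and inherits the regularity of $(u,\phi)$, so by uniqueness the ground state of $m_{AY}$ equals $(u\circ A^{-1},\phi\circ A^{-1})$. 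From this I would deduce the covariance $\mathcal{E}_i(AY;x)=\mathcal{E}_i(Y;A^{-1}x)$ of the energy densities: the algebraic terms in \eqref{eq: TFW energy 1}--\eqref{eq: TFW energy 2} are immediate, while for the Dirichlet-type terms one uses, for $f\in\{u,\phi\}$, that $\nabla(f\circ A^{-1})(x)=Q\,(\nabla f)(A^{-1}x)$ together with the orthogonality of $Q$, so $|\nabla(f\circ A^{-1})(x)|^{2}=|(\nabla f)(A^{-1}x)|^{2}$. Finally, combining this covariance with the partition isometry invariance \eqref{eq: Isometry Invariance def} and the change of variables $x\mapsto A^{-1}x$ (whose Jacobian has absolute value $1$) in \eqref{E ell h def} gives \eqref{eq:SiteEnergyIsoInv}.

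The lemma is largely bookkeeping, and the only step I expect to require genuine care is the covariance $\mathcal{E}_i(AY;x)=\mathcal{E}_i(Y;A^{-1}x)$: it is essential here that the linear part of $A$ is orthogonal, so that the gradient terms $|\nabla u|^{2}$ and $|\nabla\phi|^{2}$ are truly invariant rather than merely picking up a Jacobian factor; everything else reduces to reindexing of sums and a unit-Jacobian change of variables.
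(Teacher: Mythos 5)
Your proof is correct and takes essentially the same approach as the paper: for both invariances you establish that the nuclear density $m_Y$ transforms covariantly, invoke uniqueness of the TFW ground state to deduce covariance of $(u,\phi)$ and hence of $\mathcal{E}_i$, and then combine this with the partition's invariance and a unit-Jacobian change of variables in \eqref{E ell h def}. The only difference is cosmetic: the paper splits the isometry case into a translation and a rotation and handles them separately, whereas you treat a general affine isometry $A(z)=Qz+b$ in one step (and correctly use $|\det Q|=1$, avoiding the paper's minor slip of writing $\det R=1$ for $R\in{\rm O}(3)$).
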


\begin{proof}[Proof of \Cref{Lemma - Site Energy Invariance}]
Let $Y \in \mathcal{Y}_{L^{2}}(M,\omega)$ and $m = m_{Y}$, then as $P: \mathbb{N} \to \mathbb{N}$ is a bijection,
\begin{align*}
m_{Y \circ P}(x) = \sum_{j\in \mathbb{N}} \eta(x - Y_{P_{j}}) = \sum_{j\in \mathbb{N}} \eta(x - Y_{j}) = m_{Y}(x).
\end{align*} 
Since (\ref{eq: u phi eq pair}) has a unique solution,
$(u_{Y},\phi_{Y}) = (u_{Y \circ P}, \phi_{Y \circ P})$. Consequently, the energy
densities agree, $\mathcal{E}_{i}(Y\circ P;\cdot) =
\mathcal{E}_{i}(Y;\cdot)$.
Together with (\ref{eq: Permutation Invariance def}) this implies
(\ref{eq:SiteEnergyPermInv}).

We now show isometry invariance (\ref{eq:SiteEnergyIsoInv}). First consider a
translation $A_{1}(x) = x + c$, for $c \in \R$, then 
\begin{align*}
m_{A_{1}Y}(x) = \sum_{j\in \mathbb{N}} \eta(x - Y_{j} - c) = m_{Y}(x - c) = m_{Y}(A_{1}^{-1}(x)).
\end{align*}
Then, by the uniqueness of the TFW equations, it follows that $(u_{A_{1}Y},\phi_{A_{1}Y})(\cdot) = (u_{Y},\phi_{Y})(\cdot - c)$, so $\mathcal{E}_{i}(A_{1}Y;\cdot) = \mathcal{E}_{i}(Y;\cdot - c)$ and thus
\begin{align*}
E^{i}_{j}(A_{1}Y) &= \int_{\R} \mathcal{E}_{i}(A_{1}Y;x) \varphi_{j}(A_{1}Y;x) \id x = \int_{\R} \mathcal{E}_{i}(Y;x-c) \varphi_{j}(Y;x-c) \id x \\ &= \int_{\R} \mathcal{E}_{i}(Y;z) \varphi_{j}(Y;z) \id z = E^{i}_{j}(Y).
\end{align*}
Similarly, for a rotation $A_{2}(x) = Rx$, $R \in {\rm O}(3)$, since we assumed
that $\eta$ is radially symmetric,
\begin{align}
m_{A_{2}Y}(x) = \sum_{j\in \mathbb{N}} \eta(x - RY_{j}) = \sum_{j\in \mathbb{N}} \eta(R(R^{T}x - Y_{j})) = \sum_{j\in \mathbb{N}} \eta(R^{T}x - Y_{j}) =  m_{Y}(R^{T}x). \label{eq: m Y rot eq}
\end{align}
As $(u_{Y},\phi_{Y})$ solve (\ref{eq: u phi eq pair})
\begin{align*}
- \Delta u_{Y} &+ \frac{5}{3} u^{7/3}_{Y} - \phi_{Y} u_{Y} = 0,  \\
- \Delta \phi_{Y} &= 4\pi (m_{Y} - u^{2}_{Y}),
\end{align*}
then by (\ref{eq: m Y rot eq}) and as the Laplacian is invariant under
rotations, it follows that \newline $(u,\phi) = (u_{Y}, \phi_{Y}) \circ A_2^{-2}$ solves
\begin{align*}
- \Delta u &+ \frac{5}{3} u^{7/3} - \phi u = 0,  \\
- \Delta \phi &= 4\pi (m_{Y} \circ R^{T} - u^{2}) = 4\pi (m_{A_{2}Y} - u^{2}),
\end{align*}
hence the uniqueness of (\ref{eq: u phi eq pair}) implies $(u_{A_{2}Y},\phi_{A_{2}Y}) = (u_{Y}, \phi_{Y}) \circ A_2^{-1}$. It follows that $E_{i}(A_{2}Y;\cdot) = E_{i}(Y;R^{T}\cdot)$, hence as $\det(R) = 1$, a change of variables shows
\begin{align*}
E^{i}_{j}(A_{2}Y) &= \int_{\R} \mathcal{E}_{i}(A_{2}Y;x) \varphi_{j}(A_{2}Y;x) \id x = \int_{\R} \mathcal{E}_{i}(Y;R^{T}x) \varphi_{j}(Y;R^{T}x) \id x \\ &= \int_{\R} \mathcal{E}_{i}(Y;z) \varphi_{j}(Y;z) |\det(R)| \id z = \int_{\R} \mathcal{E}_{i}(Y;z) \varphi_{j}(Y;z) \id z = E^{i}_{j}(Y).
\end{align*}
As the site energies are invariant under both translations and rotations, they are invariant under all isometries of $\R$.
\end{proof}

\bibliographystyle{plain}
\bibliography{sample}

\end{document}